\renewcommand \thepart{}
\renewcommand \partname{}
\DeclareMathOperator{\diag}{diag}
\DeclareMathOperator*{\argmin}{arg\,min}
\newcommand{\Var}{\mathrm{Var}}
\newcommand{\Cov}{\mathrm{Cov}}
\newtheoremstyle{newthmstyle}%
  {\topsep}
{\topsep}
{\itshape}
{0pt}
{\bfseries}
{. }
{0pt}
{}
\theoremstyle{newthmstyle}
\newtheorem{lemma}{{Lemma}}
\newtheorem{definition}{Definition}
\newtheorem{prop}{{Proposition}}
\theoremstyle{definition}
\newtheorem*{nonumexample}{{Example}}
\newtheorem{appprop}{{Proposition}}
\xpatchcmd{\sv@part}{\huge \bfseries \partname \nobreakspace \thepart \par \vskip 20\p@ \fi \Huge \bfseries #2}{\fi \Huge \bfseries \thepart. #2}{}{}
\renewcommand \thepart{}
\renewcommand \partname{}
\begin{document}

\doparttoc 
\faketableofcontents 


\title{\vspace{-0.0cm} Inventories, Demand Shocks Propagation, and Amplification in Supply Chains\footnote{This paper supersedes a previously circulated version under the title ``Global Value Chains and the Business Cycle''. I am especially grateful to Pol Antr{\`a}s, Vasco Carvalho,  Dalila Figueiredo, Nir Jaimovich, Philipp Kircher, Ramon Marimon, and  Alireza Tahbaz-Salehi for many fruitful discussions on this paper. I also
thank Daron Acemoglu, George Alessandria (discussant),  David Baqaee, Giacomo Calzolari, Maria Jose Carreras-Valle, Russell Cooper, Rafael Dix-Carneiro, Juan Dolado, David Dorn, Matt Elliott, Matteo Escud\'e, David H\'emous, Damian Kozbur, Andrei Levchenko, Michele Mancini, Isabelle M\'ejean, Konuray Mutluer, Ralph Ossa, Nitya Pandalai-Nayar, Mathieu Parenti (discussant), Florian Scheuer, Armin Schmutzler and Mathieu Taschereau-Dumouchel for their feedback. This paper also benefited from comments in the seminars at EUI, Collegio Carlo Alberto, Paris School of Economics, CERGE-EI Prague, CSEF Naples, University of Zurich, Queen Mary University London, HEC Montreal, ECB Research Department, EIEF, World Trade Institute and presentations at ASSET,  RIEF AMSE, WIEM, the EEA meeting, ETSG, Econometric Society European Winter Meeting, GVC Conference of Bank of Italy, Large Firms Conference at Bank of France, T2M King's College, the GEN Workshop and ASSA. 
 Lorenzo Arc{\`a}, Elie Gerschel, and Lorenzo Pesaresi provided excellent research assistance. All the remaining errors are mine.}}
\author{\color{blue}\large Alessandro Ferrari\footnote{{alessandro.ferrari@upf.edu}}\\
\small{Universitat Pompeu Fabra, BSE \& CEPR}}
\date{\small{First Version: February 2019\\ This version: March 2025} \\\vspace{12pt}}   

\maketitle
\vspace{-30pt}
\begin{abstract} 
I study the role of industries' position in supply chains in shaping the transmission of final demand shocks. 
First, I use a novel shift-share design leveraging destination-specific final demand shocks and a new measure of destination exposure accounting for direct and indirect linkages. I  find that demand shocks amplify significantly as they propagate upstream, with upstream industries experiencing output elasticities up to three times larger than final good producers, consistent with the \textit{bullwhip effect}. 
To rationalize these empirical results, I develop a tractable production network model with inventories and study how the properties of the network and the cyclicality of inventories interact to determine whether final demand shocks amplify or dissipate upstream. I test the mechanism by directly estimating the model-implied relationship between output growth and demand shocks, mediated by network position and inventories. I find that the presence of inventories increases output elasticities by 18\% on average, highlighting the macroeconomic significance of this channel. Finally, I use the model to quantitatively study the effects of long-run trends of lengthening supply chains and rising inventories on the volatility of the economy.


\end{abstract}

\begin{raggedright} \small{Keywords: production networks, supply chains, inventories, shock amplification, bullwhip effect.}\\
\small{JEL Codes: C67, E23, E32, F14, F44, L14, L16}\\
\end{raggedright}
\newpage

\onehalfspacing

\section{Introduction}
The globalization of production has fundamentally reshaped economic activity, with goods and services now traversing complex supply chains that span multiple countries before reaching final consumers.
The rising interconnectedness raises critical questions about how economic shocks travel through these networks, thereby affecting economic volatility. In more complex and globalized value chains, shocks can be absorbed at multiple nodes and diversified away, thereby reducing overall volatility. On the other hand, shocks can travel longer distances and snowball across firms and countries. These questions recently gained salience as governments try to understand the causes and consequences of the recent supply chain crisis and whether a policy response is necessary. The policy discussion revolved around reshoring, multi-sourcing, and inventory management to shorten supply chains and reduce the propagation of shocks.\footnote{In June 2021, the Biden-Harris Administration instituted the \textit{Supply Chain Disruption Task Force} ``to provide a whole-of-government response to address near-term supply chain challenges to the economic recovery'', see \cite{whitehouse}. For the European context, see \cite{raza2021post}, a study commissioned by the International Trade Committee of the European Parliament considering reshoring options and the European Parliament resolution calling for ``smart reshoring [to] relocate industrial production in sectors of strategic importance for the Union'' and the creation of a program that ``helps make our supply chains more resilient and less dependent by reshoring, diversifying and strengthening them'', see \cite{euparl}.}

In this paper, I address these questions by studying how demand shocks travel in production networks. I investigate two fundamental forces. First, as supply chains become more complex, firms can become exposed to a broader set of destinations, allowing them to partially diversify demand risk away. As a consequence, for a given set of shocks, more complex chains can be less volatile. Second, the presence of inventories, whose adjustment can absorb or amplify shocks as they travel along the supply chain. The role of inventories as shock amplifiers or absorbers is of particular interest, as it is often discussed as a strategy to avoid prolonged disruptions in supply chains. 


The empirical motivation for this paper lies with the three trends plotted in Figure \ref{fig:motiv}. First, the length of supply chains has increased over time. Figure \ref{fig:motiv_U} shows the average number of steps in the production process of goods and services over time. This measure has increased by 27\% between 2000 and 2014. Second, as shown in Figure \ref{fig:motiv_HHI}, the spatial concentration of demand has declined. The Herfindahl-Hirschman Index of destination shares has decreased by 14\% in the same period. Therefore, these two novel observations highlight that, on the one hand, supply chains have gotten longer, potentially giving rise to further propagation and amplification of shocks. On the other hand, they have become more diversified, allowing firms to partially hedge demand risk across space. Finally, in Figure \ref{fig:motiv_inv}, I report the finding of \cite{carrerasvalle}, who shows that inventory-to-sales ratios have increased in the last decades, reverting a secular decline. As the importance of inventories in the production process increases, their interaction with longer and more complex supply chains can make the economy experience larger or smaller fluctuations. 
\begin{figure}
     \centering
     \begin{subfigure}[b]{0.325\textwidth}
         \centering
         \includegraphics[width=\textwidth]{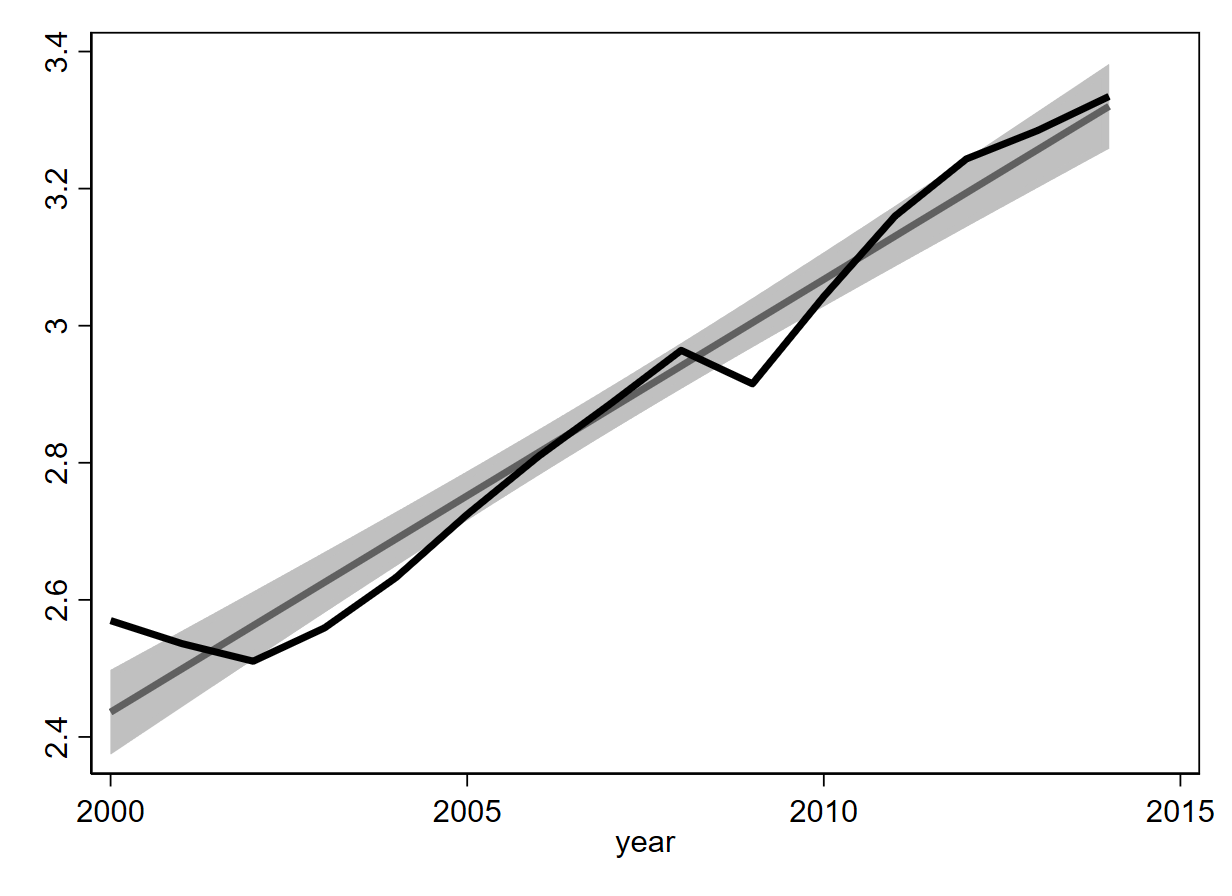}
         \caption{Supply Chains Length}
         \label{fig:motiv_U}
     \end{subfigure}
     \hfill
     \begin{subfigure}[b]{0.325\textwidth}
         \centering
         \includegraphics[width=\textwidth]{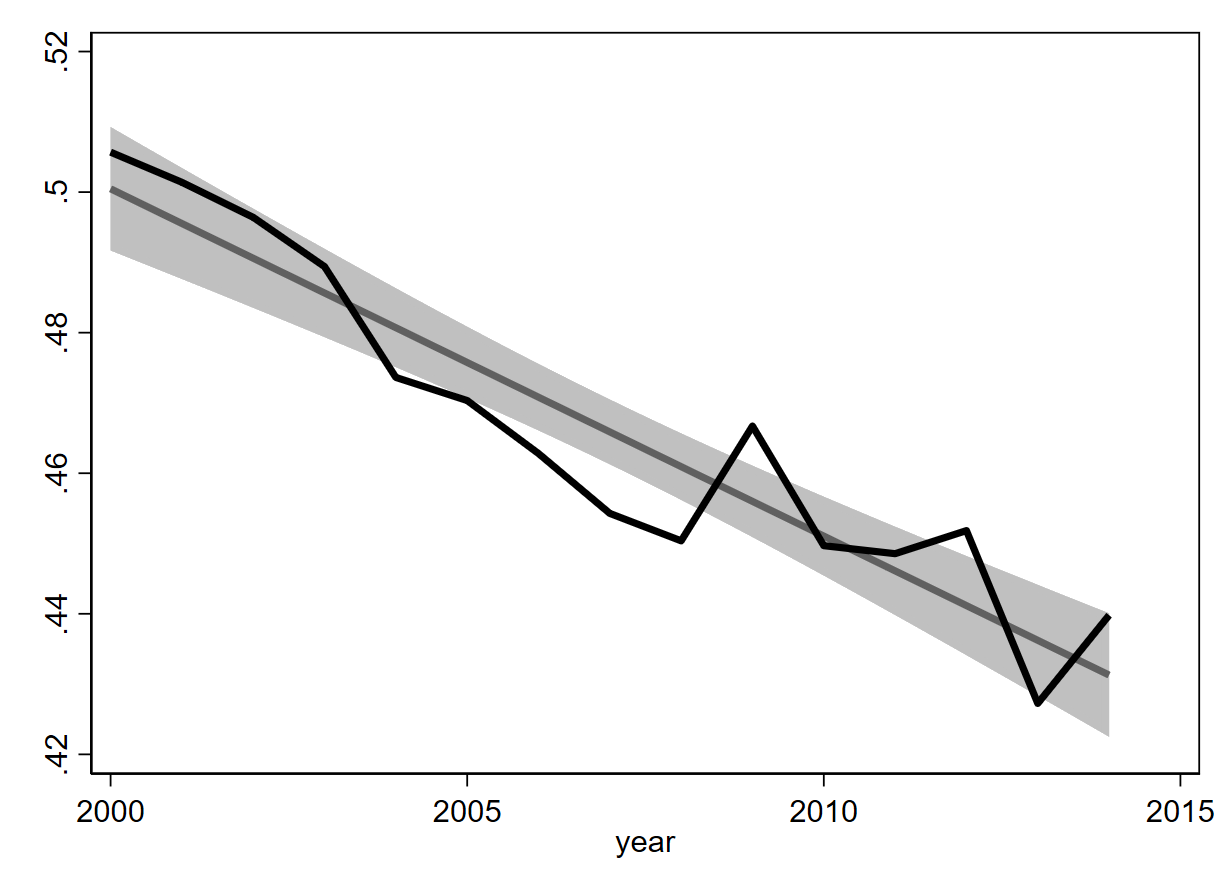}
         \caption{Demand Concentration}
         \label{fig:motiv_HHI}
     \end{subfigure}
     \hfill
     \begin{subfigure}[b]{0.325\textwidth}
         \centering
         \includegraphics[width=\textwidth]{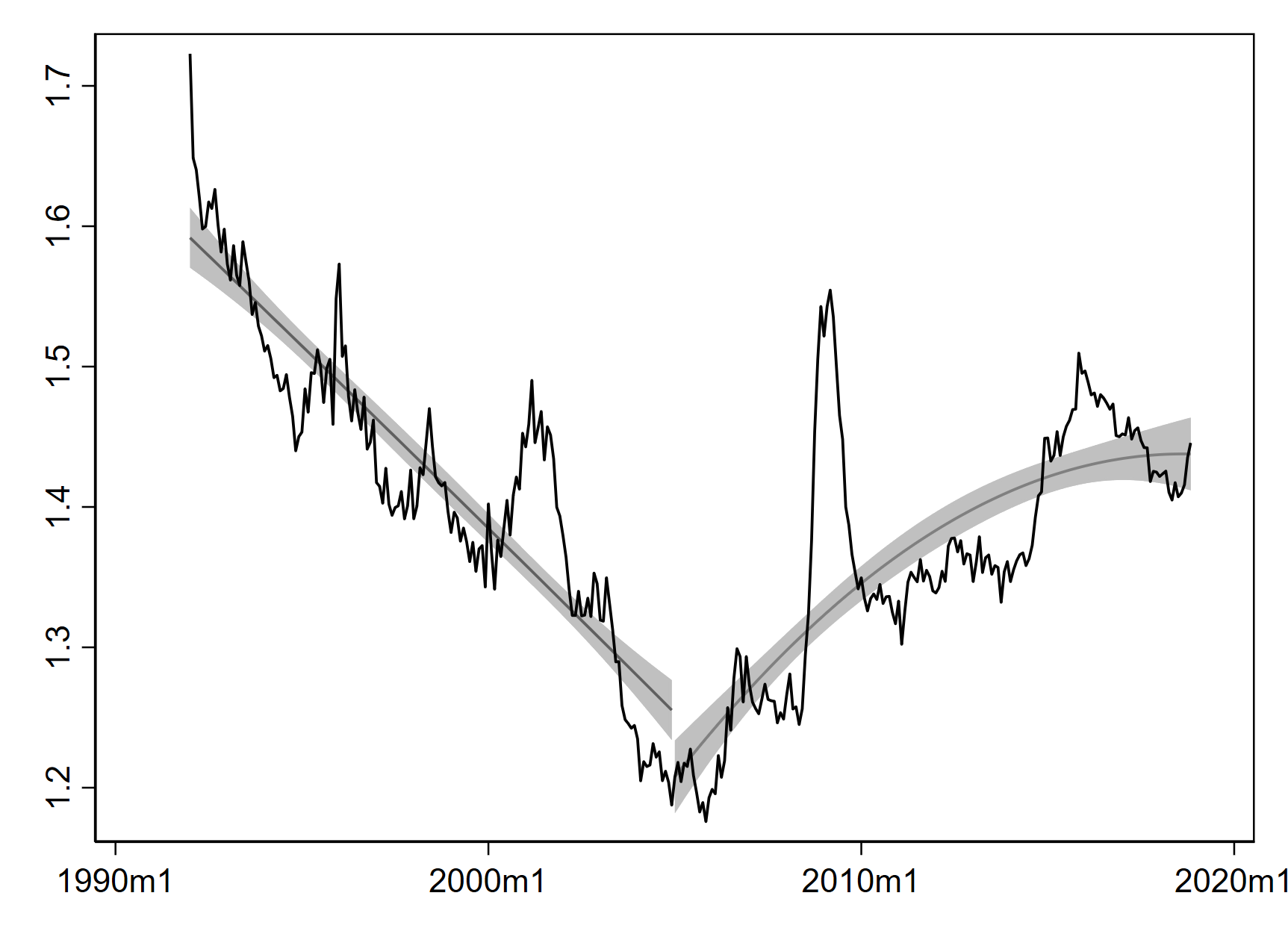}
         \caption{Inventory-to-Sales Ratio}
         \label{fig:motiv_inv}
     \end{subfigure}
        \caption{Motivating Evidence}
        \label{fig:motiv}
        \vspace{5pt}
\noindent\justifying
 \footnotesize Note: Panel a) shows the average Upstreamness over time across all industries in the WIOD data. A formal definition is provided in Section \ref{methoddatasec}. Panel b) shows the Herfindahl-Hirschman Index of destination sales shares in WIOD. This computation, formalized in Section \ref{methoddatasec}, accounts for both direct and indirect exposure to each destination country. Panel c) reports the same statistic from the Census data from Jan-1992 to Dec-2018. Both graphs include non-linear trends before and after 2005. I estimate separate trends as \cite{carrerasvalle} suggests that 2005 is when the trend reversal occurs. I provide extensive checks and alternative specifications of these observations in Appendix \ref{motivatingevidence}.
\end{figure}

Motivated by these empirical observations, in Section \ref{results}, I study the propagation of shocks along supply chains in the presence of inventories. 
I use a novel shift-share design based on destination-specific foreign demand shifters and a new measure of destination exposure. Importantly, I use the full network structure to account for direct and indirect linkages and exposure to each foreign country. The shift-share structure allows me to causally estimate the elasticity of output to final demand. With this design, I estimate a model in which I allow this elasticity to vary by upstreamness, which measures the network distance between an industry and final consumers. The first empirical result is that industries at different points of the value chains have significantly different responses. Quantitatively, I find that industries close to consumers have an elasticity of 1, while industries very far from consumption (5 or more steps of production away) have an elasticity larger than 2. To exemplify this finding, this implies that a 1\% increase in the consumers' demand for cars induces a 1.2\% increase in auto production, 1.4\% in tire production, 1.6\% in rubber production, and a 2.1\% increase in latex and oil extraction.
This result represents the first causal evidence of the \textit{bullwhip effect}, often discussed in the operations literature \citep{forrester1961industrial}. This theory posits that, when inventories are procylically adjusted, reactions to shocks increase upstream. I test this hypothesis and find that not only are inventories procylically adjusted, but their elasticity increases upstream. Quantitatively, I find an up to 8-fold increase going from one to five production steps away from consumption. These empirical results imply that inventories drive the upstream amplification of shocks and that this channel is economically important. 

Contrary to my finding on the \textit{bullwhip effect},  current models of shock propagation in production networks typically predict decreasing responsiveness as shocks travel in the network. To address this, I study a tractable production network model where firms hold inventories. I build this framework with three goals. First, it allows me to shed light on the key features of the network and of the inventory problem that lead to upstream amplification vs. dissipation of final demand shocks. Second, the model provides an estimating equation, which allows me to test the key mechanism directly. Third, through counterfactuals, I can separate the effects of changes in the structure of supply chains and rising inventories on volatility. 

I build my theoretical analysis in three steps. First, I study a vertical supply chain with a general inventory policy to isolate the role of inventories. Next, I specify an inventory problem and consider a general network with only one source of fluctuations to highlight the role of the network structure. Finally, I allow for many shocks to study the combination of inventories, network structure and diversification forces. I describe each step in more detail.

I start the analysis by establishing a general result in vertically integrated supply chains: procyclical inventory adjustment is a sufficient condition for upstream amplification of final demand shocks. Intuitively, firms respond to changes in demand more than 1-for-1 as they increase production to meet higher demand and update their inventory stocks in the same direction. In a supply chain, this implies that firms \textit{pass along} magnified fluctuations. Second, when more upstream firms have more procyclical inventory policies, as in the data, amplification increases in upstreamness. Taken together, these imply that procyclical inventories can rationalize the increasing and convex relation between elasticities and upstreamness I find in the data. While this result is general in terms of the inventory policy implemented by firms, it is derived in the context of a specific network structure, and it is hard to establish its quantitative implications. In particular, such a network cannot account for the diversification forces that might allow shocks to dissipate upstream. To make further headway in the analysis for general networks, I extend the model by studying a general production network structure and specifying the inventory problem.

I consider an economy populated by firms producing in a general production network. In each sector, there are two types of firms that share the same technology. Type $C$ firms operate competitively. Type $I$ firms are subject to an inventory management problem and have market power in their output markets. The two types of firms compete within each industry, producing perfect substitute goods. $I$ firms optimally target future sales when deciding how much inventory stock to hold.  

To isolate the role of inventories and their interplay with the network structure, I start with a special case in which there is only one foreign destination country whose demand fluctuates. In this context, the first theoretical result is that output growth responds more than one-for-one to final demand shocks and that this amplification scales with an inventory-weighted measure of upstreamness $\mathcal{U}$. This result stems from the combination of two properties of the model: i) when there is a single consumption destination, no diversification forces are in action, and ii) when firms have an inventory target rule, they amplify shocks upstream. Importantly, the presence of inventories implies that the network structure matters even to a first order.  The second result characterizes the behavior of output growth volatility. I show that an industry is more volatile than another if and only if it is more upstream, in the $\mathcal{U}$ sense. This measure of upstreamness has intuitive comparative statics: it increases if firms hold more inventories, if demand shocks are more persistent, and if supply chains are longer. I show that, in this economy, fragmenting production vertically always increases output volatility since it makes all connected industries weakly more upstream, generating more amplification at every step of the supply chain. 

Motivated by the rising importance of diversification discussed above, I conclude the theoretical analysis by considering a setting with many consumption destinations to allow for hedging of demand risk across space. In this context, I show two important features: a) the model-consistent way of aggregating final demand shocks across destinations is given exactly by the shift-share structure used in the empirical analysis, and b) a horse race between industry position and the spatial dispersion of demand governs upstream amplification. I then establish that, all else equal, i) more upstream industries are more volatile and ii) comparing two industries with similar supply chain positions, an industry has less volatile output if and only if its destination-exposure distribution second-order stochastically dominates the other's. In contrast to existing frameworks, the model can account for both the inventory and diversification effects of more complex supply chains. Depending on the structure of the network and the degree of inventory amplification, it can generate both increases and decreases in volatility as supply chains become more complex. 

The model provides a closed-form estimating equation linking output growth and the shift-share changes in final demand as a function of inventories and upstreamness $\mathcal{U}$. This relation has two useful properties. First, it can be taken to the data directly. Second, it provides a sharp prediction on the coefficients we should find if inventories did not matter for amplification. Using the model-implied estimating equation, I find that the data strongly rejects the null hypothesis of a model without inventories. I estimate that the presence of inventories increases the output elasticity to final demand by approximately 18\%, highlighting the economic significance of the \textit{bullwhip effect}.

Finally, I study a quantitative version of the model using the actual Input-Output data as the network structure. First, the model can replicate salient patterns of the data. In particular, it generates i) a negative relationship between the volatility of demand and upstreamness, highlighting diversification forces; ii) a positive relation between output growth and upstreamness, driven by the upstream propagation of shocks through inventories; and iii) the increasing output growth elasticity in upstreamness I estimate in Section \ref{results}. Second, with the calibrated model, I 
can disentangle the role of increasing supply chain length, rising diversification, and increasing inventories through counterfactuals. Comparing the network structure in 2000 to the one in 2014, I find that increasing supply chain length induces a 1.8\% increase in output volatility, while the rise of inventories generates a 5.4\% increase. When these two are combined, output volatility increases by 7.6\%. In contrast, the increasing spatial diversification of demand implies a reduction in demand volatility of 3.9\%. Combining all these forces, I find that, despite the smaller fluctuations in demand, output volatility increases by 2.4\%. While the rise of more complex supply chains carried smaller volatility of perceived demand thanks to diversification forces, the rise in inventories more than offset it and generated an overall more volatile economy.

More broadly, these results highlight a trade-off element to the debate on how to make supply chains more resilient. These results suggest that the fragmentation of production has brought about significant benefits in terms of diversification of demand risk. At the same time, more fragmented supply chains generate more scope for upstream amplification of shocks via inventories. If firms increase their inventory buffer to prevent prolonged disruption, this might come at the cost of permanently higher volatility in the economy. 

\paragraph{Contribution and Related Literature.}
This paper makes three contributions relative to the existing literature, which I discuss in turn. 

First, I provide the first causal evidence of the \textit{bullwhip effect} in production networks, showing that upstream industries experience up to three times larger output elasticities than final good producers. The theoretical underpinning on this theory dates back to \cite{forrester1961industrial} and more recently \cite{kahn1987inventories,blinder1991resurgence, metters, chenetal,ramey1999inventories,bils2000inventory,leeetal,khanthomas,alessandria_et_al_2010,alessandria2011us,alessandria2013trade,alessandria2023aggregate,khan2021does,ferrardelivery} suggest that when inventories are procyclically adjusted, they can amplify shocks upstream. These papers consider inventories in models without production networks, whereby there is no scope for a well-defined notion of network position and, therefore, of upstream amplification.\footnote{Recent work by \cite{alessandria2023aggregate} considers economies with inventories in roundabout production, which allows for input-output feedback but no explicit notion of network structure.} From an empirical standpoint, the effect of inventories as an amplification device has been studied at several levels of aggregation by \cite{alessandria_et_al_2010}, \cite{altomonte} and  \cite{Zavacka_thebullwhip}. These papers all consider exogenous variation given by the 2008 crisis to study the responsiveness of different sectors or firms to the shock, depending on whether they produce intermediate or final goods. Using an indicator for exposure to the shock creates an identification problem: it is not possible to separately identify differential exposure from different responses. Relative to these works, my approach based on the shift-share design allows me to isolate the heterogeneity in elasticities to the same change in final demand depending on an industry's position in the supply chain. My empirical findings are also related to \cite{aak,barrot2016input,boehm2019input, huneeus,carvalho2016supply,dhyne2021trade,korovkin2021production}, who study how shocks propagate in a production network.
Relative to these contributions, I build industry-level exogenous variation in the spirit of \cite{shea1993input} by exploiting destination and time-specific shocks to consumption. This approach enables me to estimate the heterogeneous response to final demand shocks across industries at different points of the supply chain, holding fixed the size of the shock itself. Further, I study directly the role of inventories as an additional channel contributing to the patterns of shock propagation.

The second contribution is the development of a theoretical model of production networks with inventories. My work builds on existing models of shock propagation in networks as in \cite{carvalho2010aggregate}, \cite{acemogluetal}, \cite{baqaee2019macroeconomic}, \cite{carvalho2016supply}. Contrary to these contributions, I include the pivotal role of inventories as a mechanism of upstream amplification of demand shocks to reconcile my empirical findings. This allows me to characterize a new sufficient statistic for upstream amplification, which simultaneously summarizes features of the network and of the inventory problem. I show that this statistic can be directly measured with data on the input-output structure and inventories. This framework also allows me to study the effect of fragmentation of supply chains as well as the role of spatial diversification on volatility. 

The last contribution of this paper is to quantify the macroeconomic implications of rising supply chain complexity and inventory trends, showing that these forces have contributed to an overall increase in economic volatility despite stronger diversification effects. These trends and their effects on supply chains' resilience, robustness, and shock propagation have been recently discussed by \cite{elliott2022supply,elliott2022networks,grossman2023resilience,grossman2023supply,elliott2023supply,acemoglu2024macroeconomics,ferrarispecialization}.

\paragraph{Roadmap.} The rest of the paper is structured as follows: Section \ref{methoddatasec} provides the details on the data and the empirical strategy. Section \ref{results} presents the reduced form results on upstream shock propagation. Section \ref{model} describes the model and provides the key comparative statics results and counterfactual exercises. Section \ref{conclusions} concludes.

\section{Data and Methodology}
\label{methoddatasec}
\subsection{Data}\label{data}

\paragraph{Input-Output Data.}
The primary data source is the World Input-Output Database (WIOD) 2016 release, see \cite{timmer2015illustrated}. It contains the Input-Output structure of sector-to-sector flows for $J=44$ countries from 2000 to 2014 yearly. The data is available at the 2-digit ISIC rev-4 level. The number of sectors in WIOD is $S=56$, which amounts to 6,071,296 industry-to-industry flows and 108,416 industry-to-country flows for every year in the sample. The data coverage of countries and industries is shown in Appendix \ref{app:data}.

The World Input-Output Table is an $(S \times J)$ by $(S \times J)$. Each element $Z_{ij}^{rs}$ denotes the value of output of industry $r$ in country $i$ sold to industry $s$ in country $j$ for intermediate input use. Additionally, the table WIOT includes an $(S\times J)$ by $J$ matrix of final use. Each element $F_{ij}^r$ is the value of output of industry $r$ in country $i$ sold to and consumed by households, government, and non-profit organizations in country $j$. I denote $F_i^r=\sum_jF_{ij}^r$, namely the value of output of sector $r$ in country $i$ consumed in any country in the world. The total value of sold output is $Y_i^r=F^r_i+\sum_s\sum_j Z_{ij}^{rs}$.

In all the analyses, I drop country-industry-year triplets whose year-on-year output growth rate is lower than -90\% or larger than 57\% ($99^{th}$ percentile of the industry growth distribution). 


\paragraph{Inventory Data.}

I complement the Input-Output data with information about sectoral inventories from the NBER-CES Manufacturing Industry Database. This dataset contains sales and end-of-the-period inventories for 473 6-digit 1997 NAICS U.S. manufacturing industries from 1958 to 2011. I collapse the inventory data to the level of aggregation of WIOD.

The second source of inventory data is the U.S. Census Manufacturing \& Trade Inventories \& Sales. This dataset covers NAICS 3-digit industries monthly since January 1992. The data includes information for finished products, materials, and work-in-progress inventories, which I sum into a single inventory measure. I use the seasonally adjusted version of the data.

Finally, note that WIOD includes information on changes in inventories by producing industry-year. I use this data whenever I am interested in changes since it has a significantly larger coverage than the U.S. data. I resort to the Census and NBER CES data whenever I need information on the stock of inventories, which is not contained in WIOD.


\subsection{Measurement and Estimation}\label{methodology}


\subsubsection{Measuring the Position in Production Chains}

The measure of the upstreamness of each sector counts how many stages of production exist between the industry and final consumers, as proposed by \cite{antras_et_al}. 
The measure is bounded below by 1, which indicates that the entire sectoral output is used directly for final consumption. 
The index is constructed by assigning value 1 to the share of sales directly sold to final consumers, value 2 to the share sold to consumers after it is used as an intermediate good by another industry, and so on. Formally:
\begin{align}
U_i^r=1\times \frac{F_i^r}{Y_i^r}+2\times \frac{\sum_{s=1}^S\sum_{j=1}^Ja_{ij}^{rs}F_{j}^s}{Y_i^r}+3\times \frac{\sum_{s=1}^S\sum_{j=1}^J\sum_{t=1}^T\sum_{k=1}^Ka_{ij}^{rs}a_{jk}^{st}F_{k}^t}{Y_i^r}+...
\label{upstreamness}
\end{align}
where $F_i^r$ is the value of output of sector $r$ in country $i$ consumed anywhere in the world and $Y_i^r$ is the total value of output of sector $r$ in country $i$. $a_{ij}^{rs}$ is dollar amount of output of sector $r$ from country $i$ needed to produce one dollar of output of sector $s$ in country $j$, defined as $a_{ij}^{rs}=Z_{ij}^{rs}/Y_j^s$. This formulation of the measure is effectively a weighted average of distance, where the weights are the distance-specific share of sales and final consumption.

Provided that $\sum_i \sum_r a_{ij}^{rs}<1$, which is a natural assumption given the definition of $a_{ij}^{rs}$ as input requirement, this measure can be computed by rewriting it in matrix form: $U=\hat Y^{-1}[I-\mathcal{A}]^{-2}F,
$ where $U$ is a $(J\times S)$-by-1 vector whose entries are the upstreamness measures of every industry in every country.\footnote{For the assumption $\sum_i \sum_r a_{ij}^{rs}<1$ to be violated, some industry would need to have negative value added since $\sum_i \sum_r a_{ij}^{rs}>1\Leftrightarrow \sum_i \sum_r Z_{ij}^{rs}/Y_j^s>1$, meaning that the sum of all inputs used by industry $s$ in country $j$ is larger than the value of its total output. To compute the measure of upstreamness, I apply the inventory correction suggested by \cite{antras_et_al}, the discussion of the method is left to Appendix \ref{app:data}.} $\hat Y$ denotes the $(J\times S)$-by-$(J\times S)$ diagonal matrix whose diagonal entries are the output values of all industries. The term $[I-\mathcal{A}]^{-2}$ is the square of the Leontief inverse, in which $\mathcal{A}$ is the $(J\times S)$-by-$(J\times S)$ matrix whose entries are all $a_{ij}^{rs}$ and finally the vector $F$ is an $(J\times S)$-by-1 whose entries are the values of the part of industry output that is directly consumed. 
Eq. (\ref{upstreamness}) shows the value of upstreamness of a specific industry $r$ in country $i$ is 1 if and only if all its output is sold to final consumers directly. Formally, this occurs if and only if $Z_{ij}^{rs}=0, \forall s,j$, which immediately implies that $a_{ij}^{rs}=0, \forall s,j$.

Table \ref{udescriptive} lists the most and least upstream industries in the WIOD sample. Predictably, services are very close to consumption, while raw materials tend to be distant.

\subsubsection{Identification Strategy}

The object of interest is the elasticity of output to changes in final demand and its heterogeneity along the upstreamness dimension. Denote this $\beta_U$ from the regression
\begin{align}
    \Delta \log Y^r_{it}=\beta_U \Delta\log {D}_{it}^r+\upsilon_{it}^r,\label{eq:elast}
\end{align}
where $\Delta\log {D}_{it}^r$ is the change in demand for goods produced by industry $r$ in country $i$ at time $t$. Focusing on changes in final demand allows me to use a consistent definition of \textit{distance from the source of the shock}: Upstreamness. If I were to find that $\beta_U$ declines in $U$, we would conclude that firms further away from the source of the shock respond less to it. Contrarily, if $\beta_U$ increases in $U$, we would conclude that the changes in final demand amplify as the travel upstream in the supply chain. 

Naturally, I do not observe demand but rather sales. Sales are presumably chosen by firms jointly with output. As a consequence, estimating eq. (\ref{eq:elast}) based on realized sales would be plagued by simultaneity bias. I circumvent the problem by using a shift-share design where plausibly exogenous changes in demand are given by 
\begin{align}
    \sum_j \xi^r_{ij}\Delta\log F_{jt},\label{eq:ss}
\end{align}
where $\xi_{ij}^r$ represents the fraction of the value of output of industry $r$ in country $i$ consumed directly or indirectly in destination $j$ in the first sample period and $\Delta\log F_{jt}$ is the growth rate of final consumption expenditure in country $j$ at time $t$. The exposure share $\xi_{ij}^r$, assumed to be time-invariant, includes direct sales from the industry to consumers in $j$ and output sold to other industries, which eventually sell to consumers in $j$.\footnote{The standard measure of sales composition uses trade data to compute the relative shares in a firm's sales represented by different partner countries \citep[see][]{kramarz2020volatility}.
However, such a measure may overlook indirect dependencies through third countries. This is particularly important for countries highly interconnected through trade. Measuring sales exposure composition only via direct flows may ignore a relevant share of final demand exposure. \cite{dhyne2021trade}, for example, show that firms respond to changes in foreign demand even when they are only indirectly exposed to them.}
The input-output structure of the data allows a full account of these indirect linkages when analyzing sales composition. Formally, define the share of sales of industry $r$ in country $i$ that is consumed by country $j$ as
\begin{align}
\xi_{ij}^r=\frac{F_{ij}^r+\sum_s\sum_k a_{ik}^{rs}F_{kj}^s+\sum_s\sum_k\sum_q\sum_m a_{ik}^{rs}a_{km}^{sq}F_{mj}^q+...}{Y_i^r}.
\label{shareseq}
\end{align}
The first term in the numerator represents sales from sector $r$ in country $i$ directly consumed by $j$; the second term accounts for the fraction of sales of sector $r$ in $i$ sold to any producer in the world that is then sold to country $j$ for consumption. The same logic applies to higher-order terms.
By definition $\sum_j\xi_{ij}^r=1$. 

The identification of $\beta$ in eq.(\ref{eq:elast}) relies on the exogeneity of either the shares (industry exposure) or the shocks (destination-specific aggregate changes), see \citet{adao, goldsmith, borusyak}. In this context, it is implausible to assume that the destination shares are exogenous as firms choose the destinations they serve. Identification can be obtained by plausibly as good as randomly assigned shifters (destination-specific shocks). However, using eq.(\ref{eq:ss}), one might still worry about reverse causality and omitted variable bias. First, suppose that industry $r$ is large in country $i$, then a boom in the sector might induce increases in demand for all products $\Delta\log F_{it}$. In this case, the causality would run from a change in $Y$ generating movements in the shift-share shock. A first identifying assumption is, therefore, that industries are small relative to destination markets. Secondly, Changes in some $Y_{it}^r$ might be related to changes in some $F_{jt}$ for reasons unrelated to changes in demand. For example, consider a global positive productivity shock to industry $r$. This might induce an increase in $Y_{it}^r$ and a contemporaneous increase in $F_{jt}$ by income effects. To alleviate this concern, I adopt a different set of shifters $\eta_{jt}$ estimated from 
\begin{align}
\Delta \log F_{kjt}^s=\eta_{jt}(i,r)+\nu_{kjt}^s\quad k\neq i, s\neq r.
\label{fixedeffects2}
\end{align}
For each industry $r$ in country $i$, I estimate country $j$'s fixed effect using all other industries $s$ of all countries except those of country $i$. This is equivalent to identifying the variation of interest through the trade flows of all other countries $k$ to the specific destination $j$. The estimated shifters $\hat\eta_{jt}(i,r)$ pick up any common variation in sales from all industries $s\neq r$ from all countries $k\neq i$ selling to market $j$ at time $t$. Such changes might be, for example, a change in fiscal policy in country $j$ that moves households' disposable income.  This method is similar to the approach in \cite{david2013china}, who instrument U.S. imports from China with imports from China of other developed economies.\footnote{\cite{kramarz2020volatility} and \cite{alfaro2021direct} use a similar fixed effect decomposition approach.}

Using the shares measured from the data and the estimated exogenous changes in final demand, shifters can be aggregated to create industry $r$ in country $i$ effective demand shocks at time $t$
\begin{align}
\hat\eta_{it}^r=\sum_j\xi_{ij}^r\hat\eta_{jt}(i,r),
\label{aggregation}
\end{align}
Where the effective sales shares are evaluated at time $t=0$ to eliminate the dependence of destination shares themselves on simultaneous demand changes. This procedure implies that sales shares from $i$ do not affect $\hat \eta_{jt}(i,r)$ and, therefore, $\hat\eta_{it}^r$.\footnote{In an alternative aggregation strategy in which I use time-varying sales shares, I follow \cite{borusyak} and test pre-trends, namely that the sales shares are conditionally uncorrelated to the destination shifters. The results are reported in Table \ref{orthogonality_test} in Appendix \ref{app:data}. I find no evidence of pre-trends.}
\begin{nonumexample}
Suppose that I observe US, Indian, and Chinese producers of cars, textiles, and furniture. When estimating the change in the final demand faced by the U.S. car manufacturing industry, I exclude the U.S. as a production country. In principle, this leaves me with identifying demand changes from sales of Indian and Chinese cars, textiles, and furniture producers to American, Indian, and Chinese consumers. However, if Chinese and American cars are very substitutable, then the observed sales of Chinese cars might be related to supply shocks to American car manufacturers. To avoid this type of reverse causality, I also restrict the analysis to sectors $s\neq r$, which, in this example, would be restricting to textile and furniture producers.
In summary, when studying the observed change in final demand for U.S. car manufacturers, I exploit variation from sales of Indian and Chinese textile and furniture manufacturers to US, Indian, and Chinese consumers. 

The type of changes picked up by the instrument are common shifts across products within a destination. For example, a fiscal expansion in China such that households' disposable income increases by 8\% represents a common shifter across all goods purchased by Chinese consumers. This would imply a $\hat\eta_{China}=0.08$. If, at the same time, Indian consumers' disposable income is also shifted down by 2\%, I would estimate a $\hat\eta_{India}=-0.02$. Then, the shift-share assigns to American car-makers a weighted average of 8\% and -2\% based on the shares of American cars purchased by Chinese and Indian consumers at baseline $\xi_{US,China}^{auto}$ and $\xi_{US,India}^{auto}$. 

As an example of shocks not picked up by the shift-share instrument, suppose that, at time $t$, Indian textile productivity increases. This would imply an increase in sales of Indian textiles to consumers everywhere in the world. Provided that Indian textiles are a small fraction of, say, Chinese consumers' consumption bundles (implying negligible income effects), this does not induce an increase in Chinese demand for American cars. Since this increase in sales is \textit{not} common across all sectors selling to Chinese consumers, it does not affect the estimated $\eta_{China,t}$.
\end{nonumexample}

As shown in \cite{borusyak}, the shift-share instrument estimator is consistent, provided that the destination-specific shocks are conditionally as good as randomly assigned and uncorrelated. The identification of demand shocks relies on the rationale that the fixed effect model in eq. (\ref{fixedeffects2}) captures the variation common to all industries selling to a specific partner country in a given year. When producing industries are small relative to the destination, the estimated demand shocks are exogenous to the producing industry, thereby providing the grounds for causal identification of their effects on the growth of sales.

\paragraph{Discussion.} Before delving into the empirical results, it is important to highlight the advantages and limitations of the data and methodology. The main downside of using WIOD is its coarseness in terms of sector aggregation, as it provides information on 56 sectors. On the other hand, it is the only dataset with a large country coverage (43 countries). This implies that measures of upstreamness will not account for within-sector-country trade between more disaggregated subsectors. To the extent that some sectors tend to have more fragmented within-sector trade than others, this might give rise to heterogeneous degrees of measurement error in computing upstreamness. The second disadvantage is the limited time series dimension, as the data at this level of disaggregation only covers 15 years. This constrains my ability to identify a richer set of parameters for the quantitative model, as I discuss in Section \ref{quantmodelsec}.

The key advantages that enable the identification strategy are twofold. First, WIOD has a large country coverage, which allows me to use changes in foreign consumers' expenditure to generate exogenous variation. For example, this empirical strategy would not be feasible if I focused on the U.S. BEA I-O tables. Second, WIOD is a \textit{closed economy} I-O table since it includes a rest of the world aggregate in the I-O matrix.\footnote{Note that single country I-O tables such as the BEA I-O Table are not closed economy tables. They typically include exports and imports as a residual, which are typically not broken down by producing sector or purchasing country. Due to this data limitation, such tables are not well suited for computing how distant industries are from consumers in other countries.} As a consequence, it is possible to measure upstreamness of each industry to each consumption point, which cannot be done in single-country IO tables. In principle, one could attempt the same approach in firm-level production network data. This would pose a similar measurement problem since these datasets only report a partial network, which would prevent the correct identification of each firm's position relative to the source of demand.\footnote{As an illustrative example, Belgian or Chilean firm-to-firm production network data only covers domestic or domestic and custom transactions. Therefore, it is not possible to estimate the distance of each firm from consumers since these may be reached after transactions unobserved in the firm-to-firm data. This is not the case in WIOD, as this is a closed economy IO table.}

Next, it is important to note that all the results are presented using within-country-sector variation only so that permanent differences between sectors, for example, the durability of their products, are controlled. Furthermore, given the shift-share structure, which embodies direct and indirect linkages, the construction of the demand shocks already accounts for diversification forces. The advantage of the empirical approach in this paper is that it allows me to isolate the differential output response, fixing the size of demand shocks.

Lastly, while the WIOD data covers agriculture, manufacturing, and services and, therefore, both tradeable and non-tradeable goods, the identification strategy naturally accounts for the fact that some sectors are not exposed to foreign demand directly (e.g., a non-traded service sector) but might be indirectly through their domestic customers. Consider, for example, the domestic construction sector. While its direct demand comes exclusively from domestic industries, it is indirectly affected by shifts in foreign demand. For example, if the domestic construction industry serves a manufacturing sector that sells to a foreign destination, which, whenever foreign demand increases, demands domestically more construction. This indirect exposure is accounted for since the shift-share is built using linkages from the full network through the matrix of $\xi^r_{ij}$.

\section{Empirical Results}\label{results}
This section provides the results from the empirical analysis of how demand shocks propagate along the supply chain to industry output. 
\subsection{Demand Shock Amplification and Supply Chain Positioning} \label{resultsindustrysec}

The goal is to estimate the elasticity $\beta$ in eq. (\ref{eq:elast}) and how it varies along the supply chain. I split the upstreamness distribution in bins through dummies taking values equal to 1 if $U_{it-1}^r\in [1,2]$ and $[2,3]$, and so on.\footnote{Since only 0.5\% of the observations are above 6, I include them in the last bin, $\mathbbm{1}\{ U_{it-1}^r \in [5,\infty) \}$.} I use lagged upstreamness as the contemporaneous one might be affected by the shock and represent a bad control.\footnote{The results are quantitatively identical if I use the baseline year upstreamness or average upstreamness so that industries cannot switch bin across year.} I estimate
\begin{align}
\Delta \log Y_{it}^r= \sum_{j=1}^5\beta_j \mathbbm{1}\{ U_{it-1}^r \in [j,j+1] \} \hat \eta_{it}^r+\delta_i^r+\nu_{it}^r .
\label{cardinalreg}
\end{align}
The estimated coefficients are plotted in Figure \ref{margins_cardinal}, while the regression output is reported in Table \ref{shock_iomatrix} in the Appendix. 
\begin{figure}[ht]
\caption{Effect of Demand Shocks on Output Growth by Upstreamness Level}
\label{margins_cardinal}\centering
\includegraphics[width=.75\textwidth]{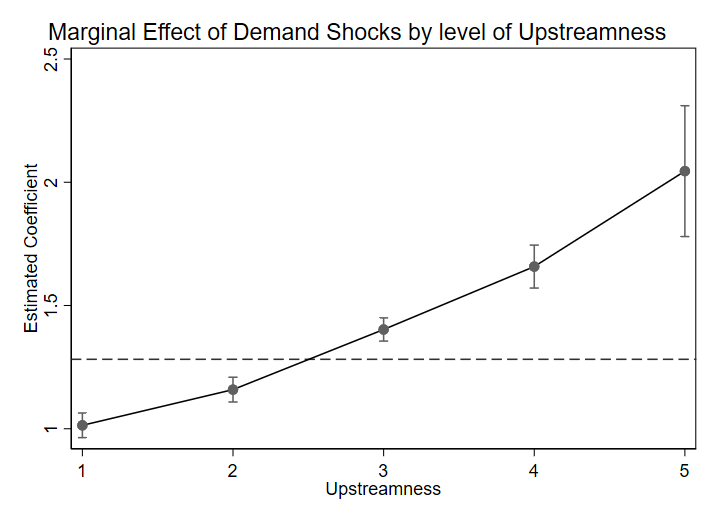} 
\\\vspace{5pt}
\noindent\justifying
\footnotesize Note: The figure shows the marginal effect of demand shocks on industry output changes by industry upstreamness level. The dashed horizontal line represents the average coefficient. The vertical bands illustrate the 95\% confidence intervals around the estimates. The regression includes country-industry fixed effects, and the standard errors are cluster-bootstrapped at the country-industry level. Note that due to relatively few observations above 6, all values above 6 have been included in the $U\in[5,\infty)$ category. The full regression results are reported in the first column of Table \ref{shock_iomatrix}.
\end{figure}

The results show that the same shock to the growth rate of final demand induces strongly heterogeneous responses in the growth rate of industry output. In particular, industries between one and two steps removed from consumers respond approximately 50\% less than industries five or more steps away. These novel results, which are robust across different fixed effects specifications, highlight how amplification along the production chain can generate sizable heterogeneity in output responses.
Quantitatively, each additional unit of distance from consumption raises the responsiveness of industry output to demand shocks by .22pp, approximately 17\% of the average response.

Throughout, I exploit only within-industry variation, controlling for all time-invariant country-industry factors with fixed effects. This rules out alternative explanations, such as that industries located more upstream tend to produce more durable goods and, therefore, be exposed to different intertemporal elasticity of substitution in household purchases.\footnote{Given the construction of $\hat\eta^r_{it}$, these shocks include indirect exposure to other sectors' durability through changes in demand. For example, insofar as computers are a key input in the financial services production function, and microprocessors an input in computers, the microprocessor industry is exposed to changes in the demand for both computers and financial services, despite computers being more durable than financial services.}

Figure \ref{margins_cardinal} shows an increasing output elasticity to demand shocks along the supply chain. On the one hand, if industries at different positions of the production network experience the same demand fluctuations, we should observe output volatility is higher for industries further away from consumption. On the other hand, if diversification forces are at play, we would expect that more upstream industries face lower demand volatility as shocks are partially diversified away. Therefore, it is a priori unclear how output volatility should change along supply chains. Empirically, I find support for both forces. I uncover two novel observations shown in Figure \ref{diversification}: i) the volatility of the measured demand shocks $\hat\eta^r_{it}$ negatively correlates with upstreamness, and ii) the volatility of output growth positively correlates with upstreamness. The first observation is intuitive: more upstream firms are indirectly connected to a larger number of destination countries, and therefore, destination-specific fluctuations are partly diversified away. However, if shocks dissipate as they travel along the supply chain, we should find an even stronger negative correlation between output growth volatility and upstreamness. Instead, we see that the upstream amplification shown in Figure \ref{margins_cardinal} dominates diversification forces, and output volatility is higher for more upstream sectors. This second result, therefore, requires a new mechanism underlying how shocks amplify upstream. In what follows, I study the role of inventories in explaining this puzzle.

\begin{figure}[ht]
\caption{Demand Shocks and Output Growth Volatility by Upstreamness}
\label{diversification}\centering
\begin{minipage}[b]{.5\linewidth}
\centering\includegraphics[width=.92\textwidth]{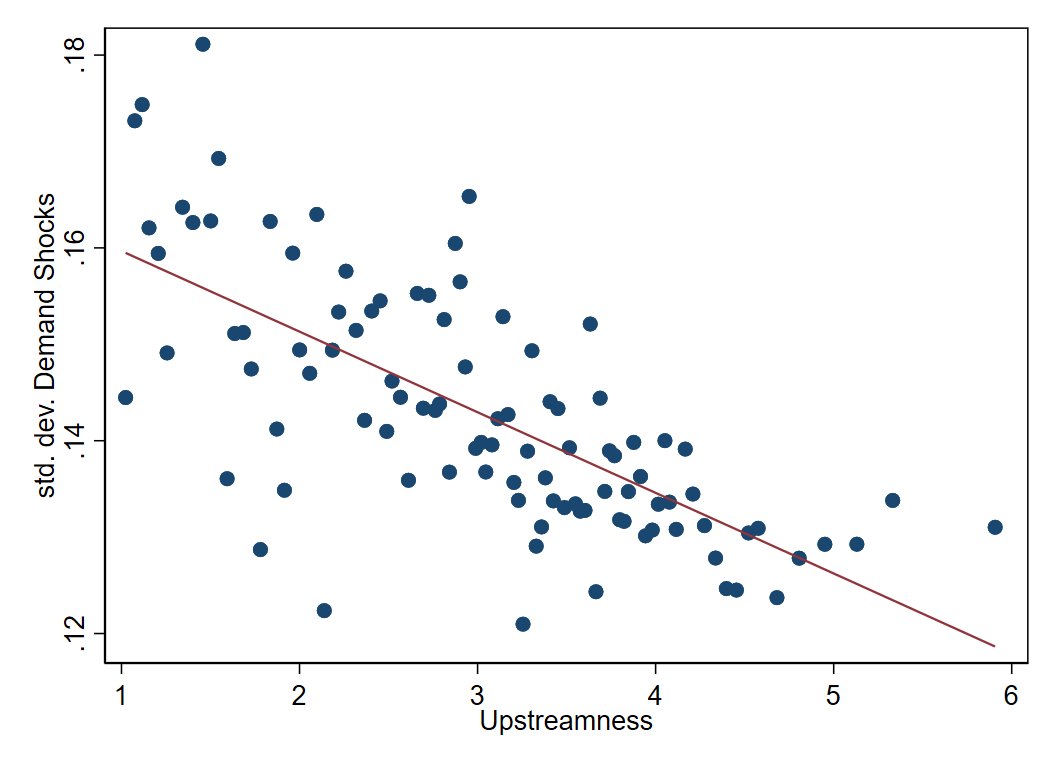}  \subcaption{Demand Shocks Volatility}
\end{minipage}%
\hspace*{.15cm}\begin{minipage}[b]{.5\linewidth}
\centering\includegraphics[scale=0.21]{{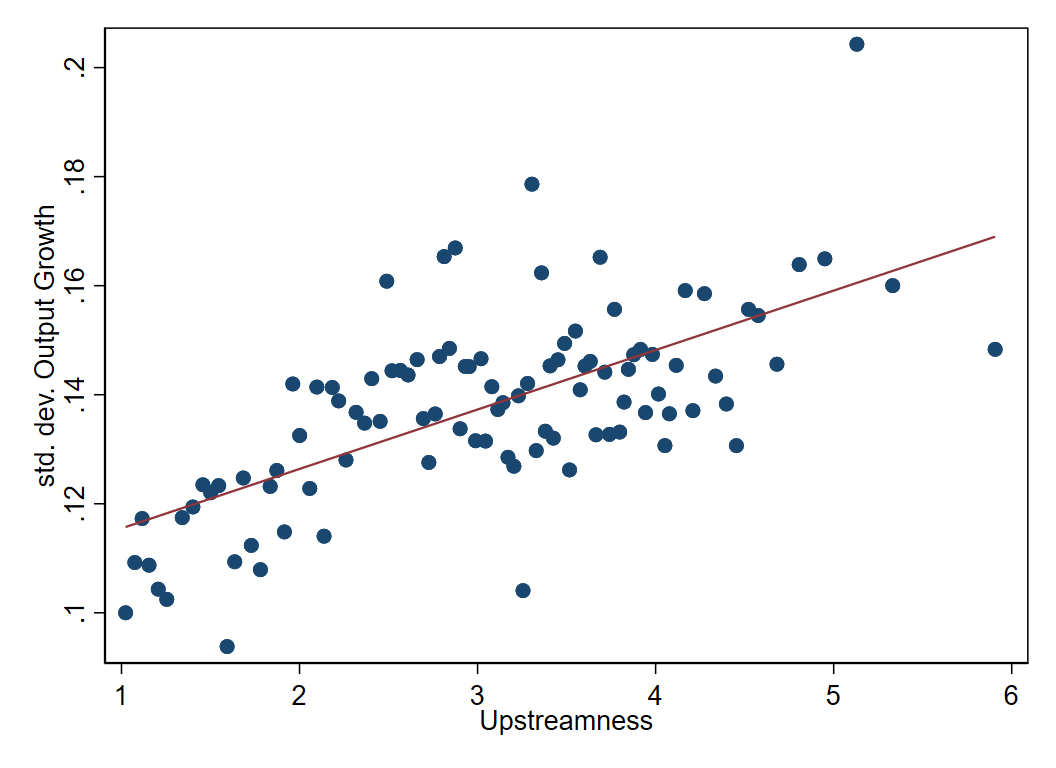}}\subcaption{Output Growth Volatility} 
\end{minipage}
\\\vspace{5pt}
\noindent\justifying
 \footnotesize Note: The graph shows the binscatter of the standard deviation of demand shocks and output growth within country-industry across time versus the industry average upstreamness across time.
\end{figure}


\paragraph{Robustness and Inference.} I confirm these results using alternative definitions of the final demand shocks. In particular, Appendix \ref{app_empirics_s4} confirms the findings using an alternative instrument based on foreign government consumption. Furthermore, earlier versions of this paper showed that the same results hold using the China Syndrome shocks from \cite{david2013china}, and the federal spending shocks from \cite{aak}.

To assess the robustness of these results, I run an extensive set of additional checks, which I discuss in detail and report in Appendix \ref{robustnessappendix}. These include using the re-centred instrument proposed by \cite{borusyak2020non} to solve potential omitted variable bias and a more general model to estimate the demand shifters, allowing for supply-side effects.

Lastly, note that absent any correction, the standard errors estimated in eq. (\ref{cardinalreg}) can be overestimated, as pointed out by \cite{adao}. In Appendix \ref{app_empirics_s4}, I show the exact procedure by which I apply the \cite{adao} correction. The results are, if anything, more precise.

\subsection{The Role of Inventories}
These results show that firms further upstream in the supply chain respond significantly more strongly to the same changes in demand. The operations literature on the \textit{bullwhip effect} posits that this finding can be explained by procyclical inventory adjustment. However, we lack consistent evidence of this mechanism. 

WIOD provides information on the change in inventories of a producing industry computed as the row residual in the I-O matrix. Intuitively, given the accounting identity that output equals sales plus the change in inventory stock, the tables provide the net change in inventories as residual between output and sales to other industries or final good consumers. I standardize the change in inventories by dividing it by total output to account for scale effects. I then estimate eq. (\ref{cardinalreg})
with $\Delta I^r_{it}/Y_{it}^r$ on the left hand side.\footnote{The WIOD data does not include the values of the stocks $I_{it}^r$, I divide by the value of output in a given period to avoid mechanical scale effects.}  I winsorize $\Delta I^r_{it}/Y_{it}^r$ at 1\%. The results are plotted in Figure \ref{margins_cardinal_inventories} and reported in Table \ref{cardinal_inv_table}.

\begin{figure}[ht]
\centering
\caption{Effect of Demand Shocks on Inventory Changes by Upstreamness Level}
\label{margins_cardinal_inventories}
  \centering
  \includegraphics[width=.75\linewidth]{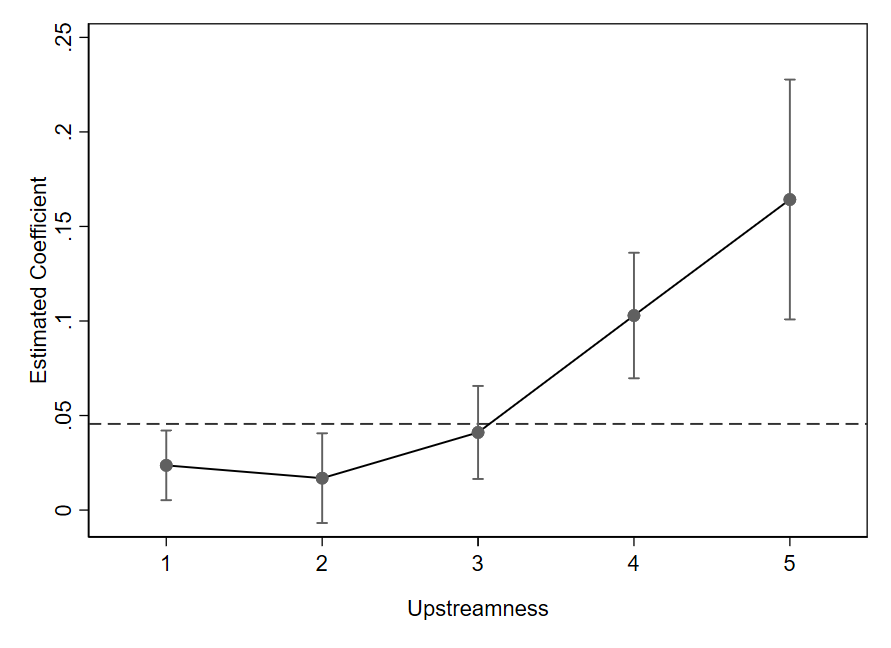}
\\\vspace{5pt}
\noindent\justifying
\footnotesize Note: The figure shows the marginal effect of demand shocks on industry inventory changes by industry upstreamness level. The dashed horizontal line represents the average coefficient. The vertical bands illustrate the 95\% confidence intervals around the estimates. The regression includes country-industry fixed effects, and the standard errors are cluster-bootstrapped at the country-industry level. Note that due to relatively few observations above 6, all values above 6 have been included in the $U\in[5,6]$ category. The full regression results are reported in the first column of Table \ref{cardinal_inv_table}.
\end{figure}

The estimation shows three important results. First, changes in final demand induce procyclical changes in inventories, independently of the industry position in the supply chain. Second, recalling that the average inventory-to-sales ratio in the data is approximately 10\%, the inventory response is quantitatively sizeable, ranging between 1\% and 14\% of annual output. Finally and most importantly, the response of inventories increases along the upstreamness distribution. Quantitatively, a 1pp increase in the growth of demand generates a .02pp increase in the change in inventories over output for industries closest to final consumers. The same figure for industries at 5 or more steps of production away is .16pp.

These findings suggest that inventories act as a force of upstream amplification in supply chains. Following the insights of the operations literature on the \textit{bullwhip effect}, in the next section, I build a novel production network model with inventories. I do so with the twofold goal of i) characterizing under which condition on the network and the inventory problem we would observe upstream amplification of shocks and ii) studying counterfactuals that can help shed light on the aggregate effects of the trends discussed in the introduction.

\section{Model }\label{model}
In this section, I study the interplay between supply chain structure and inventories in determining the extent of demand shocks propagation and amplification. I do so in three steps, adding one element at a time. I start by studying the problem in a vertical chain with inventories to isolate the sufficient condition on the inventory policy for upstream amplification. Next, I consider a general network with a single source of fluctuations. This allows me to isolate the interplay between inventories and supply chain position. Lastly, I consider a setting with many destination markets to isolate the role of diversification. Next, I provide a series of comparative statics to highlight how more or less fragmented economies compare in terms of output volatility. I conclude the section by directly estimating the model and providing counterfactual scenarios.

\subsection{Vertically Integrated Economy}\label{productionline}
Consider an economy with one final good whose demand is stochastic, and $N-1$ stages sequentially used to produce the final good. Throughout, I use industry and sector interchangeably. The structure of this production network is a line, where stage $N$ provides inputs to stage $N-1$ and so on until stage $0$, where goods are consumed.\footnote{I use the indexation of industries running from 0, final consumption, upward in the supply chain because, in this vertically integrated economy, industry $n$ has upstreamness equal to $n$.}

The demand for each stage $n$ in period $t$ is $D_t^n$ with $n\in\mathbb{N}$. Stage 0 demand, the final consumption stage, is stochastic and follows an AR(1) with persistence $\rho\in (0,1)$ and a positive drift $\bar D$. The error terms are distributed according to some finite variance distribution $F(0,\sigma^2)$ on a bounded support. $\bar D$ is assumed to be large enough relative to the variance of the error so that demand bounded away from 0. 
Formally, final demand in period $t$ is
$D_t^0=(1-\rho)\bar{D}+\rho D_{t-1}^0+\epsilon_t$. The production function is linear: for any stage $n$, if production is $Y_t^n$, it also represents the demand for stage $n+1$, $D_t^{n+1}$: $Y_t^n=D_t^{n+1}$.

Firms at stage $n$ form expectations on future demand $\mathbbm{E}_t D^n_{t+1}$ and produce to end the period with target inventories $I^n_t=I_n(\mathbb{E}_t D^n_{t+1})$. Where $I_n(\cdot)$ is some non-negative differentiable policy function whose derivative is $I^\prime_n$.

Finally, the following accounting identity has to hold at every stage $n$
\begin{align}
Y_t^n&=D_t^n+I_n(\mathbb{E}_t D^n_{t+1})- I_n(\mathbb{E}_{t-1} D^n_{t}),
\label{outputlinestagen}
\end{align}
Where $D^n_t$ is the demand for sector $n$'s products. In this setting, it is possible to derive how output behaves at every step of production $n$ by solving the economy upward from final demand. To provide a characterization of output at each stage of production, define the operator $\mathcal{F}^n_t[x_t]=I_n\left(\mathbb{E}_t\left[x_{t+1}+ \sum_{i=0}^{n-1}\mathcal{F}^{i}_{t+1}[x_{t+1}]- \mathcal{F}^{i}_t[x_t]\right]\right)$ with the initial condition $\mathcal{F}^0_t[x_t]=I_0(\mathbb{E}_t [x_{t+1}])$, for some argument $x_t$. Further, define $\Delta \mathcal{F}^n_t[x_t]= \mathcal{F}^n_t[x_t]-\mathcal{F}^n_{t-1}[x_{t-1}].$ Then, the following holds
\begin{lemma}[Sectoral Output]\label{thmoutputline}
    Sectoral output at production stage $n$ satisfies 
    \begin{align}
            Y^n_{t}&=D_t^0+\sum_{i=0}^n\Delta\mathcal{F}^i_t[D_t^0].
    \end{align}
\end{lemma}
\begin{proof}
 See Appendix \ref{proofs}.
\end{proof}
Output at every stage of production $n$ is given by a recursion of final demand $D_t^0$ and inventory changes in all stages downstream from $n$, $\{I_{j}\}_{j=n}^0$.

In the context of this model, asking whether exogenous changes in final demand amplify upstream is effectively comparing $\frac{\partial \log Y^n_t}{\partial \log D^0_t}$ and $\frac{\partial \log Y^{n+1}_t}{\partial \log D^0_t}$: amplification occurs if there exists an $n$ such that $\frac{\partial \log Y^n_t}{\partial\log D^0_t}<\frac{\partial\log Y^{n+1}_t}{\partial\log D^0_t}$. Proposition \ref{propbullwhip} formalizes the sufficient condition for amplification in this economy.
\begin{prop}[Amplification in Vertically Integrated Economies]\label{propbullwhip}
The following holds
\begin{enumerate} [label=\alph*)]
    \item A change in final demand, to a first-order approximation around the non-stochastic
steady state, implies the following change in sectoral output 
\begin{align}
        {\Delta\log Y^n_t}\approx \Delta\log D^0_t\left[        1+\sum_{i=0}^n\rho I^\prime_i\prod_{j=0}^{i-1} [1+(\rho-1)I^\prime_{j}]\right].
\end{align}
\item The economy features upstream amplification of final demand shocks if the inventory function satisfies
\begin{align}
    0\leq I_i^\prime<\frac{1}{1-\rho},\:\forall i,\, \text{strictly for some } i.
    \label{Iprimecond}
\end{align}
\item Furthermore, if eq. (\ref{Iprimecond}) holds and $I^\prime_n$ is not too decreasing in $n$, amplification is larger for firms in more upstream sectors:
$\Delta\log Y^n_t-\Delta\log Y^{n-1}_t$ increases in $n$.
\end{enumerate}
\end{prop}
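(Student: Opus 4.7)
The plan is to prove all three parts by exploiting the recursive structure of the operator $\mathcal{F}^n_t$ and linearizing around the non-stochastic steady state, where $D_t^0 = \bar D$ for all $t$ and $\mathcal{F}^n_t = I_n(\bar D)$ is constant. I will work with deviations $\tilde D_t^0 = D_t^0 - \bar D$, noting that the AR(1) structure implies $\mathbb{E}_t \tilde D_{t+1}^0 = \rho \tilde D_t^0$.

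For part (a), I would prove by induction on $n$ that, to first order, $\mathcal{F}^n_t[D_t^0] - I_n(\bar D) = \alpha_n \tilde D_t^0$ with $\alpha_n = \rho I'_n \prod_{j=0}^{n-1} m_j$ and $m_j = 1 + (\rho-1)I'_j$. The base case follows immediately from $\mathcal{F}^0_t[D_t^0] = I_0(\mathbb{E}_t D^0_{t+1})$. For the inductive step, I would linearize the argument of $I_{n+1}$: $\mathbb{E}_t \tilde D_{t+1}^0$ contributes $\rho \tilde D_t^0$, each forward term $\mathbb{E}_t \tilde{\mathcal{F}}^i_{t+1}$ contributes $\rho \alpha_i \tilde D_t^0$, and each contemporaneous $\tilde{\mathcal{F}}^i_t$ contributes $-\alpha_i \tilde D_t^0$. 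Summing gives an argument $[\rho + (\rho-1)\sum_{i=0}^n \alpha_i]\tilde D_t^0$, which I would then simplify via the telescoping identity $(\rho-1)\sum_{i=0}^n I'_i \prod_{j=0}^{i-1} m_j = \prod_{j=0}^n m_j - 1$, itself provable by a short induction using $m_{n+1}-1=(\rho-1)I'_{n+1}$. Substituting into the representation from Proposition \ref{thmoutputline} and using $\bar Y^n = \bar D$ converts the level response into the elasticity form stated in (a): only the date-$t$ terms $\mathcal{F}^i_t[D_t^0]$ depend on $D_t^0$, so $\partial Y^n_t/\partial D^0_t\vert_{ss} = 1 + \sum_{i=0}^n \alpha_i$.

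For part (b), the first difference of (a) gives $\partial \log Y^n_t/\partial \log D^0_t - \partial \log Y^{n-1}_t/\partial \log D^0_t = \rho I'_n \prod_{j=0}^{n-1} m_j$, which is non-negative iff $\rho>0$, $I'_n \geq 0$, and $m_j \geq 0$ for every $j\leq n-1$. The latter is equivalent to $I'_j < 1/(1-\rho)$; strict positivity of $I'_i$ for some $i$ then delivers strict amplification somewhere along the chain. For part (c), I would compute the ratio of consecutive amplification increments, $\alpha_{n+1}/\alpha_n = (I'_{n+1}/I'_n)\, m_n$, so that $\alpha_n$ is increasing in $n$ precisely when $I'_{n+1} m_n > I'_n$, or equivalently $I'_{n+1}-I'_n > (1-\rho)I'_n I'_{n+1}$. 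Since $m_n < 1$ whenever $I'_n>0$, the cross-stage inventory sensitivity must actually rise by at least a factor of $1/m_n$; this is the precise content of the informal condition that $I'_n$ is ``not too decreasing''.

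The main obstacle is the algebraic identity underpinning the inductive step in (a): the shock propagates through three channels at every stage (expected future demand, expected future inventory targets of all downstream stages, and the current-period inventory terms entering with opposite sign), and without the product collapse $\rho + (\rho-1)\sum_i \alpha_i = \rho\prod_j m_j$, the expression would remain an opaque combinatorial sum. Once this identity is in hand, parts (b) and (c) are direct calculations reading monotonicity properties off the product $\prod m_j$ and the ratio $(I'_{n+1}/I'_n)m_n$, respectively.
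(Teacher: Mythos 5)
Your proposal is correct, and it follows the same overall strategy as the paper's proof: compute the first-order response of each $\mathcal{F}^i_t$ to $D^0_t$, assemble them via Proposition \ref{thmoutputline}, use $\bar Y^n=\bar D$ to pass to elasticities, and then read parts (b) and (c) off the product structure of the increments $\rho I'_i\prod_{j<i}[1+(\rho-1)I'_j]$. The execution differs in two places worth noting. For part (a), the paper grinds through stages $0$, $1$, $2$ explicitly (invoking bounded support and the Leibniz rule to differentiate through the conditional expectation) and then asserts the general formula ``by forward induction'' without exhibiting the inductive step; you instead isolate the telescoping identity $\rho+(\rho-1)\sum_{i=0}^{n}\alpha_i=\rho\prod_{j=0}^{n}m_j$ with $\alpha_i=\rho I'_i\prod_{j<i}m_j$ and $m_j=1+(\rho-1)I'_j$, which is exactly what closes the induction and makes the argument complete rather than pattern-matched; your identity is easily verified ($m_{n+1}\prod_{j\le n}m_j=\prod_{j\le n}m_j+(\rho-1)I'_{n+1}\prod_{j\le n}m_j$), and working directly with first-order deviations sidesteps the differentiation-under-the-integral step. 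For part (c), the paper decomposes the second difference as $\rho(I'_n-I'_{n-1})\prod_{j=0}^{n-2}m_j+\rho I'_n(1+(\rho-1)I'_{n-1})$, which does not match the correct expression $\rho\prod_{j=0}^{n-2}m_j\bigl[(I'_n-I'_{n-1})+(\rho-1)I'_nI'_{n-1}\bigr]$ — the paper's second term drops the leading product and flips the sign of the cross term, which is in fact \emph{negative} under condition (\ref{Iprimecond}) when $\rho<1$. Your ratio computation $\alpha_{n+1}/\alpha_n=(I'_{n+1}/I'_n)m_n$ is the correct version, and it sharpens the informal ``not too decreasing'' qualifier into the precise requirement $I'_{n+1}-I'_n>(1-\rho)I'_nI'_{n+1}$, i.e.\ that $I'$ must in fact rise by at least the factor $1/m_n>1$ across stages (with the trivial caveat that you should handle $I'_n=0$ separately, where $\alpha_n=0\le\alpha_{n+1}$ automatically). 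So your route buys a complete induction in (a) and a corrected, more informative statement of the monotonicity condition in (c).
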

\begin{proof}
 See Appendix \ref{proofs}.
\end{proof}
Proposition \ref{propbullwhip} provides three important results. The first is a characterization of how a change in final demand affects output. In this vertically integrated economy, absent inventories, output would move 1-for-1 with demand. The presence of inventories implies that demand fluctuations can be absorbed or amplified as they travel through the network. When inventories are countercyclical, they absorb part of the fluctuation in demand without transmitting it to output. Vice versa, if inventories are procyclically updated, they amplify such fluctuations.  This intuition is formalized in the second result. The proposition provides a sufficient condition for amplification in eq. (\ref{Iprimecond}). The first inequality requires that the inventory function is increasing, namely that inventory changes are procyclical. This ensures that, as demand rises, so do inventories. Whenever this holds, output increases more than one-for-one with demand. This, in turn, implies that the demand change faced by the upstream firm is amplified relative to the one faced by the downstream firm. In other words, the demand shock amplifies upstream. The second inequality requires that the function is not "too increasing" relative to the persistence of the process. This inequality arises because a positive change in demand today also implies that the conditional expectation of demand tomorrow is lower than demand today due to mean-reversion. This condition ensures that the first, direct effect dominates the second one. Intuitively, as shocks become arbitrarily close to permanent, the second condition is trivially satisfied, and it is enough for inventories to be increasing in expected demand. The last result in Proposition \ref{propbullwhip} states that if all firms have procyclical inventories and such procyclicality is not smaller for more upstream firms, amplification magnifies upstream. Namely, we should find an increasing and convex relation between output and upstreamness. Intuitively, every step of production amplifies demand fluctuations; therefore, more upstream firms have a larger output response for any given change in final demand.

Proposition \ref{propbullwhip} shows that when inventories are procyclically adjusted, shocks amplify upstream. In the data, I find that this condition is verified. Using both the NBER CES Manufacturing and the Census data, I estimate the empirical derivative of inventories to sales. Figures \ref{dist_Iprime}, \ref{dist_Iprime_type} and Table \ref{npreg} show the results. I find that for most sectors in the data, $I^\prime>0$, suggesting that inventories are procyclically adjusted. As a natural consequence, I also find that output is more volatile than sales, as shown in Figure \ref{fact4plots}.

So far, I have established that in line networks, procyclical inventories are a sufficient condition for upstream amplification of demand shocks. Importantly, in this setting, due to production taking place on a line with only one endpoint, the network structure induces the largest possible extent of amplification. In the next sections, I study generalizations of the network that can undo this effect. I do this in two steps. First, under the maintained assumption of only one source of shocks, I allow for a general network structure to introduce the possibility of diffusion of shocks across sectors. Next, I allow for consumers in many countries and many shocks to introduce diversification forces. 
However, to solve the model, I have to specify the inventory problem.

\subsection{Network Structure and Amplification}\label{generalnetwork}


The general network structure introduces a horse race between the network's ability to dissipate final demand shocks and the potential amplification from inventories. As highlighted in Proposition \ref{propbullwhip}, the sufficient condition for amplification is that inventories are adjusted procylically. In this section, to retain the tractability of the model despite the general network structure, I specify the inventory choice of firms such that it replicates the procyclical behavior observed in the data while still allowing me to fully characterize the equilibrium in closed form.

\paragraph{Households.} Suppose the economy is populated by domestic and foreign consumers. Domestic consumers have \cite{golosov2007menu} preferences over a differentiated bundle $C$ and leisure. The utility is given by 
$    U_t= \log C_t - L_t
$, where $L$ is the amount of labor supplied. They maximize utility subject to the budget constraint $w_tL_t +\Pi_t=P_tC_t$, where $P_t$ is the optimal price index of the differentiated bundle and $\Pi_t$ represents rebated firm profits. The wage is the numeraire so that $w_t=1,\,\forall t$. The household maximization yields a constant expenditure on the differentiated bundle equal to 1. Foreign consumers have a stochastic demand $X_t$, which follows an $AR(1)$ process with some mean $\bar X$. The total final expenditure faced by a firm is then given by 
$
D_t=(1-\rho)(1+\bar X)+\rho D_{t-1}+\epsilon_t,\, \epsilon_t\sim F(0,\sigma^2).
$
Note that total demand inherits the stochasticity from foreign demand. I assume that the composition of the domestic and foreign consumption baskets is identical and generated through a Cobb-Douglas aggregator over varieties
$
    C_t=\prod_{s\in S} C_{s,t}^{\beta_s},
$
where $S$ is a finite number of available products. $\beta_s$ the consumption weight of good $s$ and $\sum_s\beta_s=1$. This formulation implies that the expenditure on good $s$ is $E_{s,t}=\beta_s D_t$ for $E_{s,t}$ solving the consumer expenditure minimization problem.

\paragraph{Production.} The network is characterized by an input requirement matrix $\mathcal{A}$, in which cycles and self-loops are possible.\footnote{An example of a cycle is: if tires are used to produce trucks and trucks are used to produce tires. Formally, $\exists r: [\mathcal{A}^n]_{rr} >0, n>1$. An example of a self-loop is if trucks are used in the production of trucks. Technically, such is the case if some diagonal elements of the input requirement matrix are positive, i.e., $\exists r: [\mathcal{A}]_{rr}>0$.}
I denote elements of $\mathcal{A}$ as $a_{rs}=[\mathcal{A}]_{rs}$. The network has a terminal node given by final consumption.

In each sector $s$, there are two types of firms: a fringe of competitive firms, denoted $C$ firms, and a set of firms with market power, denoted $I$ firms. Goods produced by $C$ and $I$ firms within the same sector $s$ are perfect substitutes.\footnote{See \cite{acemoglu2020firms} for a similar setup.}
All firms produce using labor and a bundle of other sectors' output. Inputs are combined through Cobb-Douglas production functions
$
   Y_{s,t}=Z_sl_{s,t}^{1-\gamma_s} M_{s,t}^{\gamma_s} ,
$
where $l_{s,t}$ is the labor used by industry $s$, $M_{s,t}$ is the input bundle and $\gamma_s$ is the input share for sector $s$. $Z_s$ is an industry-specific normalization constant. The input bundle is aggregated as
$
      M_{s,t}=\left(\sum_{r\in R} {a_{rs}}^{1/\nu}Y_{rs,t}^{\frac{\nu-1}{\nu}}\right)^{\frac{\nu}{\nu-1}},
$
where $Y_{s,t}$ is the output of sector $s$, $Y_{rs,t}$ is the output of industry $r$ used in sector $s$ production, and $\gamma_s =\sum_{r\in R}a_{rs}$ so that the production function has constant returns to scale. $\nu$ is the elasticity of substitution, and $a_{rs}$ is an input requirement, in equilibrium this will also coincide with the expenditure amount $Y_{rs,t}$ needed for every dollar of $Y_{s,t}$. $R$ is the set of industries supplying inputs to sector $s$.\footnote{In this paper, I focus on a setting where the production network is given by technology, namely the input requirement matrix $\mathcal{A}$ and study comparative statics and counterfactuals on the network structure. For recent contributions studying the role of endogenous network formation for the transmission of shocks, see \cite{lim, huneeus,acemoglu2020endogenous,taschereau,acemoglu2024macroeconomics,kopytov}. } 

The fringe of competitive $C$ firms is not subject to inventory management problems. They choose output $Y_{s,t}$ and inputs $l_{s,t},\,\{Y_{rs,t}\}_r$ to maximize profits $\pi_{s,t} = p_{s,t} Y_{s,t} - l_{s,t} -\sum_r p_{r,t} Y_{rs,t}$ subject to the production technology. The set of $I$ firms produces the same varieties with a productivity shifter $Z_s^I>Z_s$ taking input prices as given and solves 
\begin{align*}
    \max_{Y_{s,t},I_{s,t}, Q_{s,t}}\;&\mathbb{E}_t\sum_{t} \beta^t\left[p_{s,t} Q_{s,t} - c_{s,t}Y_{s,t} -\frac{\delta}{2}(I_{s,t}-\alpha Q_{s,t+1})^2 \right] \quad st\quad I_{s,t}=I_{s,t-1}+Y_{s,t}-Q_{s,t},
\end{align*}
where $Q_{s,t}$ is the quantity sold, $Y_{s,t}$ is the quantity produced, and $c_{s,t}$ is the marginal cost of the expenditure minimizing input mix: $\argmin_{l_{s,t},\,\{Y_{rs,t}\}_r} l_{s,t} +\sum_r p_{r,t} Y_{rs,t} $ subject to $ \bar{Y}_{s,t}=Z^I_sl_{s,t}^{1-\gamma_s} \left(\sum_{r\in R} {a_{rs}}^{1/\nu}Y_{rs,t}^{\frac{\nu-1}{\nu}}\right)^{\gamma_s\frac{\nu}{\nu-1}}$. Note that the model abstracts from productivity shocks. I introduce them in an extension discussed later on. $\delta$ and $\alpha>0$ govern the costs of holding inventories or facing stock-outs and backlogs.\footnote{This model of inventory choice is a simplified version of the linear-quadratic inventory model proposed by \cite{ramey1999inventories} following \cite{holt1960planning} as a second-order approximation of the full inventory problem. I discuss a fully dynamic model in which firms face breakdowns and stochastic demand in Appendix \ref{quant_model}. Proposition \ref{propendinventories} shows that inventories are optimally procyclical in that setting.} The optimal inventory policy is a function of the expected demand:
\begin{align}
    I_{s,t}=\max\{ \mathcal{I}_{s,t}+\alpha \mathbb{E}_tQ_{s,t+1},0\},
    \label{optimalinv}
\end{align}
with $\mathcal{I}_{s,t}\coloneqq (\beta-1)c_{s,t}/\delta$.\footnote{Note that $\mathcal{I}_{s,t}<0$ since, in the presence of time discounting or depreciation of inventories, the firm would ideally like to borrow output from the future and realize the sales today.} This optimal rule, known in the operations literature as \textit{days-worth of supply (DOS)}, implies that inventories are procyclically adjusted, as is corroborated by the inventory data (see Figure \ref{dist_Iprime}). This optimal policy is linear in expected sales. This fits the data extremely well, as shown in Figure \ref{lowess_graphs}, with a linear fit explaining 80\% of the variation. This formulation of the problem, where the presence of inventories is motivated directly by the structure of the firm's payoff function, is a reduced form stand-in for stock-out avoidance motives. I discuss this assumption in detail at the end of the section.\footnote{In Appendix \ref{quant_model}, I provide a simple dynamic model in which firms face stochastic demand to show that the optimal dynamic policy implies procyclical inventory changes. Secondly, note that, in this setup, procyclicality follows from the optimal target rule adopted by firms. An alternative motive for holding inventories could be production smoothing, whereby a firm holds a stock of goods to avoid swings in the cost of production between periods. In Appendix \ref{smoothingmotive}, I introduce a production smoothing motive in the form of a convex cost function and show that the firm optimally chooses procyclical inventories if the smoothing motive is not too strong. Furthermore, if the production smoothing motive were to dominate, inventories would have to be countercyclical, which is counterfactual based on the evidence discussed earlier. Finally, this version of the model abstracts from both productivity shocks and inventory depreciation. I introduce stochastic productivity in Appendix \ref{productivityshocks} and show that it reinforces the procyclical nature of inventories. I abstract from depreciation as it does not affect the results qualitatively.} Finally, since this target rule sets the optimal inventory stock at the end of the period, the production choice depends on the optimal inventory target and the current inventory stock. In particular, the firm will choose a production level that meets demand and obtains the targeted change in inventories $\Delta I_{s,t} = I^\star_{s,t}-I_{s,t-1}$. In this sense, the output choice is both forward- and backward-looking since it depends on the current inventory stock as well as expectations of future demand.


\paragraph{Characterization.} I set the vector of normalizing constants for $C$ firms $Z_s\coloneqq (1-\gamma_s)^{\gamma_s-1}\gamma_s^{-\gamma_s\frac{\nu}{\nu-1}}$ so that, together with the normalization $w_t=p_{0,t}=1$, they imply that the expenditure minimizing input bundles have a marginal cost $c_{s,t}=1,\,\forall s,\,t$ \citep[see][]{carvalho2019production}. The competitive $C$ firms then set prices $p_{s,t}=1,\,\forall s,\,t$. 
The set of $I$ firms price at the marginal cost of the competitive fringe and obtain a markup $\mu_s^I$ over their marginal cost. As a consequence, $I$ firms make profits that are rebated to households while $C$ firms do not produce in equilibrium.
$I$ firms then optimally choose an inventory policy $I_{s,t}=\max\{ \mathcal{I}_{s,t}+\alpha \mathbb{E}_tQ_{s,t+1},0\}$.\footnote{In what follows I disregard the possibility that the optimal inventory level is 0 since there always exists an average foreign demand $\bar X$ such that it is never optimal to hold no inventory.}
Note that in equilibrium, there is no difference between the value and quantity of output in this economy, as all prices are equal to 1.  The equilibrium definition and full characterization is provided in Appendix  \ref{proofs}.

Given these policies, it is possible to solve for equilibrium quantities analytically. The linearity of the policies implies that I can solve the problem separately at each step of production and then aggregate across stage of production within each firm. 
Denote $\gamma_sa_{rs}=\tilde{\mathcal{A}}_{rs}$, so that  $\sum_s\gamma_s a_{rs}={\sum_s\tilde{\mathcal{A}}_{rs}} $ is the weighted outdegree of a node $r$, namely the sum of the shares of expenditure of all industries $s$ coming from input $r$. Also denote $\omega=1+\alpha(\rho-1)$ and $ \tilde{\mathcal{A}}^n$ as the $n^{th}$ power of the $\tilde {\mathcal {A}}$ matrix.
Iterating upward from stage $0$ to generic stage $n$ and denoting $\Delta_t \equiv D_t-D_{t-1}$, 
I can write the value of production of industry $k$ at stage $n$ as
\begin{align}    Y^n_{k,t}=\left[\tilde{\mathcal{A}}^n\right]_kB \left[D_t+\alpha\rho\sum_{i=0}^n\omega^i \Delta_t\right].
    \label{outputnework}
\end{align}
In eq. (\ref{outputnework}), the network structure is summarized by $\left[\tilde{\mathcal{A}}^n\right]_kB$. The first term $\left[\tilde{\mathcal{A}}^n\right]_kBD_t$ incorporates both the direct and indirect exposure to contemporaneous demand, while the rest of the equation represents the inventory effect both directly and indirectly through the network.

Finally, as firms operate at multiple stages of production, total output of firm $k$ is
$
    Y_{k,t}=\sum_{n=0}^\infty Y^n_{k,t}.
$
I can now characterize sectoral output as a function of the inventory channel and the features of the network.

\begin{prop}[Sectoral Output]\label{indoutput}
The value of sectoral output for a generic industry $k$ is given by
\begin{align}
Y_{k,t}=\sum_{n=0}^\infty  \left[\tilde{\mathcal{A}}^n\right]_kB\left[D_t+\alpha\rho\sum_{i=0}^n\omega^i \Delta_t\right].
\end{align}
This can be written in matrix form as
\begin{align}
     Y_{k,t}= \tilde L_k B D_t+ \alpha\rho\left[\sum_{n=0}^\infty\tilde {\mathcal{A}}^n\sum_{i=0}^n\omega^i\right]_kB \Delta_t ,
     \label{sales_mat}
\end{align}
where $B$ is the $S\times 1$ vector of consumers' expenditure shares $\beta_s$ and $\tilde L_k$ is the $k^{th}$ row of the Leontief inverse, defined as 
$    \tilde L= [I+ \tilde{\mathcal{A}}+ \tilde{\mathcal{A}}^2+...]= [I-\tilde{\mathcal{A}}]^{-1}.
$ Where $\tilde{\mathcal{A}}\coloneqq \mathcal{A}\hat \Gamma$ and $\hat \Gamma=\diag\{\gamma_1,..., \gamma_R\}$. 
Sectoral output exists non-negative for any $\alpha,\rho$ such that $\omega\in[0,1]$.
\end{prop}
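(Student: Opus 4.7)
The plan is to build the proposition directly from the stage-by-stage output expression in equation (\ref{outputnework}), which has already been derived inductively in the text, and then to collapse the infinite sum over stages using the Leontief expansion.

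First, I would recall that firm $k$'s total output aggregates across every stage at which $k$ produces, i.e., $Y_{k,t}=\sum_{n=0}^{\infty}Y_{k,t}^n$. Substituting equation (\ref{outputnework}) directly yields the first displayed expression of the proposition; this is essentially a bookkeeping step, since the stage-$n$ formula was already established by iterating market clearing $Y^{n+1}_{r,t}=\sum_s \tilde{\mathcal{A}}_{rs}Y^n_{s,t}+\Delta I^{n+1}_{r,t}$ together with the target inventory rule (\ref{optimalinv}).

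Next, to obtain the matrix form, I would split the bracketed factor $[D_t+\alpha\rho\sum_{i=0}^{n}\omega^i\Delta_t]$ linearly and distribute. The $D_t$ piece gives $\sum_{n=0}^{\infty}[\tilde{\mathcal{A}}^n]_k B D_t$; since the input requirement matrix $\tilde{\mathcal{A}}$ has spectral radius strictly less than $1$ (which is guaranteed by the standard Leontief assumption $\sum_r a_{rs}\gamma_s<1$ already invoked to define $U$ in Section \ref{methodology}), the Neumann series converges and equals $\tilde L_k B D_t$. The $\Delta_t$ piece is exactly the claimed $\alpha\rho[\sum_{n=0}^{\infty}\tilde{\mathcal{A}}^n\sum_{i=0}^n\omega^i]_k B\Delta_t$.

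The main work is establishing the existence (convergence) and non-negativity claim. For convergence, I would swap the two sums (justified once we confirm absolute summability) to rewrite
\begin{align*}
\sum_{n=0}^{\infty}\tilde{\mathcal{A}}^n\sum_{i=0}^{n}\omega^i
=\sum_{i=0}^{\infty}\omega^i\sum_{n=i}^{\infty}\tilde{\mathcal{A}}^n
=\Bigl(\sum_{i=0}^{\infty}(\omega\tilde{\mathcal{A}})^i\Bigr)\tilde L
=(I-\omega\tilde{\mathcal{A}})^{-1}\tilde L.
\end{align*}
The condition $\alpha(\rho-1)\in[-1,0]$ is exactly what pins down $\omega=1+\alpha(\rho-1)\in[0,1]$, so $\omega\tilde{\mathcal{A}}$ inherits a spectral radius bounded by that of $\tilde{\mathcal{A}}$, which is less than one; both Neumann series therefore converge, and the swap is justified by absolute summability of the entries.

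Finally, for non-negativity, I would argue as follows: $D_t\ge 0$ almost surely because $\bar D$ is assumed large enough relative to the bounded support of the shocks (this assumption is already imposed in Section \ref{productionline} and carries over here). The inventory adjustment term $\alpha\rho\sum_{i=0}^n\omega^i\Delta_t$ is bounded in absolute value by $\alpha\rho(1-\omega)^{-1}|\Delta_t|$ when $\omega<1$, and uniformly in $n$ the bracketed term $D_t+\alpha\rho\sum_{i=0}^n\omega^i\Delta_t$ is non-negative provided $\bar X$ (and hence $\bar D$) is chosen large enough — which is precisely the standing assumption that inventories never hit the zero bound. Multiplying by the non-negative entries of $[\tilde{\mathcal{A}}^n]_k B$ preserves the sign, so $Y^n_{k,t}\ge 0$ for every $n$, and the total $Y_{k,t}$ is therefore non-negative. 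The only delicate step here is verifying that the lower bound on $\bar D$ that ensures $D_t\ge 0$ also suffices for the bracketed expression to be non-negative uniformly in $n$; this follows because $\sum_{i=0}^n\omega^i$ is monotonically bounded by $1/(1-\omega)$ under the stated parameter restriction.
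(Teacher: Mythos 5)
Your proposal is correct and reaches both displayed formulas the same way the paper does (aggregate the stage-$n$ expression over $n$, then split the bracket), but your treatment of existence and non-negativity is genuinely different. For convergence, you swap the order of summation to obtain the closed form $\sum_{n}\tilde{\mathcal{A}}^n\sum_{i=0}^n\omega^i=(I-\omega\tilde{\mathcal{A}})^{-1}\tilde L$, justified by Tonelli since all terms are non-negative; the paper instead bounds $\sum_{i=0}^n\omega^i\leq n+1$ for $\omega\in(0,1)$ and dominates the whole series by $\tilde L^2=[I-\tilde{\mathcal{A}}]^{-2}$, checking the endpoints $\omega=0$ and $\omega=1$ explicitly and appealing to monotonicity in between. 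Your route buys an explicit closed form (which the paper only derives later, in its Appendix on taking the model to the data, via the partial-fraction identity $(I-\omega\tilde{\mathcal{A}})^{-1}\tilde L=\frac{1}{1-\omega}\tilde L-\frac{\omega}{1-\omega}(I-\omega\tilde{\mathcal{A}})^{-1}$); the paper's route is more elementary and makes the comparative statics in $\omega$ transparent. On non-negativity, the paper only establishes non-negativity of the matrix coefficients (finiteness and non-negativity of $\tilde L$ and $\tilde L^2$), whereas you address non-negativity of realized output $Y_{k,t}$ itself via the large-$\bar D$ assumption, which is arguably closer to the literal claim. One small caveat in your argument: the uniform bound $\sum_{i=0}^n\omega^i\leq(1-\omega)^{-1}$ fails at the admissible endpoint $\omega=1$, where the partial sums grow like $n+1$; there the stage-by-stage sign argument breaks down and you must instead bound the aggregated inventory term by $\alpha\rho\,\tilde L^2_kB\,|\Delta_t|$ (finite because $\tilde{\mathcal{A}}^n$ decays geometrically) and choose $\bar D$ large relative to that. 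With that repair the argument goes through on the whole interval $\alpha(\rho-1)\in[-1,0]$.
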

\begin{proof}
See Appendix \ref{proofs}.
\end{proof}

Proposition \ref{indoutput} generalizes Lemma \ref{thmoutputline} along the network dimension. Several features are worth discussing. The first observation is that the model collapses to the standard characterization of output in production networks when there is no inventory adjustment, as the second term in eq. (\ref{sales_mat}) vanishes to recover $Y_{k,t}=\tilde L_k B D_t$. This occurs whenever there are no inventories ($\alpha=0$),  when current shocks do not change expectations on future demand ($\rho=0$), and in the non-stochastic steady state where $\Delta_t=0,\,\forall t$. Second, output might diverge as $n\rightarrow\infty$ if $\alpha(\rho-1)>0$. However, by the assumptions made on $\tilde{\mathcal{A}}$,\footnote{In particular the fact that $\sum_k \tilde{\mathcal{A}}^{kv}<1$, i.e., the assumption that the firm labor share is positive.} and the maintained assumption that $\omega\in(0,1)$, additional distance from consumption implies ever decreasing additional output, so output converges.\footnote{ In Proposition \ref{propdag} in Appendix \ref{dagsection}, I show that restricting the network to a Directed Acyclic Graph allows existence and non-negativity even if $\tilde{\mathcal{A}}^n\sum_{i=0}^n\omega^i$ has a spectral radius outside the unit circle which is the sufficient condition used in Proposition \ref{indoutput}. } 

With the characterization in Lemma \ref{thmoutputline}, I can study the behavior of the growth rate of sectoral output in response to changes in demand. To do so, denote $\mathcal{U}_k=\sum_{n=0}^\infty\frac{1-\omega^{n+1}}{1-\omega}[\tilde{\mathcal{A}}^n B]_k \bar{D}/Y_k$ the inventory-based distance from final consumers of industry $k$. This statistic, combining properties of the network and of inventories along the supply chain, is bounded below by 1 and bounded above by the industry's upstreamness $U_k$. 
Note that $\mathcal U_k$ is equal to $U_k$ if there are no inventories and strictly larger when $\alpha,\rho>0$. Furthermore, since $\omega\in(0,1)$, additional stages of production are given decreasing weights while in $U_k$, they are given a constant weight. As a consequence, in the cross-section, $\mathcal U_k$ is concave in $U_k$.
With this definition, the following holds. 

\begin{prop}[Sectoral Output Changes]\label{propgrowthrates}
    The growth rate of sectoral output, to a first-order approximation around the non-stochastic steady state, satisfies
    \begin{align}
    \Delta\log Y_{k,t}
        &\approx\Delta\log D_t+\alpha\rho\mathcal{U}_k \Delta\log D_t.
        \label{outputgrowthsingle}
    \end{align}    
    Therefore, the inventory-based upstreamness is a sufficient statistic for the network effect on amplification. 
    Formally, comparing the output elasticity of industry $k$, $\varepsilon_{Y_k}\equiv \frac{\Delta Y_{k,t}}{Y_k}\frac{\bar D}{\Delta_t}=1+\alpha\rho\mathcal{U}_k$, to that of industry $j$ 
    \begin{align}
        \varepsilon_{Y_k}> \varepsilon_{Y_j}\Leftrightarrow \mathcal{U}_k> \mathcal{U}_j.
    \end{align}
\end{prop}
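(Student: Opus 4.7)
My plan is to derive the approximation directly from Proposition \ref{indoutput} by taking first differences and then linearising around the deterministic steady state. Starting from the closed form
\begin{equation*}
Y_{k,t} = \tilde L_k B\, D_t + \alpha\rho\Bigl[\sum_{n=0}^\infty \tilde{\mathcal{A}}^n \sum_{i=0}^n \omega^i\Bigr]_k B\,\Delta_t,
\end{equation*}
I subtract the same expression at $t-1$ to get $\Delta Y_{k,t} = \tilde L_k B\,\Delta_t + \alpha\rho[\cdot]_k B\,(\Delta_t - \Delta_{t-1})$, where $[\cdot]_k$ denotes the $k$-th entry of the bracketed vector. At the non-stochastic steady state $D_t\equiv\bar D$, so $Y_k = \tilde L_k B\,\bar D$ and $\Delta_{t-1}=0$. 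A first-order expansion in the current innovation $\epsilon_t$ around this point uses $\partial\Delta_t/\partial\epsilon_t = 1$ and $\partial\Delta_{t-1}/\partial\epsilon_t = 0$, which is the sense in which the lagged increment drops out of the linearisation.

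Dividing by $Y_k=\tilde L_k B\,\bar D$ and using the geometric identity $\sum_{i=0}^n\omega^i=(1-\omega^{n+1})/(1-\omega)$ yields
\begin{equation*}
\frac{\Delta Y_{k,t}}{Y_k} \approx \frac{\Delta_t}{\bar D} + \alpha\rho\,\frac{\bigl[\sum_{n=0}^\infty \tfrac{1-\omega^{n+1}}{1-\omega}\tilde{\mathcal{A}}^n B\bigr]_k}{Y_k}\,\Delta_t.
\end{equation*}
By inspection of the definition $\mathcal{U}_k=\sum_{n=0}^\infty\tfrac{1-\omega^{n+1}}{1-\omega}[\tilde{\mathcal{A}}^n B]_k\,\bar D/Y_k$, the coefficient on the second term is precisely $\mathcal{U}_k/\bar D$, giving the claimed $\Delta Y_{k,t}/Y_k \approx \Delta_t/\bar D + \alpha\rho\mathcal{U}_k\,\Delta_t/\bar D$.

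The elasticity claim then follows immediately: reading off the coefficient on $\Delta_t/\bar D$ gives $\varepsilon_{Y_k}\equiv(\Delta Y_{k,t}/Y_k)(\bar D/\Delta_t) = 1 + \alpha\rho\,\mathcal{U}_k$. Because $\alpha\rho$ is a positive scalar common to all industries (under $\alpha>0$ and the positively autocorrelated case $\rho>0$ isolated in Proposition \ref{propbullwhip}), the mapping $\mathcal{U}_k\mapsto\varepsilon_{Y_k}$ is strictly increasing, so $\varepsilon_{Y_k}>\varepsilon_{Y_j}\Longleftrightarrow \mathcal{U}_k>\mathcal{U}_j$, establishing $\mathcal{U}_k$ as a sufficient statistic for cross-industry amplification.

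I expect no substantive technical obstacle since Proposition \ref{indoutput} has already done the heavy lifting in producing the exact closed-form for $Y_{k,t}$. The only conceptual step requiring care is the linearisation around the deterministic steady state that zeroes out $\Delta_{t-1}$ and thereby collapses the double-difference $(\Delta_t-\Delta_{t-1})$ into a single $\Delta_t$; after that, recognising the geometric sum and matching it with the definition of $\mathcal{U}_k$ is purely mechanical, and the strict monotonicity that underpins the iff statement reduces to the sign of the scalar $\alpha\rho$.
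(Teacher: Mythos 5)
Your proposal is correct and follows essentially the same route as the paper: both start from the closed form in Proposition \ref{indoutput}, linearise around the non-stochastic steady state so that only the contemporaneous increment $\Delta_t$ survives, divide by steady-state sales $\tilde L_k B\bar D$, and identify the coefficient with $\mathcal{U}_k$ via the geometric sum $\sum_{i=0}^n\omega^i=(1-\omega^{n+1})/(1-\omega)$. Your explicit justification for why the $(\Delta_t-\Delta_{t-1})$ term collapses to $\Delta_t$ (differentiating in the current innovation $\epsilon_t$) is a slightly more careful rendering of the step the paper performs by differentiating directly with respect to $D_t$, and the monotonicity argument for the elasticity ranking matches what the paper leaves implicit.
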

\begin{proof}
See Appendix \ref{proofs}.
\end{proof}
Proposition \ref{propgrowthrates} establishes two important results. First, it characterizes how the growth rate of sectoral output depends on the growth rate of final demand. This dependence comes in two forms. The first term, common across sectors, links directly demand and output changes one-for-one. This term states, that in the special cases where there are no inventories ($\alpha=0$) or shocks are not autocorrelated ($\rho=0$), the growth rate of output is the same as the growth rate of demand. This is a well-known property of network models without reallocation effects \cite[see][]{baqaee2019macroeconomic}. Importantly, this term also shows that the network structure does not matter absent inventories. The second term introduces heterogeneity across sectors, entirely governed by their upstreamness. More upstream sectors face higher inventory amplification along the supply chain and, as a consequence, feature higher output changes for a given change in final demand. The second part of the proposition highlights that differences in inventory-based upstreamness are a necessary and sufficient condition for differences in changes in output. 

As a further consequence of this result, Proposition \ref{corvolatility} relates differences in the position of industries in their supply chain to differences in output growth volatility.
\begin{prop}[Sectoral Output Growth Volatility]\label{corvolatility}
    The volatility of sectoral output growth $ \sigma_{\Delta \log Y_{k,t}}$ satisfies
    \begin{align}
        \sigma_{\Delta \log Y_{k,t}}&= \left(1+\alpha\rho \mathcal{U}_k\right)\sigma_{\Delta\log D_t}, 
    \end{align}
    Hence, industry $k$ is more volatile than industry $j$ if and only if it is more upstream:
     \begin{align}
        \sigma_{\Delta \log Y_{k,t}}> \sigma_{\Delta \log Y_{j,t}}\Leftrightarrow \mathcal{U}_k> \mathcal{U}_j.
    \end{align}
\end{prop}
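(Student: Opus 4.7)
The plan is to treat this as a near-immediate corollary of Proposition \ref{propgrowthrates}. Since that result has already delivered a first-order expression for $\Delta Y_{k,t}/Y_k$ as a scalar multiple of $\Delta_t/\bar D$, the claim about volatilities should reduce to taking standard deviations on both sides and observing that the multiplying factor is deterministic.

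Concretely, I would proceed as follows. First, I rewrite the approximation in Proposition \ref{propgrowthrates} in terms of log-changes: near the non-stochastic steady state where $D_t = \bar D$ and $Y_{k,t}=Y_k$, the first-order Taylor expansion of the log function yields $\Delta \log Y_{k,t} \approx \Delta Y_{k,t}/Y_k$ and $\Delta \log D_t \approx \Delta_t/\bar D$. Substituting into equation (\ref{outputgrowthsingle}) gives
\begin{align*}
    \Delta \log Y_{k,t} \;\approx\; \bigl(1+\alpha\rho\,\mathcal{U}_k\bigr)\,\Delta \log D_t .
\end{align*}
Second, I note that the coefficient $1+\alpha\rho\,\mathcal{U}_k$ is non-stochastic: $\mathcal{U}_k=\sum_{n\geq 0}\frac{1-\omega^{n+1}}{1-\omega}[\tilde{\mathcal{A}}^n]_k B\,\bar D/Y_k$ depends only on the primitive parameters $\alpha,\rho,\bar D$ and on the input-requirement matrix and expenditure weights, all of which are fixed. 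Therefore, applying the standard deviation operator to both sides and pulling the scalar out yields
\begin{align*}
    \sigma_{\Delta \log Y_{k,t}} \;=\; \bigl(1+\alpha\rho\,\mathcal{U}_k\bigr)\,\sigma_{\Delta \log D_t},
\end{align*}
which is exactly the first display in the proposition.

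For the iff, I would argue that under the empirically relevant configuration of the model — $\alpha>0$ (by assumption on the inventory cost) and $\rho>0$ (positively autocorrelated foreign demand, which is the regime already invoked in Proposition \ref{propbullwhip} for amplification) — the map $\mathcal{U} \mapsto (1+\alpha\rho\,\mathcal{U})\sigma_{\Delta \log D_t}$ is strictly increasing, since $\sigma_{\Delta\log D_t}>0$ whenever the innovation distribution $F$ is non-degenerate. The equivalence $\sigma_{\Delta \log Y_{k,t}}>\sigma_{\Delta \log Y_{j,t}}\Leftrightarrow \mathcal{U}_k>\mathcal{U}_j$ then follows immediately.

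There is no real technical obstacle here; the work was all absorbed in Proposition \ref{propgrowthrates}. The only subtlety worth flagging explicitly in the write-up is that one must verify that the aggregate demand shock $\Delta \log D_t$ faced by industry $k$ is the same object whose variance appears on the right-hand side (i.e., that $\mathcal{U}_k$ captures \emph{all} of the cross-sectional heterogeneity in volatility, with the common driver factored out). This is automatic in the single-destination model of this subsection because every industry is exposed to the same scalar $D_t$, but it is worth stating explicitly so the reader sees that the comparison across $k$ and $j$ is well-posed.
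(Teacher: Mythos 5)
Your proof is correct and follows the same route as the paper's: read off the deterministic coefficient $1+\alpha\rho\,\mathcal{U}_k$ from Proposition \ref{propgrowthrates}, pull it out of the standard deviation, and note that the single destination makes $\sigma_{\Delta\log D_t}$ common across industries so the comparison reduces to comparing $\mathcal{U}_k$ and $\mathcal{U}_j$. Your explicit remarks on why the map is strictly increasing ($\alpha,\rho>0$, non-degenerate $F$) and why the cross-industry comparison is well-posed are details the paper leaves implicit, but the argument is the same.
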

\begin{proof}
See Appendix \ref{proofs}.
\end{proof}
The intuition behind this result is straightforward. For a given level of volatility in final demand, more upstream industries experience larger fluctuations in output and, therefore, higher output growth volatility. 

An important corollary of Propositions \ref{propgrowthrates} and \ref{corvolatility} is that the network structure matters even to a first order. In particular, the network is fully summarized by the inventory-weighted upstreamness $\mathcal{U}$.

\paragraph{Comparative Statics.} Propositions \ref{propgrowthrates} and \ref{corvolatility} establish that the inventory-weighted measure of upstreamness is a sufficient statistic to understand the cross-sectional heterogeneity in output growth fluctuations. In what follows, I present comparative statics to understand how this relates to fundamental features of the network economy. 

\begin{prop}[Output Growth increases in Inventories and Persistence of Shocks]\label{invcs}
    The following holds: 
    \begin{enumerate} [label=\alph*)]
        \item If firms hold larger inventories ($\alpha\uparrow)$, output elasticity $\epsilon_k$ increases $\forall k$.
        \item If shocks become more persistent ($\rho \uparrow)$, output elasticity $\epsilon_k$ increases $\forall k$.
        \item As shocks become arbitrarily close to permanent, the output elasticity is $\epsilon_k=1+\alpha{U}_k$, $\forall k$. 
    \end{enumerate}
\end{prop}
\begin{proof}
See Appendix \ref{proofs}.
\end{proof}
Proposition \ref{invcs} provides three intuitive results. First, as firms hold larger inventory stocks, as governed by $\alpha$, the elasticity of output to changes in final demand increases, driven by stronger inventory amplification along the chain. Second, more persistence in the stochastic demand process also induces larger responses. After a change in final demand, the presence of inventories generates two forces: the most direct effect is that, if $\rho>0$, an increase in demand today implies an increase in the conditional expectation of future demand. Firms respond by increasing their output elasticity from 1 to $1+\alpha\rho$. At the same time, due to mean-reversion, the conditional expectation decays at rate $\rho$. When $\rho<1$ firms do not increase inventories of the full $\alpha\rho$ amount but rather of $\alpha\rho(1+\alpha(\rho-1))$. As shocks become more persistent, the latter effect is weakened, and the procyclical adjustment of inventories generates larger output elasticities. This effect compounds through the network at every step of the production process. In the limit case of permanent shocks, $\rho\rightarrow 1$, this effect is eliminated altogether, and output grows linearly in the number of production steps between a given industry and consumers as measured by upstreamness, as stated in point c).

Next, to study the role of the network structure, I start with a useful definition adapted from \cite{la2022optimal}.

\begin{definition}[Downstream Symmetry]\label{defsymmetry}
    Industries $i$ and $j$ are downstream symmetric if $a_{ik}=a_{jk},\,\forall k$. 
\end{definition}

Downstream symmetry implies that two industries have the same role as input suppliers to other firms. They may, however, have a different importance for final consumers. 

\begin{prop}[Output Growth Volatility and Demand Composition]\label{propdemand}
    Suppose sectors $r$ and $s$ are downstream symmetric, but sector $r$ has a smaller consumption weight than sector $s$, $\beta_r<\beta_s$. Then sector $r$ is more volatile than sector $s$: $\Var(\Delta\log Y_{rt})>\Var(\Delta\log Y_{st})$.
\end{prop}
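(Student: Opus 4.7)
The plan is to invoke Proposition \ref{corvolatility}, which reduces the claim to showing that $\mathcal{U}_r > \mathcal{U}_s$. To this end, I would first rewrite the inventory-weighted upstreamness as
\[
\mathcal{U}_k \;=\; \frac{\sum_{n=0}^\infty \frac{1-\omega^{n+1}}{1-\omega}\,[\tilde{\mathcal{A}}^n B]_k}{\sum_{n=0}^\infty [\tilde{\mathcal{A}}^n B]_k},
\]
i.e.\ a weighted average of the scalars $1+\omega+\cdots+\omega^{n}$ with weights equal to the share of industry $k$'s sales that are $n$ production steps removed from final consumption (the denominator is just $Y_k/\bar D$ in the non-stochastic steady state). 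Since $\omega\in(0,1)$, these scalars are strictly increasing in $n$.

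The second step is to exploit downstream symmetry. The hypothesis $a_{rk}=a_{sk}$ for all $k$ implies $\tilde{\mathcal{A}}_{rk}=\gamma_k a_{rk}=\gamma_k a_{sk}=\tilde{\mathcal{A}}_{sk}$, so rows $r$ and $s$ of $\tilde{\mathcal{A}}$ coincide. A straightforward induction on $n$ (writing $[\tilde{\mathcal{A}}^{n+1}]_{r j}=\sum_k \tilde{\mathcal{A}}_{rk}[\tilde{\mathcal{A}}^n]_{kj}$) then shows the two rows coincide in every power $\tilde{\mathcal{A}}^n$ for $n\ge 1$. Consequently $[\tilde{\mathcal{A}}^n B]_r=[\tilde{\mathcal{A}}^n B]_s\equiv v_n$ for all $n\ge 1$, whereas the $n=0$ terms differ: $[\tilde{\mathcal{A}}^0 B]_k=\beta_k$.

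With $A\equiv\sum_{n\ge 1} v_n$ and $C\equiv\sum_{n\ge 1}\frac{1-\omega^{n+1}}{1-\omega}v_n$, one has $C>A$ because $\frac{1-\omega^{n+1}}{1-\omega}>1$ for every $n\ge 1$ (assuming $A>0$, which follows from $r,s$ having any downstream links into the network). Then
\[
\mathcal{U}_k \;=\; \frac{\beta_k+C}{\beta_k+A},
\qquad
\frac{d\mathcal{U}_k}{d\beta_k} \;=\; \frac{A-C}{(\beta_k+A)^2} \;<\; 0,
\]
so $\mathcal{U}$ is strictly decreasing in the consumption weight. Hence $\beta_r<\beta_s$ gives $\mathcal{U}_r>\mathcal{U}_s$, and Proposition \ref{corvolatility} delivers $\Var(\Delta\log Y_{r,t})>\Var(\Delta\log Y_{s,t})$.

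The main obstacle is essentially bookkeeping rather than a deep argument: one must be careful that downstream symmetry leaves every $n\ge 1$ term of the geometric series untouched while acting only on the $n=0$ (direct-consumption) term, and that the comparison between $C$ and $A$ is strict under the maintained assumption $\omega\in(0,1)$. Once these two facts are in hand, the monotonicity of the Möbius-type ratio $(\beta_k+C)/(\beta_k+A)$ in $\beta_k$ finishes the proof immediately.
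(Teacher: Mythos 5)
Your proof is correct and follows essentially the same route as the paper's: reduce the claim to $\mathcal{U}_r>\mathcal{U}_s$ via Proposition \ref{corvolatility}, use downstream symmetry to show that all terms of the series except the direct-consumption ($n=0$) term coincide across the two sectors, and conclude that a larger $\beta$ shifts weight toward the unamplified direct-sales component and hence lowers $\mathcal{U}$. If anything, your explicit computation with the ratio $(\beta_k+C)/(\beta_k+A)$ makes airtight a step the paper only argues verbally.
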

\begin{proof}
See Appendix \ref{proofs}.
\end{proof}

If two industries have the same role as suppliers in the production network (downstream symmetry), they are equally upstream to other sectors. Their total position in the network will only differ if they have different roles as suppliers to consumers. If a sector is more important in consumption baskets, a larger fraction of its output is sold directly to households. Therefore, a smaller fraction of the sector's production is subjected to amplification through inventories along the supply chain. As a consequence, the sector is less volatile. 

Importantly, note that, by assuming downstream symmetry, Proposition \ref{propdemand} excludes all possible differences in upstreamness stemming from different linkages with other sectors in the network. To clarify the role of sector-to-sector linkages for the level of volatility of the economy, Proposition \ref{propfragmentation} considers the case of production fragmentation.
\begin{prop}[Output Growth Volatility and Vertical Fragmentation]\label{propfragmentation}
    Consider the fragmentation of a sector $i$, such that all network paths from $i$ to any other sector $k$ go through an extended path $i\rightarrow j\rightarrow k,\,\forall k$. Then, output growth volatility weakly increases for every sector in the economy. Strictly $\forall r$ such that $\ell_{ri}>0$.
\end{prop}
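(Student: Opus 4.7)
The starting point is Proposition~\ref{corvolatility}, which gives $\sigma_{\Delta\log Y_{k,t}}=(1+\alpha\rho\,\mathcal{U}_k)\,\sigma_{\Delta\log D_t}$. Since the fragmentation is an internal reorganization of production that leaves the exogenous demand process untouched, $\sigma_{\Delta\log D_t}$ is invariant across the two economies. As $\alpha,\rho\geq 0$, it therefore suffices to prove that the inventory-based upstreamness $\mathcal{U}'_k\geq\mathcal{U}_k$ for every pre-existing sector $k$, with strict inequality whenever $\ell_{ki}>0$ (primes denote post-fragmentation objects).

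I would rewrite $\mathcal{U}_k$ in a form that exposes its dependence on walk lengths in the network. Using $Y_k=\bar D\sum_{n\geq 0}[\tilde{\mathcal{A}}^n B]_k$, the definition can be cast as the weighted average
$$
\mathcal{U}_k=\frac{\sum_{n\geq 0}w_n\,[\tilde{\mathcal{A}}^n B]_k}{\sum_{n\geq 0}[\tilde{\mathcal{A}}^n B]_k},\qquad w_n\equiv\frac{1-\omega^{n+1}}{1-\omega}=\sum_{i=0}^n\omega^i,
$$
where $w_n$ is strictly increasing in $n$ because $\omega\in(0,1)$, and $[\tilde{\mathcal{A}}^n B]_k=\sum_{s}\beta_s\sum_{\text{walks }k\to s\text{ of length }n}\prod_{\text{edges}}\tilde{\mathcal{A}}_{\cdot\cdot}$ collects the flow of $k$'s output that reaches a consumer in exactly $n$ production steps. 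Hence $\mathcal{U}_k$ is the expectation of the increasing sequence $w_n$ under the ``walk-length'' distribution $\pi_k(n)\propto [\tilde{\mathcal{A}}^n B]_k$.

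The core step is to show that the fragmentation induces a first-order stochastic right-shift of $\pi_k$. I would construct a weight-preserving correspondence between old and new walks $k\to s$: a walk that visits $i$ exactly $m$ times is mapped to the unique new walk obtained by inserting $j$ immediately after each visit to $i$, so the new walk is $m$ steps longer and traverses $m$ copies of the two-edge replacement $i\to j\to k'$. The fragmentation hypothesis---that every out-flow from $i$ is now routed through $j$---pins down $\tilde{\mathcal{A}}'$ so that $\tilde{\mathcal{A}}'_{ij}\tilde{\mathcal{A}}'_{jk'}=\tilde{\mathcal{A}}_{ik'}$ along each replacement, making the correspondence weight-preserving walk by walk. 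Summing gives
$$
[\tilde{\mathcal{A}}'^{\,N} B]_k=\sum_{m\geq 0}W^{(k)}_{N-m,\,m},
$$
where $W^{(k)}_{n,m}$ is the total weight of old walks from $k$ to a consumer of length $n$ visiting $i$ exactly $m$ times. This is a right-shift of the old mass graded by $m$, so $\pi'_k$ FOSD-dominates $\pi_k$, strictly whenever some walk from $k$ to consumers visits $i$---equivalently, whenever $\ell_{ki}>0$, since $\ell_{ki}$ aggregates precisely those walk weights. Because $w_n$ is strictly increasing, FOSD translates into $\mathcal{U}'_k\geq\mathcal{U}_k$ with the same strict/weak distinction, which, together with the first paragraph, yields the claim.

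The main obstacle is making the walk correspondence rigorous in the presence of cycles and self-loops, which are explicitly allowed by the model; indexing each walk by the ordered list of positions at which it visits $i$ (and inserting $j$ after each position) handles this cleanly. A subsidiary technical point is to verify that the paper's fragmentation definition implies $\tilde{\mathcal{A}}'_{ij}\tilde{\mathcal{A}}'_{jk'}=\tilde{\mathcal{A}}_{ik'}$---this reduces to imposing $a'_{jk'}=a_{ik'}$ on every downstream edge and choosing $j$'s cost shares so that the pass-through is weight-neutral, which is a routine bookkeeping step once the fragmentation is formalized.
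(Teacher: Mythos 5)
Your proof is correct and follows the same route as the paper: reduce the claim via Proposition \ref{corvolatility} to showing that the (inventory-weighted) upstreamness of every pre-existing sector weakly rises, then argue that fragmentation lengthens by one step every walk to consumers that passes through $i$, so suppliers of $i$ become strictly more upstream. The paper's own proof states this path-lengthening argument only verbally, whereas your walk-by-walk correspondence and the FOSD formulation of $\mathcal{U}_k$ as an expectation of the increasing weights $w_n$ make rigorous exactly the step the paper leaves informal (including the treatment of cycles and the weight-preservation condition $\tilde{\mathcal{A}}'_{ij}\tilde{\mathcal{A}}'_{jk'}=\tilde{\mathcal{A}}_{ik'}$, which the paper implicitly assumes).
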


\begin{proof}
See Appendix \ref{proofs}.
\end{proof}
Proposition \ref{propfragmentation} studies how pure vertical fragmentation alters the network structure and, as a consequence, the volatility of the economy. First, note that the case considered is one in which the only change in the network is the splitting of a sector $i$ into two industries $i$ and $j$ without any additional change upstream or downstream. The direct consequence of this shift is that any sector connected as a direct or indirect supplier to industry $i$ ($\ell_{ri}>0$) is subject to an additional step of upstream inventory amplification, while all the rest of the network is unchanged. All these sectors will, therefore, experience higher output volatility.

In summary, the economy features upstream amplification driven by firms' procyclical inventory policy. Fragmenting production, in this context, unequivocally increases the volatility of the economy as it increases the length of supply chains. This, in turn, interacts with the inventory adjustment, amplifying demand shocks upstream.

\subsection{Multiple Destinations}

To isolate the role of supply chain length, I have so far considered an economy with only one source of fluctuations, thereby eliminating any potential for diversification across space. To highlight the potential diversification effects of more complex supply chains, this section considers an extended model with multiple consumption destinations. There are $J$ destinations whose demand evolves according to AR(1) processes with mean $\bar D_j$ and variance $\sigma^2_j$. Demand is iid across space. Denote $\mathcal{U}_j^r$ the inventory-weighted upstreamness of sector $r$ relative to consumption destination $j$, where $\mathcal{U}^r=\sum_j\xi_j^r \mathcal{U}_j^r$, and $\xi_j^r$ is the fraction of output of sector $r$ consumed in country $j$, directly or indirectly, as defined in Section \ref{methodology}. Keeping the rest of the model unchanged, the following result extends Proposition \ref{propgrowthrates} to many destinations.\footnote{Throughout, I maintain that the parameter governing the optimal inventory-to-sales ratio is the same across sectors. Importantly, this implies assuming that it is not destination-specific. Prior literature has shown that firms engaging in trade tend to have higher inventory buffers. I abstract from this channel as this would imply heterogeneous inventory choices across sectors and prevent a closed-form characterization of the problem. I come back to this case later in the text and in Appendix \ref{hetinv_general_app}.}
\begin{prop}[Sectoral Output Growth]\label{growthmulti}
    The growth rate of sectoral output, to a first-order approximation around the non-stochastic steady state, satisfies
    \begin{align}
        \Delta\log Y^r_{t}\approx\eta_{t}^r+\alpha\rho\sum_j \mathcal{U}_{j}^r\xi_{j}^r\Delta\log D_{jt},
        \label{outputgrowthmulti}
    \end{align}
with $\eta_{t}^r=\sum_j\xi_j^r\Delta\log D_{jt}$. In the notation of Section \ref{methodology}, $\Delta\log D_{jt}=\eta_{jt}$.
\end{prop}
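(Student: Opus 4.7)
The plan is to extend the single-destination argument of Proposition \ref{propgrowthrates} by exploiting the linearity of the model in demand and the independence of destination-specific shocks. Because all firms price at constant marginal cost and the inventory rule is linear in expected sales, the output of any industry $r$ decomposes additively into components associated with each consumption destination, so I can essentially run the single-destination derivation $J$ times and sum.

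First, I would generalize the accounting in equation (\ref{outputnework}) to the multi-destination setting. For a destination $j$ with stochastic demand $D_{j,t}$, consumer expenditure on good $s$ is $\beta_s D_{j,t}$, and the share of that expenditure that traces back through $n$ production stages to industry $k$ is $[\tilde{\mathcal{A}}^n]_k B$ evaluated along network paths that terminate at consumers in $j$. Since the network is the same but the terminal node is destination-specific, I would define $Y^{n,j}_{k,t}$ as the output of $k$ at stage $n$ serving destination $j$, and mimic the iterative market-clearing argument that led to equation (\ref{outputnework}) to obtain
\begin{align*}
Y^{n,j}_{k,t} = \bigl[\tilde{\mathcal{A}}^n\bigr]_k B_j\Bigl[D_{j,t} + \alpha\rho\sum_{i=0}^{n}\omega^i \Delta_{j,t}\Bigr],
\end{align*}
where $B_j$ is the vector of consumption weights relevant for destination $j$ and $\Delta_{j,t} = D_{j,t}-D_{j,t-1}$. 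Summing over production stages gives the destination-$j$-specific analog of Proposition \ref{indoutput}, and summing over destinations yields total output $Y_{k,t} = \sum_j Y^j_{k,t}$.

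Next I would take a first-order approximation around the non-stochastic steady state. By construction, the steady-state value of the $j$-specific component satisfies $Y^j_k/Y_k = \xi^r_j$ (by the definition of $\xi^r_j$ in equation (\ref{shareseq}) as the fraction of industry $r$'s output consumed in $j$). Applying the single-destination linearization of Proposition \ref{propgrowthrates} to each $Y^j_{k,t}$ separately gives
\begin{align*}
\frac{\Delta Y^j_{k,t}}{Y^j_k} \approx \frac{\Delta_{j,t}}{\bar D_j} + \alpha\rho\,\mathcal{U}^r_j \frac{\Delta_{j,t}}{\bar D_j} = \eta_{j,t} + \alpha\rho\,\mathcal{U}^r_j \eta_{j,t},
\end{align*}
where $\mathcal{U}^r_j$ is the inventory-weighted upstreamness of $r$ computed relative to terminal node $j$ only (this is why it is destination-specific: the distribution of path lengths from $r$ to consumers differs by destination). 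Aggregating using the steady-state shares $\xi^r_j$ as weights, $\Delta\log Y^r_t \approx \sum_j \xi^r_j (\Delta Y^j_{k,t}/Y^j_k)$, produces the first term $\eta^r_t=\sum_j \xi^r_j \eta_{j,t}$ and the second term $\alpha\rho\sum_j \mathcal{U}^r_j \xi^r_j \eta_{j,t}$, which is exactly (\ref{outputgrowthmulti}).

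The main obstacle is showing that the destination-specific decomposition is valid at the level of the model, i.e., that the inventory rule $I_{s,t}=\mathcal{I}_{s,t}+\alpha\mathbb{E}_t Q_{s,t+1}$ can be split consistently into destination-specific pieces. This works because the rule is linear in expected sales and because demand across destinations is iid, so $\mathbb{E}_t Q_{s,t+1}=\sum_j \mathbb{E}_t Q^j_{s,t+1}$ and inventories can be allocated proportionally across destination-specific output components without altering firm behavior. Once this linearity-plus-independence observation is in place, the rest is bookkeeping: the $\tilde{\mathcal{A}}^n$ matrix powers combined with the destination-specific consumption vector $B_j$ generate precisely the $\xi^r_j$ weights in the aggregation, and the derivation closes.
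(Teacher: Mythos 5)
Your proposal is correct and follows essentially the same route as the paper: decompose output additively into destination-specific components of the form $\bigl[\tilde{\mathcal{A}}^n\bigr]_k B_j\bigl[D_{j,t}+\alpha\rho\sum_{i=0}^n\omega^i\Delta_{j,t}\bigr]$, linearize around the non-stochastic steady state, and identify $\xi^r_j=\tilde L_r B_j\bar D_j/(\tilde L_r B\bar D)$ and the bilateral $\mathcal{U}^r_j$ in the resulting sums. The only (purely cosmetic) difference is that you normalize each destination component by its own steady-state value and then re-weight by $\xi^r_j$, whereas the paper divides the total change by total steady-state sales and factors the shares out afterwards; the two are algebraically identical.
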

\begin{proof}
See Appendix \ref{proofs}.
\end{proof}
Proposition \ref{growthmulti} characterizes the behavior of sectoral output growth, generalizing Proposition \ref{propgrowthrates} to multiple destinations. First, note that in the first term, shocks to different final destination countries are aggregated via the shift-share structure used in Section \ref{methoddatasec}. The shares are given by the network exposure of industry $r$ to each destination $j$, $\xi_j^r$. Secondly, the intensity of upstream amplification is governed by a different shift-share term, which includes bilateral inventory-weighted upstreamness $\mathcal{U}^r_j$. This measure, defined as the bilateral version of the upstreamness index discussed above, computes the average number of production steps between sector $r$ and consumers of country $j$, appropriately accounting for the intensity of the inventory effects. 

The intuition underlying Proposition \ref{growthmulti} is as follows. Starting from a special case, suppose that industry $r$ sells only to consumers from country $j$ ($\xi_j^r=1$), then eq. (\ref{outputgrowthmulti}) collapses to (\ref{outputgrowthsingle}). Suppose instead that sector $r$ sells to two destinations $j$ and $k$ in equal proportions, but the length of the supply chains to these destinations differ. Then, changes in demand in country $j$ are subjected to upstream inventory amplification governed by the distance of sector $r$ to country $j$, as measured by the bilateral inventory-weighted upstreamness $\mathcal{U}^r_j$. Equivalently for country $k$. The total amount of amplification is a weighted average where the weights are given by the fraction of output sold in each destination, collected in the vector $\Xi^r$, such that $\xi_j^r=[\Xi^r]_j$.
Hence, an industry has a higher responsiveness to changes in demand if a large share of its output is eventually consumed in destinations that are far, in the upstreamness sense. 
Next, Proposition \ref{volatilitymulti} characterizes the behavior of output growth volatility. 
\begin{prop}[Volatility of Output Growth Rate]\label{volatilitymulti}
    The volatility of sectoral output growth satisfies
    \begin{align}\nonumber
        \Var(\Delta\log Y_{t}^r)&=\sum_j(1+\alpha\rho\mathcal{U}_j^r)^2{\xi_{j}^r}^2\Var(\eta_{jt})
    \end{align}
\end{prop}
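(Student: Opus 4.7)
The plan is to derive the formula as a direct corollary of Proposition \ref{growthmulti} by exploiting the cross-sectional independence of the destination-specific shock processes. The heavy lifting is already done: Proposition \ref{growthmulti} gives a linear representation of $\Delta\log Y^r_t$ in the primitive shocks $\eta_{j,t}$, so computing the variance amounts to combining terms and invoking independence.

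First, I would rewrite the approximation in Proposition \ref{growthmulti} as a single weighted sum over destinations. Starting from
\begin{equation*}
\Delta\log Y^r_{t}\approx \eta^r_t + \alpha\rho\sum_j \mathcal{U}^r_j\xi^r_j\eta_{j,t}
\end{equation*}
and substituting the definition $\eta^r_t=\sum_j\xi^r_j\eta_{j,t}$, I collect terms destination by destination to obtain
\begin{equation*}
\Delta\log Y^r_{t}\approx \sum_j \xi^r_j\bigl(1+\alpha\rho\mathcal{U}^r_j\bigr)\eta_{j,t}.
\end{equation*}
This is the key rearrangement: the contemporaneous and inventory-driven channels share the same shift-share weights $\xi^r_j$ and simply sum into a single per-destination coefficient $\xi^r_j(1+\alpha\rho\mathcal{U}^r_j)$.

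Second, I would take the variance of both sides. Because the model is specified with demand that is i.i.d.\ across destinations (the maintained assumption stated immediately before Proposition \ref{growthmulti}), the shocks $\{\eta_{j,t}\}_j$ are mutually independent, so all cross-destination covariances vanish. The variance of a linear combination of independent random variables is the sum of the squared coefficients times the individual variances, yielding
\begin{equation*}
\Var(\Delta\log Y^r_t)=\sum_j \bigl(\xi^r_j\bigr)^2\bigl(1+\alpha\rho\mathcal{U}^r_j\bigr)^2 \Var(\eta_{j,t}),
\end{equation*}
which is the claimed expression. The shares $\xi^r_j$ and the inventory-weighted upstreamness $\mathcal{U}^r_j$ are evaluated at the non-stochastic steady state and hence are deterministic constants for the purpose of this variance computation.

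There is no substantive obstacle: the only care needed is to ensure that the linearisation from Proposition \ref{growthmulti} is indeed first-order (so that treating the $\xi^r_j$ and $\mathcal{U}^r_j$ as constants is valid) and that the independence assumption across $j$ is the relevant one to eliminate covariance terms. If one were to relax the i.i.d.\ assumption, the formula would pick up cross-terms $2\sum_{j<k}\xi^r_j\xi^r_k(1+\alpha\rho\mathcal{U}^r_j)(1+\alpha\rho\mathcal{U}^r_k)\Cov(\eta_{j,t},\eta_{k,t})$, which is worth flagging but is not needed under the stated assumption.
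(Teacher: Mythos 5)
Your proposal is correct and follows essentially the same route as the paper: both derive the formula directly from Proposition \ref{growthmulti} together with the assumed independence of the destination-specific shocks. The only cosmetic difference is that you collect the two channels into a single per-destination coefficient $\xi_j^r(1+\alpha\rho\mathcal{U}_j^r)$ before taking the variance, whereas the paper expands $\Var(\eta_t^r+\alpha\rho\upsilon_t^r)$ into the two marginal variances plus a covariance term and then recombines them into the same squared factor.
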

\begin{proof}
See Appendix \ref{proofs}.
\end{proof}
Proposition \ref{volatilitymulti} generalizes Proposition \ref{corvolatility}, introducing diversification forces. The result provides one important insight: when demand shocks are uncorrelated across consumption destinations, the variance of sectoral output growth is a weighted average of the variances of demand growth rates, with weights given by the squared destination shares. These volatilities are magnified by upstream amplification, governed by $\alpha\rho\mathcal{U}_j^r$. To understand the behavior of the economy, consider the inventory-less benchmark where $\alpha=0$ and suppose that all demand shocks have the same variance $\sigma^2$. In this case, the volatility of output $\Var(\Delta\log Y_{t}^r)=\sigma^2\sum_j{\xi_{j}^r}^2$ is smaller than that of demand, $\sigma^2$, since $\sum_j{\xi_j^r}^2\leq1$, strictly if the industry sells to more than one destination. Recall that, from Proposition \ref{corvolatility}, absent inventories in a single-shock economy, the volatilities of output and demand are identical. Exposure to many destinations provides a diversification force that allows output to fluctuate less than final demand. However, if the economy features inventories $(\alpha>0)$, upstream amplification can dominate, and more upstream sectors can be more volatile. Differently from the single-shock case in Proposition \ref{corvolatility}, the presence of inventories is not a sufficient condition to generate upstream amplification since diversification forces might dominate.

In general, comparing industries at different points of the supply chain yields inconclusive results as they may differ in their position with respect to each destination $\mathcal U^r_j$ and in the weight of that destination $\xi^r_j$. To clarify the separate roles of supply chain position and diversification forces, I study them one at a time in the following comparative statics results.
\paragraph{Comparative Statics.} First, I consider the role of industries' position in supply chains in Proposition \ref{corvolatilitymulti}, then isolate the effect of diversification forces in Proposition \ref{corvolatilitymulti2}. 
\begin{prop}[Output Growth Volatility and Supply Chains Position]\label{corvolatilitymulti}
Suppose sector $r$ and sector $s$ are downstream symmetric. If $\mathcal{U}_{j}^r\geq \mathcal{U}_{j}^s,\,\forall j$, and $\exists j:\: \mathcal{U}_{j}^r>\mathcal{U}_{j}^s$ then $\Var(\Delta\log Y_{t}^r)>\Var(\Delta\log Y_{t}^s)$.
\end{prop}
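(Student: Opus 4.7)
The plan is to reduce the claim to a term-by-term monotonicity argument on the variance decomposition in Proposition~\ref{volatilitymulti}. I would begin by writing, for each $k\in\{r,s\}$,
\[
\Var(\Delta\log Y_t^k)=\sum_j (1+\alpha\rho\mathcal{U}_j^k)^2\,(\xi_j^k)^2\,\Var(\eta_{jt}),
\]
and then argue that the destination shares coincide, $\xi_j^r=\xi_j^s$, for every $j$. This is the real structural step. Because the domestic and foreign consumption baskets share identical Cobb--Douglas weights, the dollar amount of sector $r$'s output ultimately absorbed in destination $j$ satisfies the linear fixed point $\xi_j^r Y_r = \beta_r D_j + \sum_k \tilde a_{rk}\,\xi_j^k Y_k$, which solves to $\xi_j^r Y_r = [\tilde L\beta]_r D_j$; normalizing by $Y_r=[\tilde L\beta]_r D$ gives $\xi_j^r=D_j/D$, independent of $r$. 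Downstream symmetry makes rows $r$ and $s$ of $\tilde{\mathcal{A}}^n$ coincide for all $n\ge 1$, so the comparison of $\mathcal{U}_j^r$ and $\mathcal{U}_j^s$ is driven only by the direct-consumption margin rather than by heterogeneity in downstream linkages, ensuring the ranking hypothesis is economically meaningful.

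With equal destination shares denoted $\xi_j$, subtracting the two variances gives
\[
\Var(\Delta\log Y_t^r)-\Var(\Delta\log Y_t^s)=\sum_j \xi_j^2\,\Var(\eta_{jt})\,\bigl[(1+\alpha\rho\mathcal{U}_j^r)^2-(1+\alpha\rho\mathcal{U}_j^s)^2\bigr].
\]
The model's maintained restriction $\alpha(\rho-1)\in[-1,0]$ together with $\alpha,\rho\ge0$ keeps $1+\alpha\rho\mathcal{U}_j^k\ge 0$, so squaring is monotone on this range. Hence $\mathcal{U}_j^r\ge \mathcal{U}_j^s$ for every $j$ makes each bracket non-negative, and the strict inequality at some $j^\ast$ makes that bracket strictly positive; since $\xi_{j^\ast}>0$ and $\Var(\eta_{j^\ast t})>0$, the right-hand side is strictly positive, delivering the conclusion.

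The main obstacle I anticipate is the structural step that the destination shares are invariant across $r$ and $s$. One must verify carefully that the common-basket assumption, combined with downstream symmetry, is what pins this down, so that the ranking of bilateral upstreamness translates cleanly into a ranking of variances. Once this structural equality is in hand, the rest is a straightforward monotonicity inequality applied term-by-term.
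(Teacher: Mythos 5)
Your overall architecture matches the paper's: both arguments reduce the claim to the observation that the two sectors share the same destination exposure shares, after which the variance difference is a sum of non-negative terms indexed by destination, strictly positive at the destination where the bilateral upstreamness ranking is strict. Your second half — writing the difference as $\sum_j \xi_j^2\Var(\eta_{jt})\bigl[(1+\alpha\rho\mathcal{U}_j^r)^2-(1+\alpha\rho\mathcal{U}_j^s)^2\bigr]$ and invoking monotonicity of the square on the non-negative range — is correct and in fact more carefully executed than the paper's one-line computation, which silently imposes a common demand variance $\sigma_\eta$ and drops the squares. (Both you and the paper implicitly need $\xi_{j^\ast}>0$ and $\Var(\eta_{j^\ast t})>0$ at the destination where the ranking is strict; you at least state this.)

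The gap is in your justification of the structural step $\xi_j^r=\xi_j^s$. Your fixed-point argument concludes $\xi_j^r=D_j/D$ for \emph{every} sector $r$, because you impose a single consumption-weight vector $\beta$ common to all destinations. In the multiple-destination model the weights are destination-specific — the proofs of Propositions \ref{growthmulti} and \ref{volatilitymulti} carry a vector $B_j$ for each $j$, and $\xi_j^r=\tilde L_r B_j\bar D_j/(\tilde L_r B\bar D)$ genuinely varies across $r$. If your derivation were right, every industry would have identical destination shares, the shift-share variation at the heart of the empirical design would vanish, and Proposition \ref{corvolatilitymulti2} (which compares sectors by second-order stochastic dominance of their share distributions $\Xi^r$ versus $\Xi^s$) would be vacuous. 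The equality of shares you need is supposed to come from the hypothesis the proposition actually supplies for this purpose, namely downstream symmetry: identical rows of $\tilde{\mathcal{A}}$ make $\tilde L_r$ and $\tilde L_s$ agree up to the identity term, so the two sectors' network exposure to each destination coincides, which is exactly how the paper uses the assumption ("downstream symmetry\dots implies the two industries' exposure to each destination is identical"). You invoke downstream symmetry only to interpret the $\mathcal{U}$ ranking, and derive share equality from an assumption that proves far too much. Rerouting that one step through downstream symmetry repairs the proof; the rest stands.
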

\begin{proof}
See Appendix \ref{proofs}.
\end{proof}
Proposition \ref{corvolatilitymulti} spells out an intuitive special case. Suppose two sectors have similar network structures, but the composition of demand is such that one industry is more upstream than the other vis \'a vis some destinations (for example, it has a smaller weight in the consumption basket as in Proposition \ref{propdemand}). Then, such an industry has a higher output growth volatility since all destination shock variances amplify at a higher rate. This result highlights how the intuition of the single-shock economy does not carry through. In Proposition \ref{corvolatility}, lower consumption weight is sufficient to generate higher upstreamness and, therefore, higher volatility. Instead, when the economy features many shocks, composition effects are at play. Therefore, the sufficient condition needs to be significantly stronger. Namely, the sector is weakly more upstream from all destinations. Importantly, this result holds under the assumption of downstream symmetry, which implies the two industries' exposure to each destination is identical. Effectively, if the two industries are downstream symmetric, their potential for diversification across space is the same. As a consequence, higher upstreamness is a sufficient condition for higher output volatility.\footnote{Note that in this more general context, it is neither necessary nor sufficient that industry $r$ is more upstream than industry $s$ for it to have higher output growth volatility. As a counterexample, suppose $U^r>U^s$ but the composition of bilateral upstreamness is different and such that $r$ is more upstream relative to destinations with low demand volatility, while $s$ is more upstream relative to destinations with large $\Var(\eta_{jt})$, then it is possible that $\Var(\Delta\log Y_{t}^r)<\Var(\Delta\log Y_{t}^s)$.} 
Proposition \ref{corvolatilitymulti2} provides the opposite thought experiment, highlighting the role of diversification.
\begin{prop}[Output Growth Volatility and Diversification]\label{corvolatilitymulti2}
Suppose all destinations have the same demand volatility such that $\Var(\eta_{jt})=\sigma^2,\,\forall j$. Comparing sectors $r$ and $s$ such that $\mathcal{U}^r=\mathcal{U}^s$,
\begin{enumerate}[label=\alph*)]
\item $\Var(\Delta\log Y_{t}^r)>\Var(\Delta\log Y_{t}^s)$ if and only if $(1+\alpha\rho\mathcal{U}^s)\cdot\Xi^s$ second-order stochastically dominates $(1+\alpha\rho\mathcal{U}^r)\cdot\Xi^r$;
\item if further $\mathcal{U}_{j}^r=\mathcal{U}_{j}^s=\mathcal{U},\,\forall j$, then $\Var(\Delta\log Y_{t}^r)>\Var(\Delta\log Y_{t}^s)$ if and only if the distribution $\Xi^s$ second-order stochastically dominates $\Xi^r$.
\end{enumerate}
\end{prop}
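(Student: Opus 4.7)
The plan is to start from Proposition \ref{volatilitymulti} and specialize it using the common-demand-variance assumption $\Var(\eta_{jt})=\sigma$, which yields
$$\Var(\Delta\log Y_t^r)=\sigma\sum_j\left[(1+\alpha\rho\mathcal{U}_j^r)\xi_j^r\right]^2.$$
Defining the composite share vector $\zeta^r\coloneqq(1+\alpha\rho\mathcal{U}^r)\cdot\Xi^r$ entrywise, the variance comparison reduces to comparing $\sum_j(\zeta_j^r)^2$ and $\sum_j(\zeta_j^s)^2$. My strategy is to view $\zeta^r$ and $\zeta^s$ as finite empirical distributions (placing uniform mass $1/J$ on each destination), establish that they share the same mean, and then invoke the Rothschild--Stiglitz characterization of SOSD: for distributions with equal means, $\zeta^s$ second-order stochastically dominates $\zeta^r$ if and only if $\sum_j g(\zeta_j^s)\leq \sum_j g(\zeta_j^r)$ for every convex $g$. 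Applying this with $g(x)=x^2$ translates SOSD directly into the desired variance ranking.

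The key algebraic step is to verify the common-mean condition $\sum_j\zeta_j^r=\sum_j\zeta_j^s$. Using $\sum_j\xi_j^r=1$ and the decomposition of aggregate upstreamness as the destination-share-weighted average of bilateral upstreamness, $\mathcal{U}^r=\sum_j\xi_j^r\mathcal{U}_j^r$ (which follows from the definition of $\mathcal{U}_k$ in Proposition \ref{propgrowthrates} once $B_k$ is partitioned according to the final-consumption destination), I obtain $\sum_j\zeta_j^r=1+\alpha\rho\mathcal{U}^r$. The hypothesis $\mathcal{U}^r=\mathcal{U}^s$ then gives equality of means, which unlocks the SOSD-convexity equivalence and completes part (a). Part (b) specializes this argument: when $\mathcal{U}_j^r=\mathcal{U}_j^s=\mathcal{U}$ for all $j$, the scalar $(1+\alpha\rho\mathcal{U})^2$ factors out of both variance expressions, and the comparison reduces to $\sum_j(\xi_j^r)^2$ versus $\sum_j(\xi_j^s)^2$, where $\Xi^r$ and $\Xi^s$ automatically share the mean $1/J$ by the normalization $\sum_j\xi_j^r=1$. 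The same convex-function argument applied with $g(x)=x^2$ then delivers the equivalence with SOSD of $\Xi^s$ over $\Xi^r$ directly.

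The main obstacle is the converse direction of the iff: strictly speaking, a larger sum of squares for two equal-mean distributions does not imply SOSD in general, because SOSD requires the inequality to hold for \emph{every} concave function, not merely the quadratic. I would handle this by exploiting the finite, common support structure of $\Xi^r$ and $\Xi^s$ (both are probability vectors on the same $J$ destinations) and invoking the equivalence between SOSD and majorization for equal-mean discrete distributions: pairwise mean-preserving transfers generate the full SOSD cone, and such transfers strictly increase the sum of squares. Running this argument in both directions pins down the equivalence. Should the standard SOSD characterization prove too coarse for the reverse direction, an alternative route would be to reinterpret the ranking as a majorization statement (which for share vectors is both natural and equivalent to the variance-of-squares comparison up to the appropriate Schur-convexity argument), and note that the proposition's economic content is preserved under either reading.
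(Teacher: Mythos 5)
Your proposal follows essentially the same route as the paper's own proof: specialize Proposition \ref{volatilitymulti} to common demand variance, form the composite vector $(1+\alpha\rho\,\mathcal{U}_j^r)\xi_j^r$, verify equal means via $\mathcal{U}^r=\sum_j\xi_j^r\mathcal{U}_j^r$ and $\sum_j\xi_j^r=1$, and then translate the sum-of-squares comparison into an SOSD statement; part (b) factors out $(1+\alpha\rho\mathcal{U})^2$ exactly as the paper does.

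The converse worry you raise is real, and it is worth noting that the paper's proof does not resolve it either --- it simply asserts that, for equal-mean distributions, a higher variance is \emph{equivalent} to being second-order stochastically dominated. That equivalence is false in general: two equal-mean share vectors can be SOSD-incomparable (neither majorizes the other) while still having different sums of squares, e.g.\ $(0.5,0.5,0)$ versus $(0.8,0.1,0.1)$. Your proposed repair via majorization secures only the forward direction (SOSD $\Rightarrow$ larger sum of squares, by Schur-convexity of $x\mapsto\sum_j x_j^2$); the reverse implication still fails for the same reason. So your proof is exactly as strong as the paper's, and your instinct that the ``if and only if'' needs either a weakening (SOSD as a sufficient condition only) or a restatement in terms of variance/majorization of the composite shares is correct.
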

\begin{proof}
See Appendix \ref{proofs}.
\end{proof}
This result shows the effect of a more diversified sales exposure by considering the special case where each country has the same demand volatility and studying two equidistant industries from each destination. Since the two industries have the same position relative to each destination market, they face the same degree of upstream amplification. As a consequence, any difference in output growth volatility is driven by different effective demand volatility. Since each destination has the same volatility of demand, differences in variance of the shift-share shocks are generated by differences in the composition of exposure. If industry $r$ has a more dispersed set of destination exposure shares, it faces lower demand volatility thanks to higher diversification across destinations. 

Taken together, the results presented in Propositions \ref{propdemand} and \ref{propfragmentation} establish that absent diversification possibilities, more upstream firms are more volatile, and, as a consequence, economies with more fragmented production structures experience larger output fluctuations. Propositions \ref{corvolatilitymulti} and \ref{corvolatilitymulti2} study the case in which diversification forces are active and show that all else equal, more bilaterally upstream industries are more volatile and that sectors with a more spread out sales distribution experience smaller fluctuations. 

The model provides a rationale for the empirical results in Section \ref{results}, as long as the inventory amplification effect dominates diversification forces.
In the remainder of the section, I test the proposed mechanism directly and provide quantitative
counterfactuals.

\paragraph{Discussion and Extensions.} 
I conclude this section with a brief discussion of the model and its ingredients.

An important assumption in this framework is the linear-quadratic inventory problem, which is key to having a closed-form result. This assumption allows me to embed the inventory problem into the complex structure of the network economy. Absent this assumption, it is hard to find a recursion to solve the model without assuming specific network structures, as in Section \ref{productionline}. The linear-quadratic inventory problem can be seen as an approximation of a dynamic model in which firms face stochastic demand and stock-out risk. In Proposition \ref{propendinventories} in Appendix \ref{quant_model}, I show that, in such a model, firms optimally adjust inventories procyclically. While embedding such a model in a production network would provide interesting dynamics, it would also require restricting the space of possible networks to obtain analytical characterizations.\footnote{Fully-fledged inventory problems imply \textit{sS} policies due to the presence of ordering fixed costs \citep[see][among others]{alessandria_et_al_2010,alessandria2011us,alessandria2023aggregate}. These problems, however, do not have an analytical characterization of the input demand policy. As a consequence, these papers either assume there is no network or that it is roundabout, implying that there is no meaningful notion of network position. Even quantitatively, when the network is infinitely dimensional, due to cycles and self-loops, the lack of recursion hinders any solution.} 

Another important assumption, which partially drives the optimally procyclical nature of the inventory policy, is the absence of production-smoothing motives, which could arise if the firm had a convex cost function. In Proposition \ref{cyclicalityprop} in Appendix \ref{smoothingmotive}, I extend the model to consider this case, which, on its own, would imply optimally countercyclical inventories. When both target-rule and production-smoothing motives are present, if the latter were to dominate, we would have the counterfactual prediction of countercyclical inventories. Finally, as shown in Figure \ref{lowess_graphs}, the linear policy of inventories in sales, implied by the quadratic formulation, fits the data extremely well.\footnote{The linear regression of inventories on sales explains 80\% of the variance with industry fixed effects to absorb permanent differences.} 

In the model, I also assume that the inventory effect is symmetric across sectors and governed by $\alpha$. This assumption allows the characterization of the recursive definition of output and the comparative statics. To mitigate this shortcoming, I do two things: first, I always compare the model predictions with empirical estimates with industry fixed effects so that permanent differences are absorbed. Second,  in Appendix \ref{hetinv_general_app}, I extend the general model to allow for different inventory policies. While I cannot generalize the results with industry-specific $\alpha$s, I can characterize similar comparative statics in special cases of the network (Proposition \ref{hetinvgeneral}).

The model also abstracts from productivity shocks. Suppose otherwise that $I$ firms have a stochastic process for their productivity. As the competitive fringe anchors the price, fluctuations in productivity would be solely reflected in markup changes. As a consequence, firms would adjust their inventories even more procyclically. At high productivity, the firm would find it optimal to produce more. If the productivity shocks are mean-reverting, as productivity increases today, the conditional expectation of productivity tomorrow is lower than the current level. Consequently, the firm optimally increases today's inventories to save on marginal costs. In Proposition \ref{prodshocksprop} in Appendix \ref{productivityshocks}, I analyze this case and show that the procyclical nature of inventories is reinforced.

Lastly, the framework abstracts from modeling trade separately from domestic transactions. This choice is motivated by the observation that introducing trade costs would not change any of the results in this section. The intuition is simple: introducing destination-specific cost shifters, such as iceberg trade costs, changes the steady-state size of each sector but, in this model, does not affect the response of the economy to shocks. 

In summary, this theoretical framework encompasses a richer pattern of propagation of final demand shocks in the network and highlights the key interplay between the features of supply chains and inventories. This allows me to analyze the trends discussed in the introduction through the lens of the model. The higher fragmentation of production over time should, all else equal, increase output growth volatility. On the other hand, the decrease in the concentration of exposure shares should reduce the volatility of demand. Combining these, we should observe that, over time, output elasticities increase. However, whether these trends increase or decrease overall volatility depends on the relative strength of inventory amplification versus diversification. The remainder of the section tests the key mechanism directly and provides quantitative counterfactuals to disentangle the role of longer chains from the one of higher diversification and quantifies the total effect of these trends on economic fluctuations.

\subsection{Testing the Mechanism} \label{testingsec}
The reduced-form estimates in Section \ref{results} show that both the output and inventories elasticities increase upstream. 
The framework in Section \ref{generalnetwork} provides a precise closed-form condition linking output growth to changes in demand through inventories and upstreamness. To further corroborate the reduced-form evidence, this relation can be directly estimated in the data. Moreover, the model-implied condition provides a sharp null on what we should find if inventories were not an important driver of amplification. Proposition \ref{growthmulti} provides the key relation:
\begin{align}
    \Delta\log Y^r_{t}\approx\eta_{t}^r+\alpha\rho\sum_j \mathcal{U}_{j}^r\xi_{j}^r\eta_{j,t}\label{estimatingeqtheory}
\end{align}
The key difficulty in estimating this relation is computing the matrix of $\mathcal U_j^r$, the sufficient statistic accounting for both the structure of the network and the intensity of the inventory channel along supply chains. I show in Appendix \ref{app_model_estimates} that this metric can be directly backed out of the input-output data. I detail in the next section how I calibrate the model and recover estimates of $\alpha$ and $\rho$. I can then estimate the empirical equivalent of eq. (\ref{estimatingeqtheory}):
\begin{align}
    \Delta\log Y^r_{it}=\delta_1\hat\eta^r_{it}+\delta_2\alpha_i^r\hat\upsilon_{it}^r+\epsilon_{it}^r,
\label{estimatingeq}\end{align}
where $\alpha$ is empirical inventory-to-sales ratio, $\hat\eta^r_{it}\equiv \sum_j \xi_{ij}^r\hat\eta_{j,t}$ is the estimated demand shock discussed in Section \ref{methodology}, and $\hat\upsilon_{it}^r\equiv \sum_j \mathcal{U}_{ij}^r\xi_{ij}^r\hat\eta_{j,t}$ is the bilateral upstreamness-weighted shocks. Based on the model prediction, we should estimate $\delta_1=1$ and $\delta_2>0$.

To directly measure the inventory-to-sales ratio, I use the NBER CES Manufacturing.\footnote{As discussed earlier in the paper, WIOD contains information on the changes in the inventory stocks, which are computed as a residual in the I-O table. To recover the inventory-to-sales ratio, I need the level of inventory stock as well as the correct allocation of the industries that use these inventories. For these reasons, I use the NBER CES data, which provides reliable values for the inventory stock for U.S. manufacturing industries.} This data only covers the U.S. and manufacturing industries. Therefore, I restrict the sample of WIOD to manufacturing, and I maintain throughout the assumption that $\alpha_{i}^r=\alpha_{US}^r,\, \forall i$, namely that within an industry, all countries have the same inventory-to-sales ratios.\footnote{Note that if inventories increase in the level of frictions and these decrease with the level of a country's development, then the U.S. is likely to represent a lower bound in terms of inventory-to-sales ratios.}

Before discussing the results, note that if the model was misspecified and inventories played no role, we should expect $\hat\delta_1=1$ and $\hat\delta_2=0$. If inventories smoothed fluctuations upstream, we should have $\hat\delta_2<0$ or $\hat\delta_1<1$ since we would expect output to move less than 1-for-1 with demand. Finally, if the network dissipation role were to dominate, we should also expect $\hat\delta_2<0$ as it would capture differential responses based on the position relative to consumers, as measured by $\mathcal{U}$.

Table \ref{shocks_slope_inventories} reports the estimation results, including industry fixed effects to control for permanent differences and common trends.
\begin{table}[htb]
\caption{Model-Consistent Estimation of the Role of Inventories and Upstreamness}
\label{shocks_slope_inventories}
\center
\begin{threeparttable}
\input{input/shocks_slope_inventories_instr}
\begin{tablenotes}[flushleft]
 \item \footnotesize Note: This table shows the results of the regressions in eq. (\ref{estimatingeq}). Columns (1) and (2) report the first stage results. Standard errors are cluster-bootstrapped at the country-industry pair. 
\end{tablenotes}
\end{threeparttable}
\end{table}

Consistently with the model prediction, I find that the elasticity to the shift-share instrument $\hat\delta_1=1$ while the effect of the inventory amplification channel is positive and economically meaningful. To understand the magnitude of this effect, consider, ceteris paribus, increasing all $\mathcal{U}$ from its 25th to its 75th percentile to a sector with average inventories; this implies a 19.5\% increase in the output elasticity. Conversely, raising inventories-to-sales ratios by its interquartile range to industries at the average $\mathcal{U}$ yields a higher elasticity by 16.5\%. Quantitatively, the interplay between inventories and the industry's position in supply chains increases the average output elasticity by about 18\%.

\subsection{Model Performance and Counterfactuals} \label{quantmodelsec}
I conclude by using a calibrated version of the model to study the trends discussed in the introduction. 
In particular, I am interested in separately understanding the role of i) longer and more complex supply chains, ii) more dispersed demand composition, and iii) larger inventories. To do so, I study counterfactuals on the structure of the network and the firms' inventory policy.

\subsubsection{Calibration and Model Performance}

\paragraph{Calibration.} To take the model to the data, I use the actual WIOD input-requirement matrix $\tilde{\mathcal{A}}$ as the I-O matrix in the model. I do so using the data from 2000 and 2014. I simulate 24000 cross-sections of demand shocks for the $J=56$ countries, assuming that they follow AR(1) processes with volatility $\sigma_j$, persistence $\rho$, and average cross-country correlation $\varrho$. For each country in the dataset, I compute $\hat\sigma_j$ as the time series standard deviation of the destination-specific shocks $\eta_{jt}$ and use it directly in the data-generating process. I use $\rho=.7$ as estimated in an AR(1) on the empirical demand shocks. For the parameter $\varrho$, governing the average cross-country correlation, I use indirect inference based on the empirical relation between the shift-share volatility and upstreamness.\footnote{With longer time series, I could estimate the entire variance-covariance matrix of the J-dimensional stochastic process. The covariances are not identified since the time series length is $T<J$. Therefore, I use a single average correlation $\varrho$ across countries as the target while using the estimated country-specific volatilities $\hat\sigma_j$ directly. Online Appendix \ref{quant_model} provides additional details on the procedure.} Underlying this calibration strategy is the idea that when shocks are perfectly correlated across countries, all industries should have the same shift-share shock volatility, independently of their position in the network. If, instead, shocks were iid at the destination level, further upstream industries would have lower shift-share shock volatility as the shifters are diversified away. Finally, for the parameter governing inventories, I use the relative volatility of output growth to that of the shift-share shocks in 2001. Using an inventory-to-sales ratio of 0.18, I can match the empirical relative volatility of 1.26. These moments are reported in Table \ref{model_regression_varrho}.

To disentangle the role of supply chain length from that of diversification, I report results for two versions of the model. The first one features a unique consumption destination as in Section \ref{generalnetwork}, such that, in every period, there is only one change in final demand. In this model, any change in the network can only result in changes in supply chain length, not in demand composition and diversification. The second version instead allows for multiple destinations. In the latter case, I draw $J$ changes in final demand from the joint distribution and apply them to actual final demand data in the I-O table. For both cases, I also report the results for a model without inventories, $\alpha=0$, for comparison.

\paragraph{Model Performance.} First, the model can replicate the reduced form evidence. In particular, estimating the main specification in eq. (\ref{cardinalreg}) on the generated data, I obtain Figure \ref{simulatedcardinal}, which should be compared to Figure \ref{margins_cardinal}.
\begin{figure}[htbp]
\centering
\caption{Model Data Regression}
\label{simulatedcardinal}
\begin{subfigure}{.5\textwidth}
  \centering
  \includegraphics[width=1.05\linewidth]{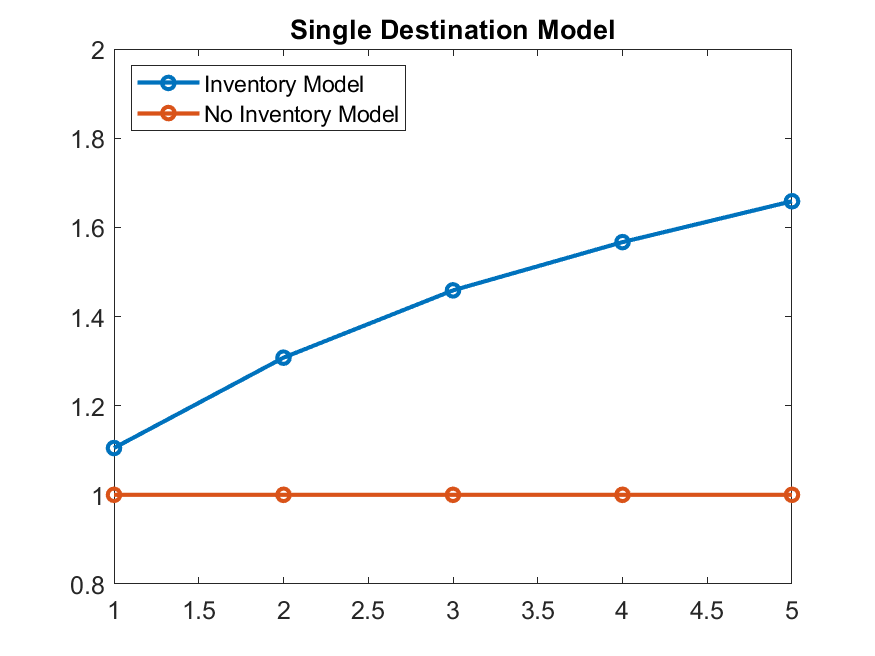}
  \caption{Single Destination}
  \label{fig:modelsub1}
\end{subfigure}%
\begin{subfigure}{.5\textwidth}
  \centering
  \includegraphics[width=1.05\linewidth]{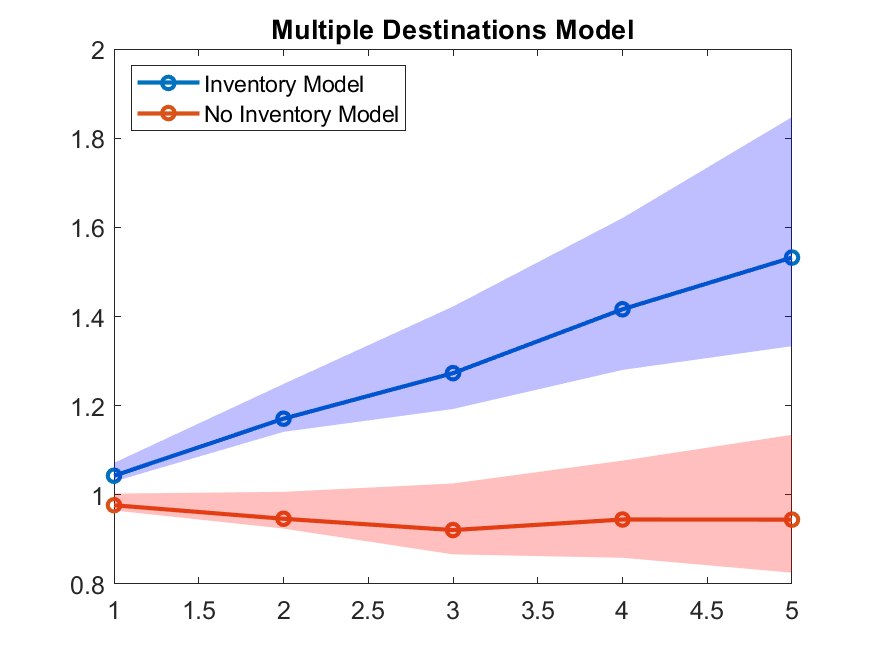}
  \caption{Multiple Destinations}
  \label{fig:modelsub2}
\end{subfigure}
\\\vspace{5pt}
\noindent\justifying
\footnotesize Note: The figures show the model equivalent of Figure \ref{margins_cardinal}. Panel (a) shows the result of regression \ref{cardinalreg} for a model with a single consumer. Panel (b) shows the same estimation for economies with multiple destinations. In the latter case, the propagation pattern is not deterministic as it matters which destination receives which shock. Therefore, I build confidence intervals by simulating the economy 24000 times and the average coefficient as well as plus and minus one standard deviation (shaded area). In both plots, the blue line represents the result in an economy with inventories, while the red line is for economies without inventories (i.e., with $\alpha$ set to 0 for all industries).
\end{figure}
I find that the inventory model can replicate the slope estimated on the data in Section \ref{results}. Namely the increasing response across different upstreamness bins. The model without inventories cannot generate the positive gradient found in Figure \ref{margins_cardinal}. 

As discussed in the previous section, the single destination model has no uncertainty around the effect of a demand shock by upstreamness. On the other hand, allowing for shocks in multiple destinations implies residual uncertainty depending on which country suffers which shock and, given the I-O matrix, which sector is affected. Therefore, this setting has the additional important feature of allowing diversification forces to operate. Quantitatively, the model matches the slope found in the empirical analysis. 

While the model correctly generates the higher output growth elasticity for more upstream industries, it could do so by generating the wrong relationship between upstreamness, demand volatility, and output volatility. As discussed in Section \ref{results}, in the data, demand volatility decreases in upstreamness while output volatility increases. While the relationship between upstreamness and demand volatility is a targeted moment, the relationship between upstreamness and output volatility is untargeted. Importantly, despite stronger diversification of demand shocks, the model shows that inventory dynamics dominate, leading to higher upstream volatility. This suggests that, quantitatively, the interplay between industry position and inventories is sufficient to generate the rising output volatility for more upstream sectors. I report these results in Figure \ref{volatilitymodeldata}.
\begin{figure}[htb]
\centering
\caption{Volatility and Upstreamness - Data vs. Model}
\label{volatilitymodeldata}
\begin{subfigure}{.5\textwidth}
  \centering
  \includegraphics[width=1\linewidth]{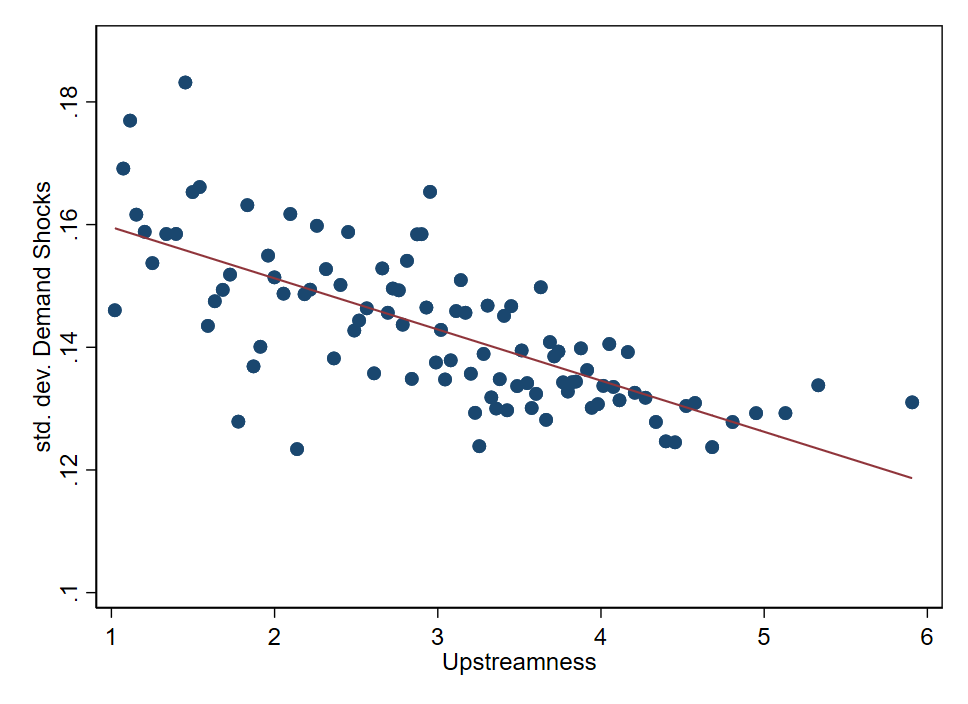}
  \caption{Demand Shock Volatility - Data}
  \label{fig:volat1}
\end{subfigure}%
\begin{subfigure}{.5\textwidth}
  \centering
  \includegraphics[width=1\linewidth]{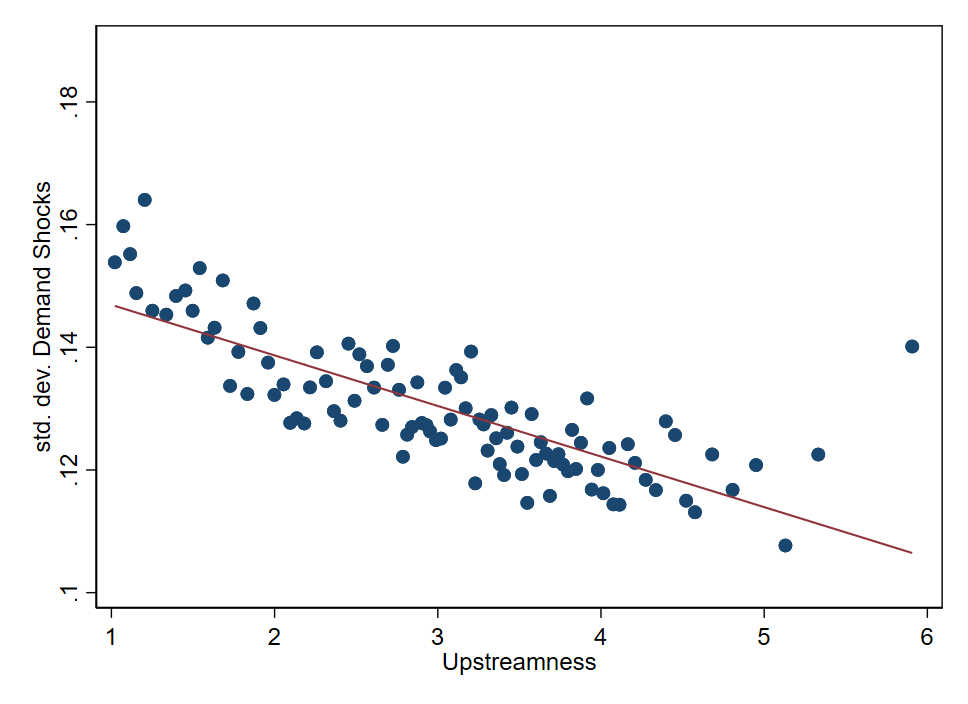}
  \caption{Demand Shock Volatility - Model}
  \label{fig:volat2}
\end{subfigure}
\\\vspace{5pt}
\begin{subfigure}{.5\textwidth}
  \centering
  \includegraphics[width=1\linewidth]{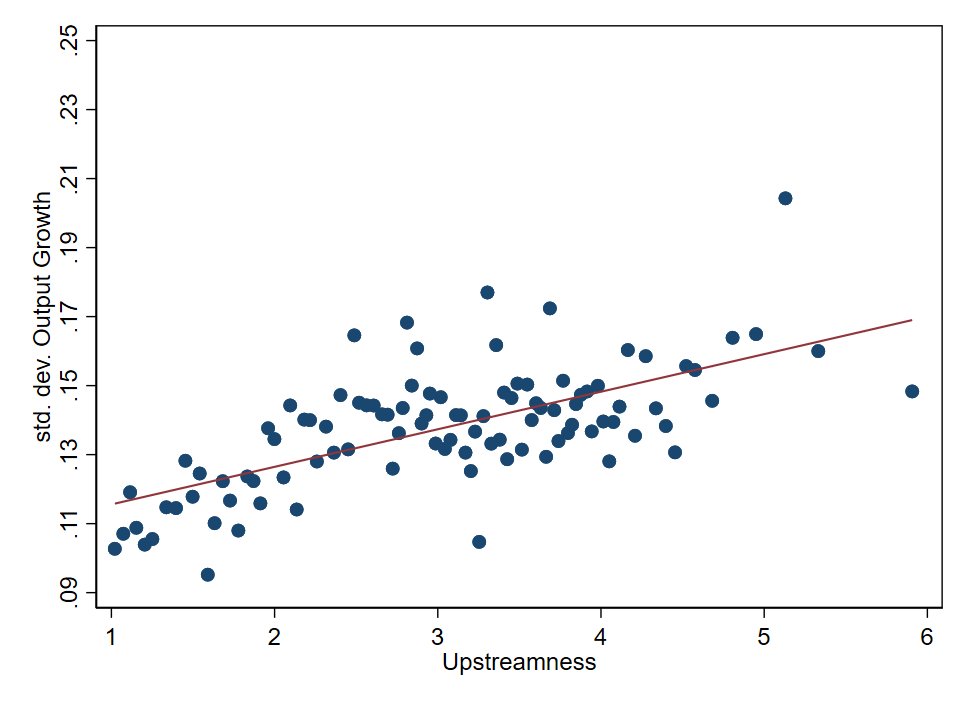}
  \caption{Output Growth Volatility - Data}
  \label{fig:volat3}
\end{subfigure}%
\begin{subfigure}{.5\textwidth}
  \centering
  \includegraphics[width=1\linewidth]{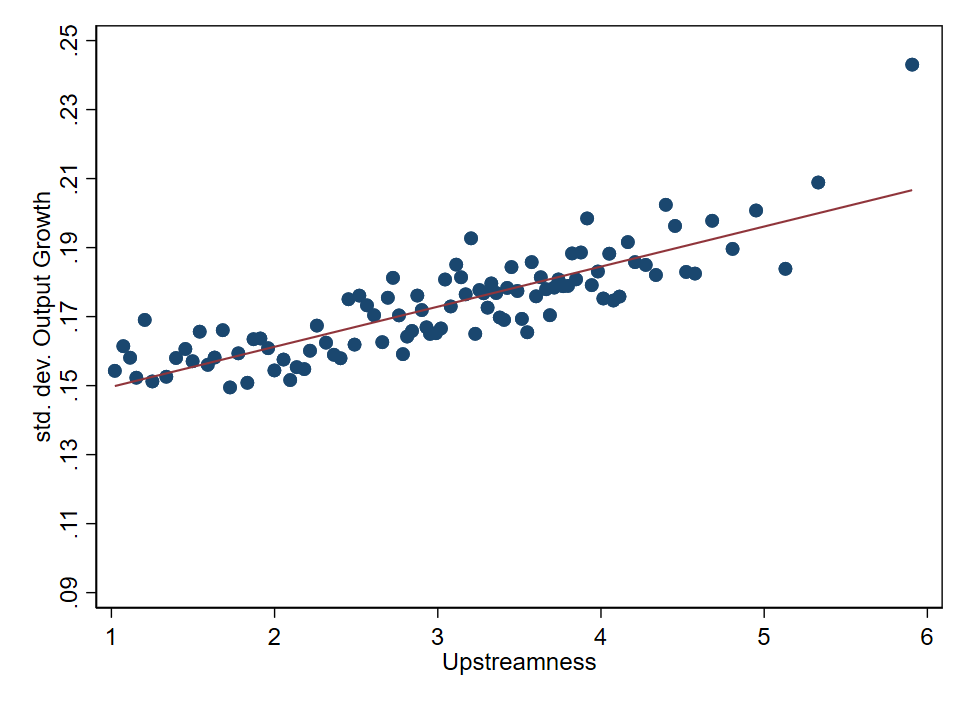}
  \caption{Output Growth Volatility - Model}
  \label{fig:volat4}
\end{subfigure}
\noindent\justifying
\footnotesize Note: The figures show the relations between demand shocks volatility $\sigma(\eta_{it}^r)= \left( \sum_t \left(\eta^r_{it}-\frac{1}{T}\sum_t \eta^r_{it}\right)^2    \right)^\frac{1}{2}$ and Upstreamness and between output growth volatility and Upstreamness. The left panels show the relation in the data, while the right panels show the estimates on the model-generated data.
\end{figure}
 The model can, therefore, rationalize both qualitatively and quantitatively the observations that i) more upstream firms, all else equal, face a lower volatility of demand but also ii) display a higher volatility of output growth. 

\subsubsection{Counterfactuals}

I use the model to study three distinct counterfactuals related to the trends discussed in the introduction. The goal is to disentangle the roles of i) increasing supply chain length, ii) stronger demand diversification, iii) rising inventories, and iv) the interplay between these trends in generating output volatility. These results speak directly to the policy debate on supply chain fragility and the potential need for reshoring. 

To isolate the effect of longer supply chains, I compare the model moments under the network in 2000 and the one in 2014. As discussed earlier, these networks have two salient differences: i) the concentration of sales shares decreased; ii) the average distance from consumers increased. Consequently, one should expect that, fixing the variance of destination-specific shocks $\eta_j$, lower sales share concentration implies more diversification and, therefore, lower variance of demand shocks $\eta^r_i$ that each industry faces. Secondly, the higher distance from consumption should reinforce the inventory amplification channel, thereby increasing the relative volatility of output to demand. I exploit the structure of the model to separate these effects. In a single destination model, there is no notion of heterogeneous demand exposure; hence, in such a context, changing the network from the 2000 to the 2014 I-O matrix only increases the length of production chains. Instead, the same counterfactual in the multiple destination model has both forces operating simultaneously. Comparing the results under these two settings allows me to separate the effect of longer chains from that of higher diversification.

The second counterfactual, motivated by the recent trends in the inventory-to-sales shown by \cite{carrerasvalle}, consists of a 25\% increase in inventories, holding fixed supply chain structure. Intuitively, this should increase the output response for a given change in demand. 

Finally, I combine these experiments and allow for both the 25\% increase in the inventory-to-sales ratio and a change in the global input-output network from the 2000 to the 2014 I-O table. This should bring about two opposing forces: i) the higher diversification reduces the volatility of the final demand each sector is exposed to; ii) for a given level of demand volatility, higher inventories and longer supply chains increase output responses.

I report the results of these counterfactual exercises in Table \ref{counterfacualtable}. The first two columns in the baseline section report targeted moments, which are calibrated to the multiple destination model. The statistics of interest are reported in the counterfactual columns.

The Table reports three key moments of the model economy, which I compute as follows. The standard deviation of demand $\sigma_\eta=\left(\sum_{t} (\eta^r_{it}-\bar\eta^r_i)^2\right)^{\frac{1}{2}}$ with $\eta^r_{it}=\sum_j\xi^r_{ij}\eta_{jt}$, where $r$ is the industry, $i$ the origin country, $j$ a destination country and $\eta_j=\Delta \log D_{jt}$. Output dispersion $\sigma_y$ is computed the same way on the growth rate of output. Finally, $\frac{\Delta \log Y^r_{it}}{\Delta \log \eta^r_{it}}$ is the ratio between the output growth of industry $i$ and the change in final demand industry $i$ is exposed to. For each simulation, I compute the dispersion measures $\sigma_\eta$ and $\sigma_y$, and the median elasticity $\frac{\Delta \log Y^r_{it}}{\Delta \log \eta^r_{it}}$. I report the average across simulations.\footnote{I take the median of $\frac{\Delta \log Y^r_{it}}{\Delta \log \eta^r_{it}}$ rather than the average because, as the denominator is at times very close to zero, the ratio can take extreme values and therefore significantly affect the average response.}

\begin{table}[htb]
    \caption{Counterfactual Moments}
    \label{counterfacualtable}
    \centering
    \begin{threeparttable}
    \begin{tabular}{l|ccc|ccc}
    \toprule
    & \multicolumn{3}{c}{Baseline}& \multicolumn{3}{c}{Counterfactual (\%)}\\[5pt]
            &$\sigma_\eta$ & $\sigma_y$   &  $\frac{\Delta \log Y^r_{it}}{\Delta \log \eta^r_{it}} $&$\sigma_\eta$ & $\sigma_y$   &  $\frac{\Delta \log Y^r_{it}}{\Delta \log \eta^r_{it}}$ \\[5pt]
         \midrule\\[-10pt]
        $\left[A\right]$ - Single Destination Model &&&&&&\\[5pt]
         
        $\tilde {\mathcal{A}}_{2000}\rightarrow \tilde {\mathcal{A}}_{2014}$ &{0.169} & 0.224 &1.33&0&+1.8\%&+2.3\%\\
     
   $\alpha\rightarrow 1.25\alpha$ &{0.169} & 0.224 &1.33&0&+5.4\%&+6.0\%\\
   
    $\tilde {\mathcal{A}}_{2000}\rightarrow \tilde {\mathcal{A}}_{2014}$, $\alpha\rightarrow 1.25\alpha$ &{0.169} & 0.224 &1.33&0&+7.6\%&+8.3\%\\
  \midrule \\[-10pt]
  $\left[B\right]$ -         Multiple Destinations Model &&&&&&\\[5pt]
         
    $\tilde {\mathcal{A}}_{2000}\rightarrow \tilde {\mathcal{A}}_{2014}$ &{0.13} &0.17 &1.31&-3.9\%&-2.9\%&+2.3\%\\
         
   $\alpha\rightarrow 1.25\alpha$ &{0.13} &0.17 &1.31&0&+4.7\%&+4.6\%\\
   
     $\tilde {\mathcal{A}}_{2000}\rightarrow \tilde {\mathcal{A}}_{2014}$, $\alpha\rightarrow 1.25\alpha$ &{0.13} &0.17 &1.31&-3.9\%&+2.4\%&+6.9\%\\
    \bottomrule
    \end{tabular}

\begin{tablenotes}[flushleft]
 \item \footnotesize Note: The Table presents the results of baseline and counterfactual estimation. The first 3 columns refer to the baseline model calibrated to 2000, while the last 3 show the counterfactual results. Panel [A] shows the results for the single destination setting while Panel [B] for the multiple destination model. In each model, I perform 3 counterfactuals: i) keeping inventories constant, I use the I-O matrix of 2014 instead of the one of 2000; ii) keeping the I-O matrix constant, I increase inventories by 25\%; iii) 25\% increase of inventories and changing the I-O matrix from the one in 2000 to the one of 2014. Each counterfactual is simulated 4800 times.
\end{tablenotes}
\end{threeparttable}
\end{table}
To isolate the effect of supply chain length, consider the single destination economy. In this setting, there is no scope for diversification forces as there is only one demand shock. As a consequence, shifting the production network from the 2000 I-O matrix to the 2014 one implies changes in upstreamness, which are not coupled with changes in the ability to diversify away the shocks. The increase in the length of the average supply chain generates a higher output volatility holding fixed the volatility of demand, as shown in the first row of Table \ref{counterfacualtable}. Output growth volatility increases by 1.8\% relative to baseline. This change is generated solely by the changes in industry position relative to consumers. As a consequence, the median output growth response to changes in demand increases by 2.3\% from 1.33 to 1.36. 

To benchmark this increase, the second counterfactual shows that increasing the inventory-to-sales ratio by 25\% generates a similar increase in the change in output triggered by a change in demand from 1.33 to 1.41 and a 5.4\% increase in output growth volatility. Combining these two changes, the model predicts a reinforcing effect of the two forces as increasing chain length and inventories are complementary in generating upstream amplification. Consequently, the output response to changes in demand moves from 1.33 to 1.44, and a 7.6\% increase in output growth volatility. Note that, by construction, when changing the network in the single destination economy, we only account for the role of increased supply chain lengths, not for the increased diversification. 

To account for the increased diversification that accompanies the reshaping of the network, I turn to the multiple-destination model. The first observation is that when moving from 2000 to the 2014 network, the model predicts a decline in industries' effective demand shock volatility. As the destination exposure becomes less concentrated, for a given level of volatility of destination shocks $\eta_{jt}$, the volatility of $\eta_r^i$ declines by 3.9\% from 0.13 to 0.125. Absent other changes, this should translate directly into a 3.9\% decline in output growth volatility. However, the network changes between 2000 and 2014 also increase supply chain length so that the output elasticity increases from 1.31 to 1.34 and output growth volatility only drops by 2.9\%. Intuitively, while reshaping the network induces a higher diversification of demand shocks, whose volatility declines, it also increases the average distance from consumers and, therefore, upstream amplification by 2.3\%. The latter effect mitigates the volatility decline from higher diversification. In the second counterfactual, increasing inventories while fixing the network structure implies no change in demand exposure and a significant increase in output volatility as the output elasticity increases by 4.7\% from 1.31 to 1.37. As a consequence, output growth volatility increases by 4.6\%. Finally, the last counterfactual, allowing for both changes in the network and increasing inventories, suggests that the reduction in effective demand volatility is more than offset by the increases in supply chain length and inventories so that the volatility of output growth rises by 2.4\%. This is driven by a significant increase in the output change triggered by a change in demand of 6.9\%, from 1.31 to 1.4, driven by the rise in inventories and the longer supply chains.

These counterfactual experiments suggest that the reshaping of the network is generating opposing forces in terms of output volatility. First, the decreasing exposure to a specific destination reduces the effective volatility of final demand for each industry. At the same time, the increase in chain length would imply a higher responsiveness of output to changes in final demand. When combining these network changes with an increase in inventories from an inventory-to-sales ratio of 18\% to 22.5\%, the benefits of the changes in the network are fully undone. In particular, output volatility rises when inventories are allowed to increase. 

\section{Conclusions}\label{conclusions}

Recent decades have been characterized by a significant change in how goods are produced due to the rise of global value chains. In this paper, I ask whether these trends trigger stronger or weaker propagation of final demand shocks. To answer this question, I start by asking whether we observe a higher output response to demand shocks by firms further away from consumption. Using a shift-share instrument based on global Input-Output data, I find that upstream firms respond up to twice as strongly as their downstream counterparts to the same final demand shock. I also find that this behavior also applies to inventories.

I build a theoretical framework embedding procyclical inventories in a network model to study the key features determining upstream amplification vs. dissipation patterns. I then estimate the model, and, in counterfactual exercises, I find that in the absence of the inventory amplification channel, we would observe significantly lower output responses to demand shocks. This last result becomes particularly salient in light of the recent trends of increasing inventories and lengthening production chains.


\center{\addcontentsline{toc}{section}{ References} }
\bibliographystyle{aer}
{\footnotesize\bibliography{tex_spring25/bibliography}}
\begin{appendices}
\appendix
\newpage

\renewcommand{\thesection}{A.\arabic{section}}
\renewcommand{\thefigure}{A.\arabic{figure}}
\renewcommand{\thetable}{A.\arabic{table}}

\setcounter{figure}{0}
\setcounter{table}{0}
\setcounter{prop}{0}
{
\begin{center}

{\Large {Appendix} }
\end{center}}
\raggedright


\begin{table}[H]
\caption{Effect of Demand Shocks on Output Growth by Upstreamness Level}
\label{shock_iomatrix}
\center\begin{threeparttable}
\scriptsize \input{input/main_result_inst}
\begin{tablenotes}[flushleft]
 \item \footnotesize Note: The Table shows the results of the regression in equation \ref{cardinalreg}. In particular, I regress the growth rate of output on the instrumented demand shocks interacted with dummies taking value 1 if upstreamness is in the $[1,2]$ bin, $[2,3]$ bin, and so on. Observations with upstreamness above 6 are included in the $[5,\infty)$ bin. All regressions include producing industry-country fixed effects. Columns 1 to 5 report the first-stage results for each of the endogenous variables. Column 6 reports the second-stage results. Standard errors are clustered at the producing industry-country level.
\end{tablenotes}
\end{threeparttable}
\end{table}

\begin{table}[H]
\caption{Effect of Demand Shocks on Inventory Changes by Upstreamness Level}
\label{cardinal_inv_table}
\center\begin{threeparttable}
\scriptsize\input{input/main_result_inventories_inst}
\begin{tablenotes}[flushleft]
 \item \footnotesize Note: The Table shows the results of the regression in equation \ref{cardinalreg} with inventory changes as the dependent variable. In particular, I regress the inventory changes over output on the instrumented demand shocks interacted with dummies taking value 1 if upstreamness is in the $[1,2]$ bin, $[2,3]$ bin, and so on. Observations with upstreamness above 6 are included in the $[5,\infty)$ bin. All regressions include producing industry-country fixed effects. Columns 1 to 5 report the first-stage results for each of the endogenous variables. Column 6 reports the second-stage results. Standard errors are clustered at the producing industry-country level.
\end{tablenotes}
\end{threeparttable}
\end{table}



\end{appendices}
\begin{appendices}

\newpage

\setcounter{page}{1}
\setcounter{figure}{0}
\setcounter{table}{0}
\setcounter{prop}{0}
\setcounter{section}{0}
\setcounter{definition}{0}
\setcounter{lemma}{0}
\setcounter{equation}{0}
\newcounter{mysection}
\makeatletter
\@addtoreset{section}{mysection}
\makeatother

\renewcommand{\thesection}{\Alph{section}}
\renewcommand{\thefigure}{OA.\arabic{figure}}
\renewcommand{\thetable}{OA.\arabic{table}}
\renewcommand{\theprop}{OA.\arabic{prop}}
\renewcommand{\thedefinition}{OA.\arabic{definition}}
\renewcommand{\thelemma}{OA.\arabic{lemma}}
\renewcommand{\theequation}{OA.\arabic{equation}}
\renewcommand{\theappprop}{OA.\arabic{appprop}}


\addcontentsline{toc}{section}{Appendix} 
\part{\Large{Online Appendix for ``Inventories, Demand Shocks Propagation, and Amplification in Supply Chains''}} 
\setcounter{parttocdepth}{3}
\parttoc 

\newpage
\justify \section{Motivating Evidence}\label{motivatingevidence}
\normalsize

\justifying\subsubsection*{Production chains have increased in length}
\addcontentsline{toc}{subsection}{Production chains have increased in length}

\paragraph{Total Length of Supply Chains}\phantom{a}
\begin{figure}[H]
\caption{Dynamics of Supply Chains Length}
\label{F1length}\centering
\includegraphics[width=.5\textwidth]{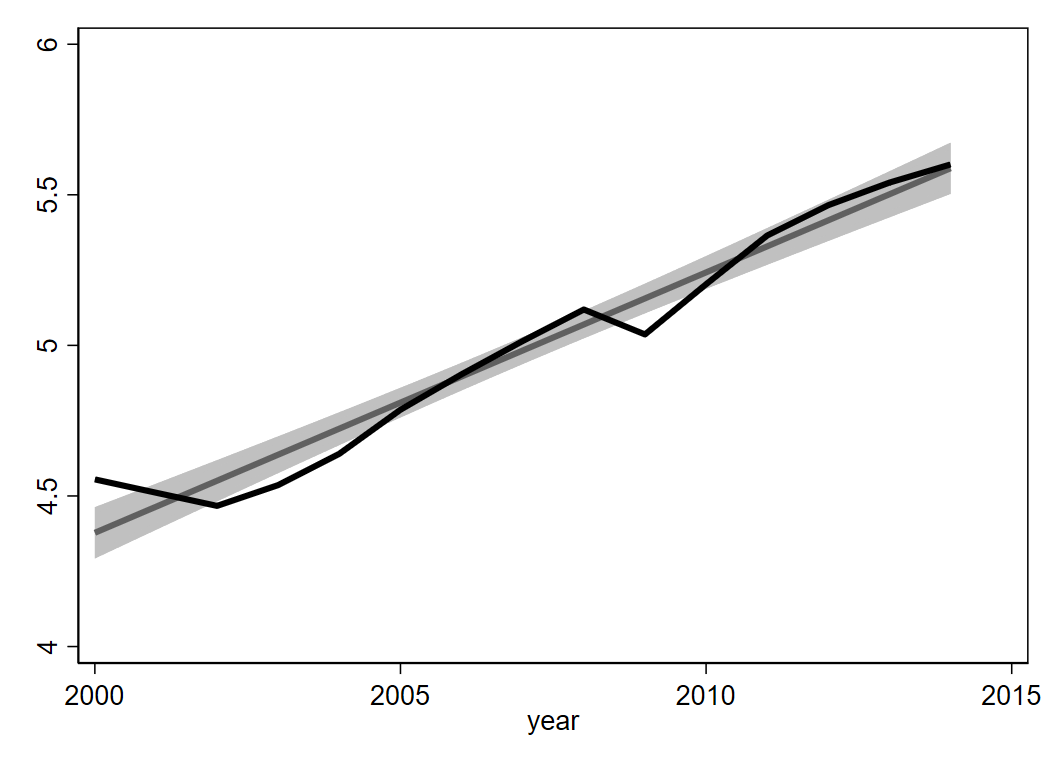} 
\\\vspace{5pt}
\noindent\justifying
 \footnotesize Note: Note: The figure shows the dynamics of the weighted length of chains measure computed as $L_t=\frac{\sum_i\sum_r y_{it}^r L^r_{it}}{\sum_i\sum_r y_{it}^r }$, here $L_{it}^r\coloneqq U_{it}^r+D_{it}^r$, namely the sum of upstreamness and downstreamness to count the total amount of steps embodied in a chain from pure value added to final consumption. The figure shows the average over time and it includes the estimated linear trend and the 95\% confidence interval around the estimate. 
\end{figure}
\justifying\subsubsection*{Sales shares are becoming less concentrated}
\addcontentsline{toc}{subsection}{Fact 2: Sales shares are becoming less concentrated}
\begin{figure}[H]
\caption{Herfindahl Index of Sales Shares}\label{hhi_trends}
\centering\hspace*{-.5cm}\subcaptionbox{Simple Average HHI}
{\includegraphics[width=.5\textwidth]{input/HHI_trend.png}}
\hspace*{.25cm}\subcaptionbox{Weighted Average HHI}
{\includegraphics[width=.5\textwidth]{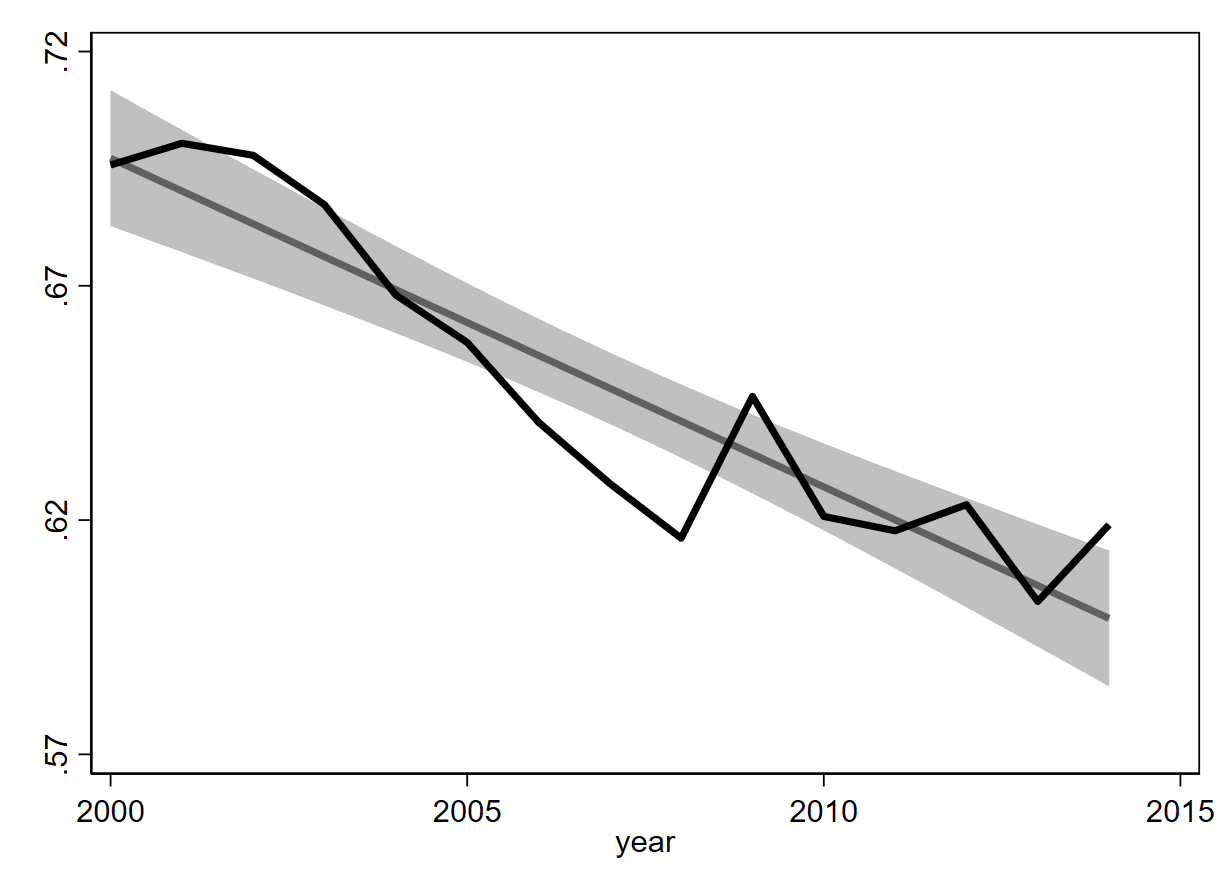}}\\
\noindent\justifying
 \footnotesize Note: The figure shows the behavior of the Herfindahl Index of destination shares over time. Destination shares are described as in equation \ref{shareseq} in Section \ref{methodology}. The Herfindahl Index is computed at the industry level as $HHI^r_t=\sum_j {\xi^r_{ij}}^2$. The left panel shows the simple average across industry, i.e. $HHI_t=R^{-1}\sum_r HHI^r_t$. The right panel shows the weighted average using industry shares as weights: $HHI_t=\sum_r \frac{Y^r_t}{Y_t}HHI^r_t$. The plots include the estimated linear trend and the 95\% confidence interval around the estimate.\end{figure}

\justifying\subsubsection*{Inventories-to-sales ratios are increasing
}
\addcontentsline{toc}{subsection}{Inventories-to-sales ratios are increasing}

\begin{figure}[H]
\caption{Trends in Inventory-to-Sales ratios}\label{fact5plots}

\centering\hspace*{-0.5cm}\subcaptionbox{Yearly NBER}
{\includegraphics[width=.5\textwidth]{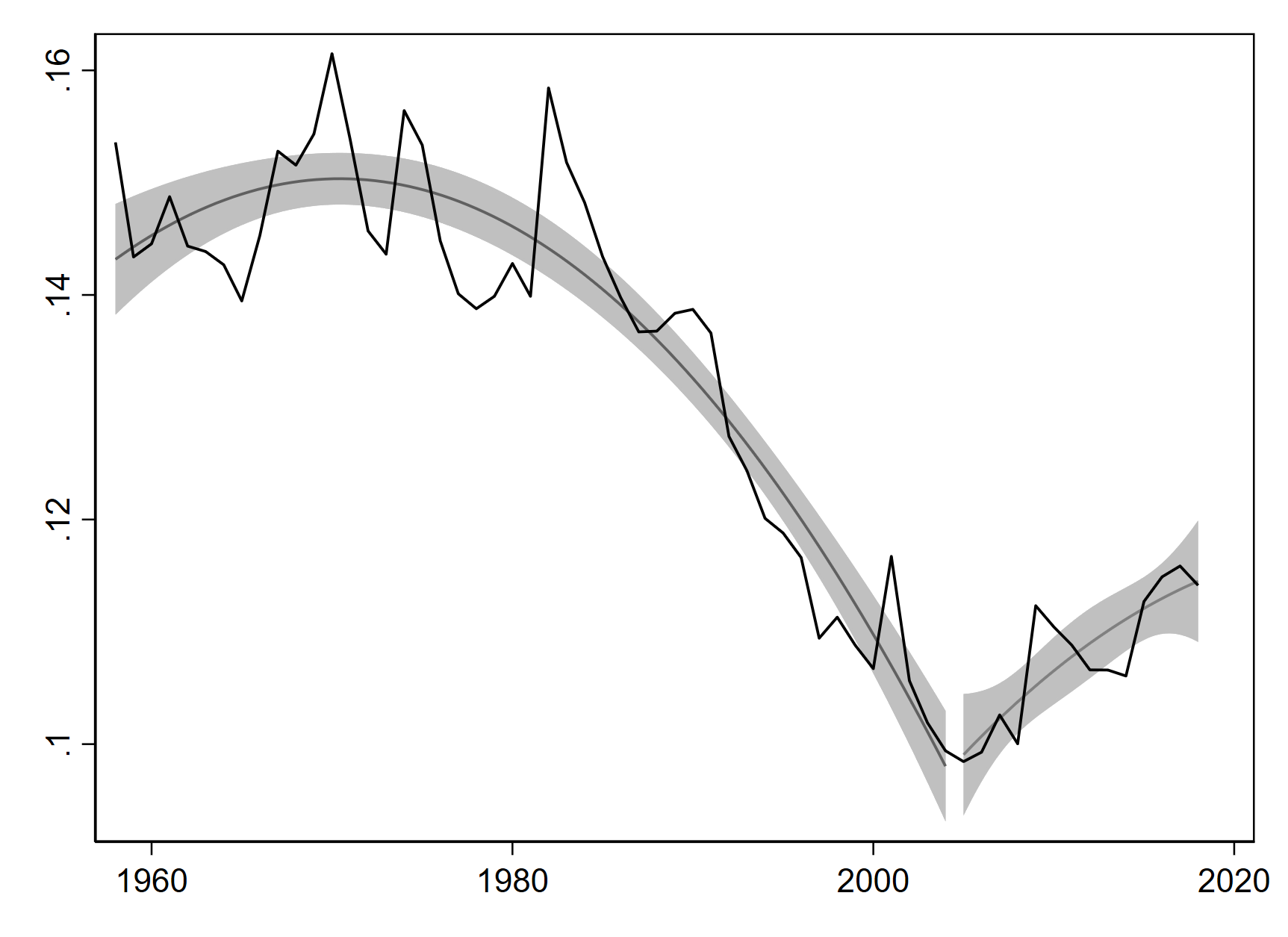}}
\hspace*{.25cm}\subcaptionbox{Monthly Census}
{\includegraphics[width=.5\textwidth]{input/inv_trend_month.png}}\\

\noindent\justifying
 \footnotesize Note: The graphs replicate the key finding in \cite{carrerasvalle}. Panel (a) shows the inventory-to-sales ratio from 1958 to 2018 from the NBER CES Manufacturing Database. Panel (b) reports the same statistic from the Census data from Jan-1992 to Dec-2018. Both graphs include non-linear trends before and after 2005. I estimate separate trends as \cite{carrerasvalle} suggests that 2005 is when the trend reversal occurs.\end{figure}
 \newpage
\section{Data}\label{app:data}
\subsection*{WIOD Coverage}\addcontentsline{toc}{subsection}{WIOD Coverage}\phantom{a}

\begin{figure}[H]
\caption{World Input Output Table }
\label{wiod}\includegraphics[width=1\textwidth]{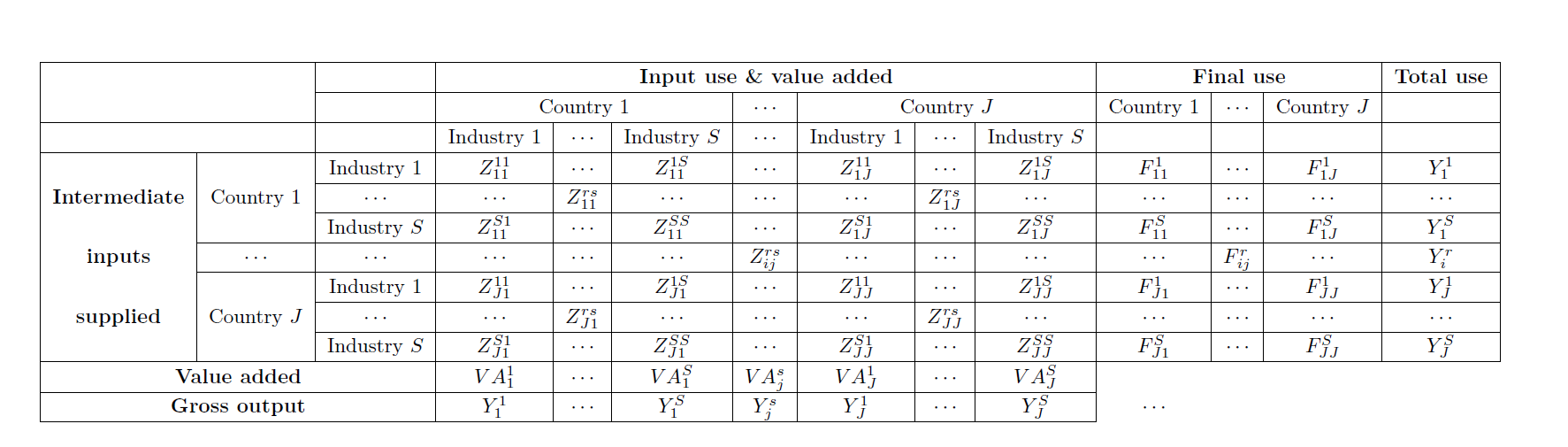} 
\end{figure}
\noindent\justifying
 \footnotesize Note: The figure shows a schematic of the structure of the World Input-Output Database.
 
\begin{table}[htp!]
\caption{Countries}
\label{country}
\center\scalebox{1}{\input{input/countries.tex}}
\end{table}

\begin{table}[H]
\caption{Industries}
\label{industry}
\hspace{-1cm}{\scalebox{.7}{\input{input/sectors.tex}}}
\end{table}


\subsection*{WIOD Inventory Adjustment}\addcontentsline{toc}{subsection}{Inventory Adjustment}\phantom{a}
\normalsize
\cite{antras_et_al} define the measure of upstreamness based on the Input-Output tables. This measure implicitly assumes the contemporaneity between production and use of output. This is often not the case in empirical applications since firms may buy inputs and store them to use them in subsequent periods. This implies that, before computing the upstreamness measure, one has to correct for this possible time mismatch.

The WIOD data provides two categories of use for these instances: net changes in capital and net changes in inventories. These categories are treated like final consumption, meaning that the data reports which country but not which industry within that country absorbs this share of output. 

The WIOD data reports as $Z_{ijt}^{rs}$ the set of inputs used in $t$ by sector $s$ in country $j$ from sector $r$ in country $i$, independently of whether they were bought at $t$ or in previous periods.
\\ Furthermore, output in the WIOD data includes the part that is stored, namely
\begin{align}
Y_{it}^r=\sum_s\sum_j Z_{ijt}^{rs}+\sum_jF_{ijt}^{r}+\sum_j \Delta N_{ijt}^r.
\label{yinv}
\end{align}
As discussed above, the variables reporting net changes in inventories and capital are not broken down by industry, i.e., the data contains $\Delta N_{ijt}^r$, not $\Delta N_{ijt}^{rs}$.

This characteristic of the data poses a set of problems, particularly when computing bilateral upstreamness. First and foremost, including net changes in inventories in the final consumption variables may result in negative final consumption whenever the net change is negative and large. This cannot happen since it would imply that there are negative elements of the $F$ vector when computing
\begin{align*}
U=\hat Y^{-1}[I-\mathcal{A}]^{-2}F.
\end{align*}

However, simply removing the net changes from the $F$ vector implies that the tables are no longer balanced, which is also problematic. By the definition of output in equation \ref{yinv}, it may be the case that the sum of inputs is larger than output.
When this is the case, $\sum_i \sum_r a_{ij}^{rs}>1$, which is a necessary condition for the convergence result, as discussed in the Methodology section.

To solve these problems, I apply the inventory adjustment suggested by \cite{antras_et_al}. It boils down to reducing output by the change of inventories. This procedure, however, assumes inventory use. In particular, as stated above, the data reports $\Delta N_{ijt}^r$ but not $\Delta N_{ijt}^{rs}$. For this reason, the latter is imputed via a proportionality assumption. Namely, if sector $s$ in country $j$ uses half of the output that industry $r$ in country $i$ sells to country $j$ for input usages, then half of the net changes in inventories will be assumed to have been used by industry $s$. Formally:
\begin{align*}
\Delta N_{ijt}^{rs}=\frac{Z_{ijt}^{rs}}{\sum_sZ_{ijt}^{rs}}\Delta N_{ijt}^r.
\end{align*}

Given the inputed vector of $\Delta N_{ijt}^{rs}$, the output of industries is corrected as
\begin{align*}
\tilde Y_{ijt}^{rs}=Y_{ijt}^{rs}-\Delta N_{ijt}^{rs}.
\end{align*}

Finally, whenever necessary, Value Added is also adjusted so that the columns of the I-O tables still sum to the corrected gross output.

These corrections ensure that the necessary conditions for the matrix convergence are always satisfied. I apply these corrections to compute network measures while I use output as reported when used as an outcome.

\subsection*{Descriptive Statistics}{Descriptive Statistics}\addcontentsline{toc}{subsection}{Descriptive Statistics}\phantom{a}
This section provides additional descriptive statistics on the World Input-Output Database (WIOD) data.

\raggedright\paragraph*{Upstreamness}\addcontentsline{toc}{subsubsection}{Upstreamness}\phantom{a}

\centering
\begin{threeparttable}
[H]
 \caption{Highest and Lowest Upstreamness Industries}\label{udescriptive}
\input{input/U_descriptive.tex}
\begin{tablenotes}[flushleft]
\item \footnotesize Note: The table displays the top and bottom of the upstreamness distribution. This is computed by averaging within industry, across country, and time: $U^r=\frac{1}{T}\frac{1}{I}\sum_t^T\sum^I_i U_{it}^r$.
\end{tablenotes}
\end{threeparttable}


\justify
\paragraph*{Destination Shares}\addcontentsline{toc}{subsubsection}{Destination Shares}

The distribution of destination shares is computed as described in the methodology section. Table \ref{shares_descriptive} reports the summary statistics of the destination shares for all industries and all periods. Importantly, the distribution is very skewed and dominated by the domestic share. On average 61\% of sales are consumed locally. Importantly, the median export share is .16\% and the $99^{th}$ percentile is 12\%. These statistics suggest that there is limited scope for diversification across destinations.

\begin{table}[htp]
\caption{Destination Shares Summary Statistics}
\label{shares_descriptive}
\hspace{-1cm}{\begin{threeparttable}
\center\scalebox{.9}{\input{input/portfolio_shares_desc.tex}}
\begin{tablenotes}[flushleft]
\item \footnotesize Note: The table displays the summary statistics of the sales destination shares. Shares equal to 0 and 1 have been excluded. \\The latter have been excluded because they arise whenever an industry has 0 output. No industry has an actual share of 1.
\end{tablenotes}
\end{threeparttable}}
\end{table}

\paragraph{Inventories} \label{invappendix}\addcontentsline{toc}{subsubsection}{Inventories}
In the model presented in this paper, the potential amplification is driven by procyclical inventory adjustment. The WIOD data does not provide industry-specific inventory stock or change, eliminating the possibility of a direct test of the mechanism. 

To provide partial evidence of the behavior of inventories, I use NBER CES Manufacturing Industry and the US Census data. I compute the parameter $\alpha= I_t/\mathbb{E}_t D_{t+1}$  as $\alpha_t=I_t/D_{t+1}$, which produces the distributions shown in Figure \ref{alpha_desc}. \footnote{I drop outliers of the $\alpha$ distribution, specifically observation with an annual $\alpha>3$ and a monthly $\alpha>5$, which are both in the 99th percentile of the distribution. }

\begin{figure}[htb]
\caption{Distribution of $\alpha$}
\label{alpha_desc}
\begin{minipage}[b]{.5\linewidth}
\centering\includegraphics[width=1\textwidth]{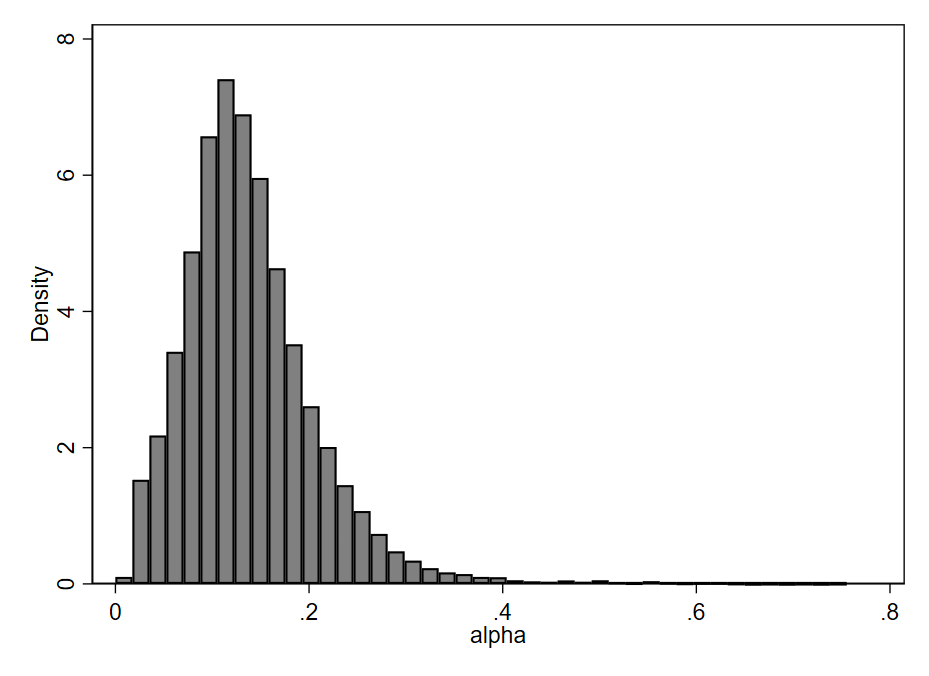}  \subcaption{Distribution of $\alpha$ - NBER CES}
\end{minipage}%
\hspace*{.15cm}\begin{minipage}[b]{.5\linewidth}
\centering\includegraphics[scale=0.26]{{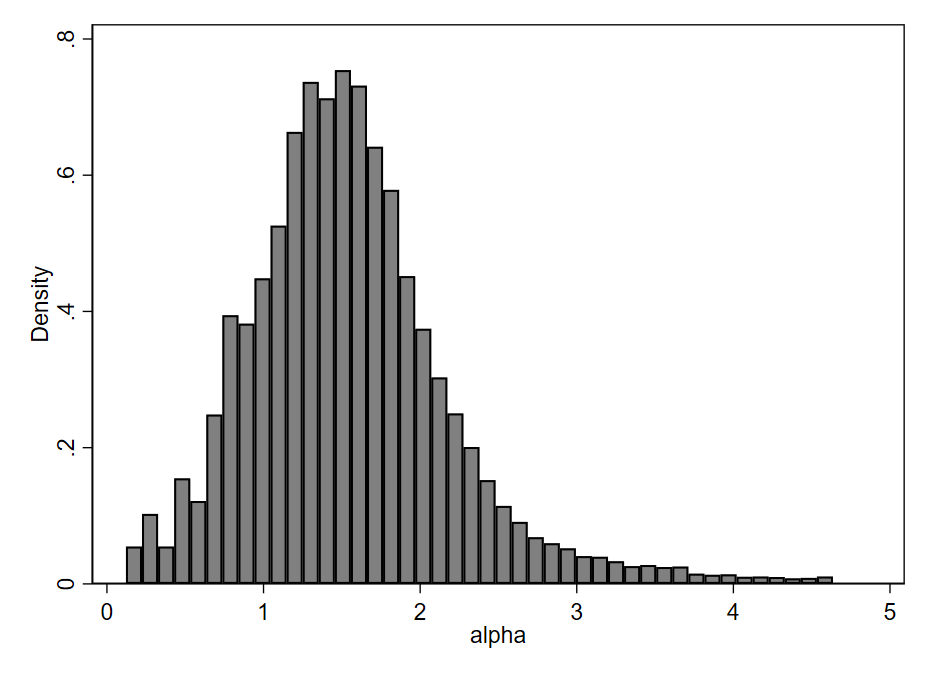}}\subcaption{Distribution of $\alpha$ - US Census} 
\end{minipage}
\\
\noindent\justifying
 \footnotesize Note: The graph shows the distribution of $\alpha_t=I_t/Y_{t+1}$ across the 54 years and 473 industries in the NBER CES Manufacturing Industry data in the left panel and across the Naics 3-digit industries since January 1992 from the US Census.
\end{figure}

\paragraph{Inventories Cyclicality}
\phantom{a}
\begin{figure}[htb]
\caption{Distribution of estimated $I^\prime$}\label{dist_Iprime}
\centering\hspace*{-.5cm}\subcaptionbox{NBER}
{\includegraphics[width=.5\textwidth]{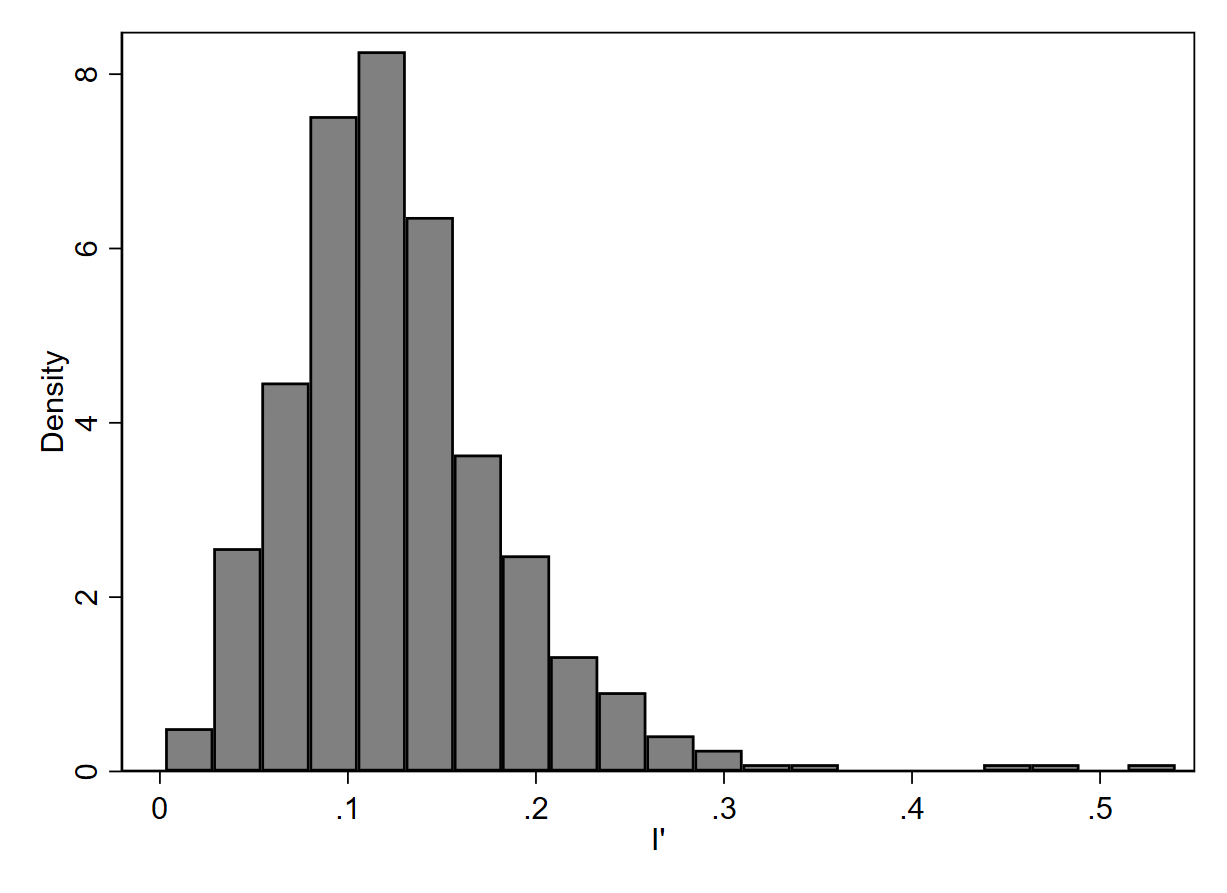}}
\hspace*{.25cm}\subcaptionbox{Census}
{\includegraphics[width=.5\textwidth]{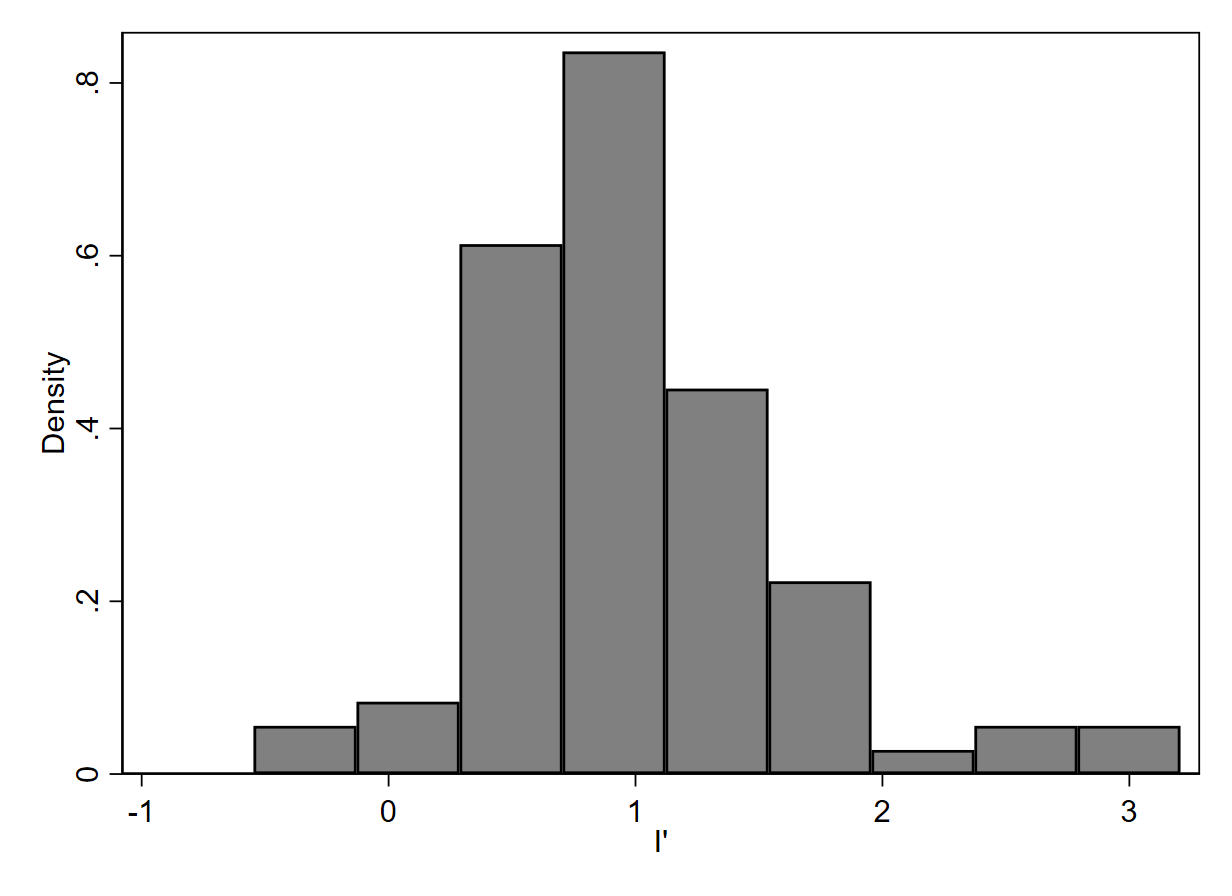}}\\
\noindent\justifying
\footnotesize Note: The graph shows the distribution of estimated $I^\prime(\cdot)$, namely the derivative of the empirical inventory function with respect to sales. The sample is the full NBER CES sample of 473 manufacturing industries. The estimation is carried out sector by sector using time variation. The graph shows the sector-specific estimated coefficient. The left panel shows the same statistics based on the monthly data from the Manufacturing \& Trade Inventories \& Sales data of the US Census.
\end{figure}

\begin{landscape}
\begin{table}[H]
\caption{Estimation of $I'(\cdot)$ }
\label{npreg}
\begin{center}
    
\begin{threeparttable}
\input{input/npreg_table}
\begin{tablenotes}[flushleft]
 \item \footnotesize Note: This table displays the results of the non-parametric kernel estimations of $I'(\cdot)$ and $\alpha^\prime(\cdot)$, the derivative of the inventory function and the inventory to sales ratio function with respect to current sales. The estimation is based on the data of the NBER CES Manufacturing Industries Dataset for 2000-11 for the top Panels and on the US Census Manufacturing \& Trade Inventories \& Sales data for the bottom Panels. Standard errors are bootstrapped. Variables with $\sim$ denote HP-filtered data. Columns (1), (2), (5), and (6) include industry fixed effects. 
\end{tablenotes}
\end{threeparttable}
\end{center}\end{table}

\vfill
\end{landscape}

\paragraph{Inventory Procyclicality by Inventory Type}\label{inventorytypes} \phantom{a}
\begin{figure}[htb]
\caption{Distribution of estimated $I^\prime$ by inventory type}\label{dist_Iprime_type}
\centering\hspace*{-.5cm}\subcaptionbox{Final Goods Inventories}
{\includegraphics[width=.4\textwidth]{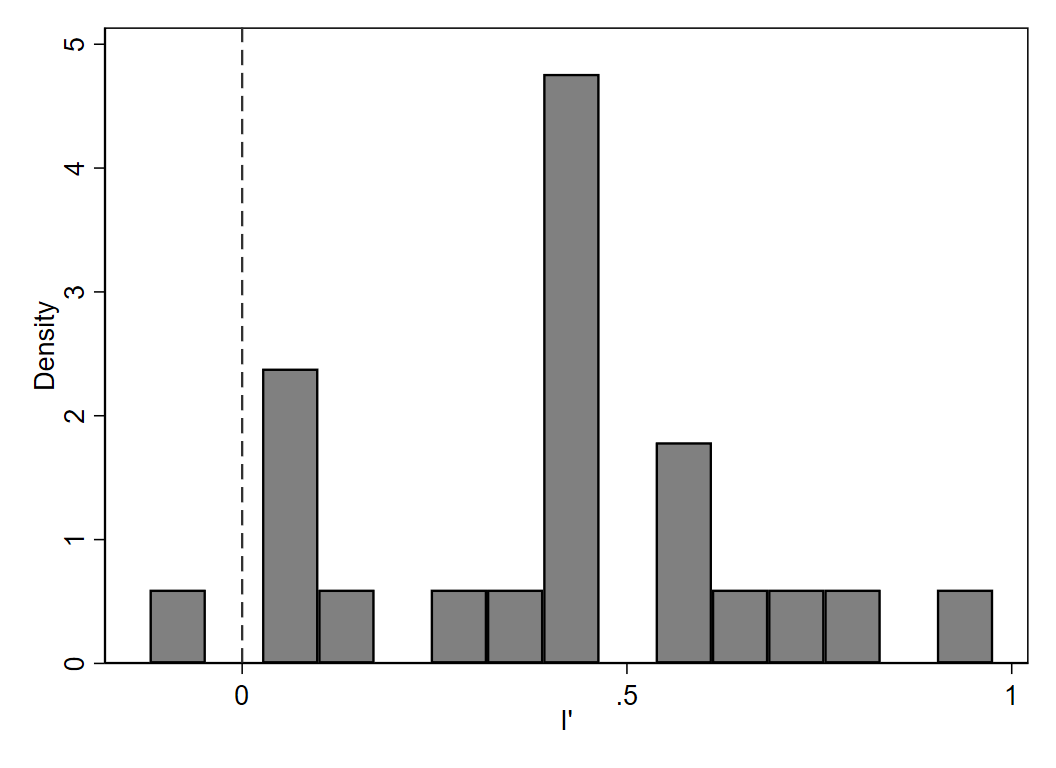}}
\hspace*{.25cm}\subcaptionbox{Materials Inventories}
{\includegraphics[width=.4\textwidth]{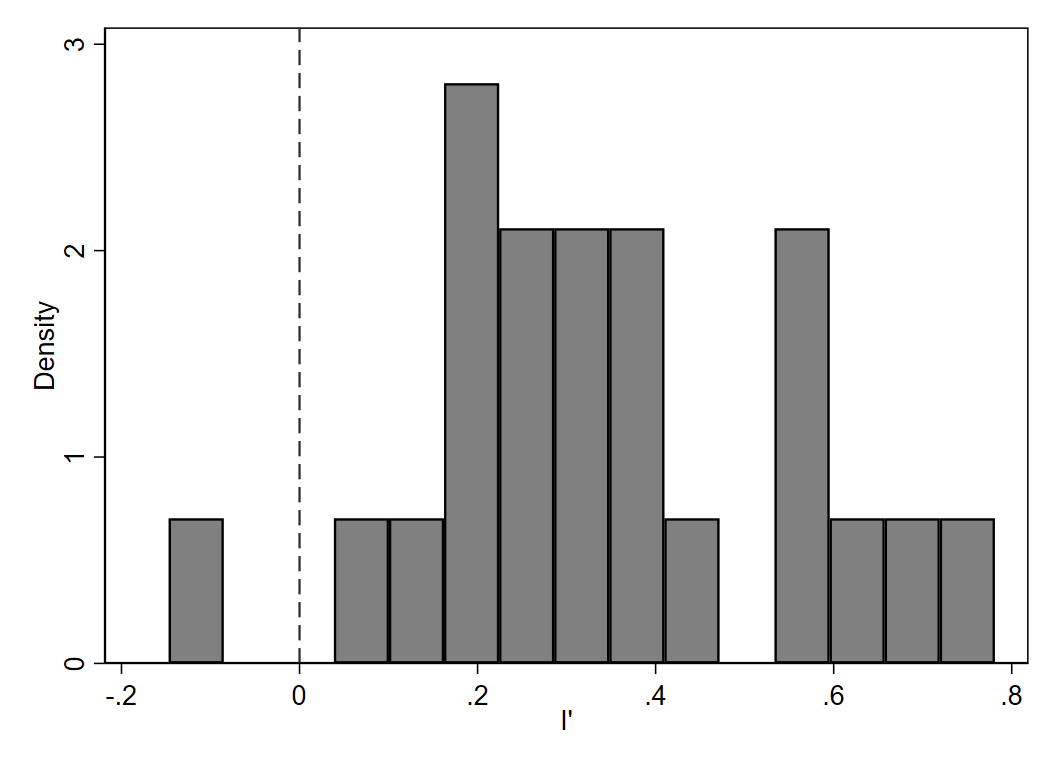}}\\
\noindent\justifying
\footnotesize Note: The graph shows the distribution of estimated $I^\prime(\cdot)$, namely the derivative of the empirical inventory function with respect to sales. The left panel shows the same statistics based on the final goods monthly inventories data from the Manufacturing \& Trade Inventories \& Sales data of the US Census while the right panel shows the estimates using materials inventories.
\end{figure}
%

\paragraph{Output and Sales Volatility}\phantom{a}

\begin{figure}[H]
	\caption{Relative Volatility of Output and Sales}
	\label{fact4plots}
	\begin{subfigure}[b]{0.3\textwidth}
	   \centering
	   \includegraphics[width=1.1\textwidth]{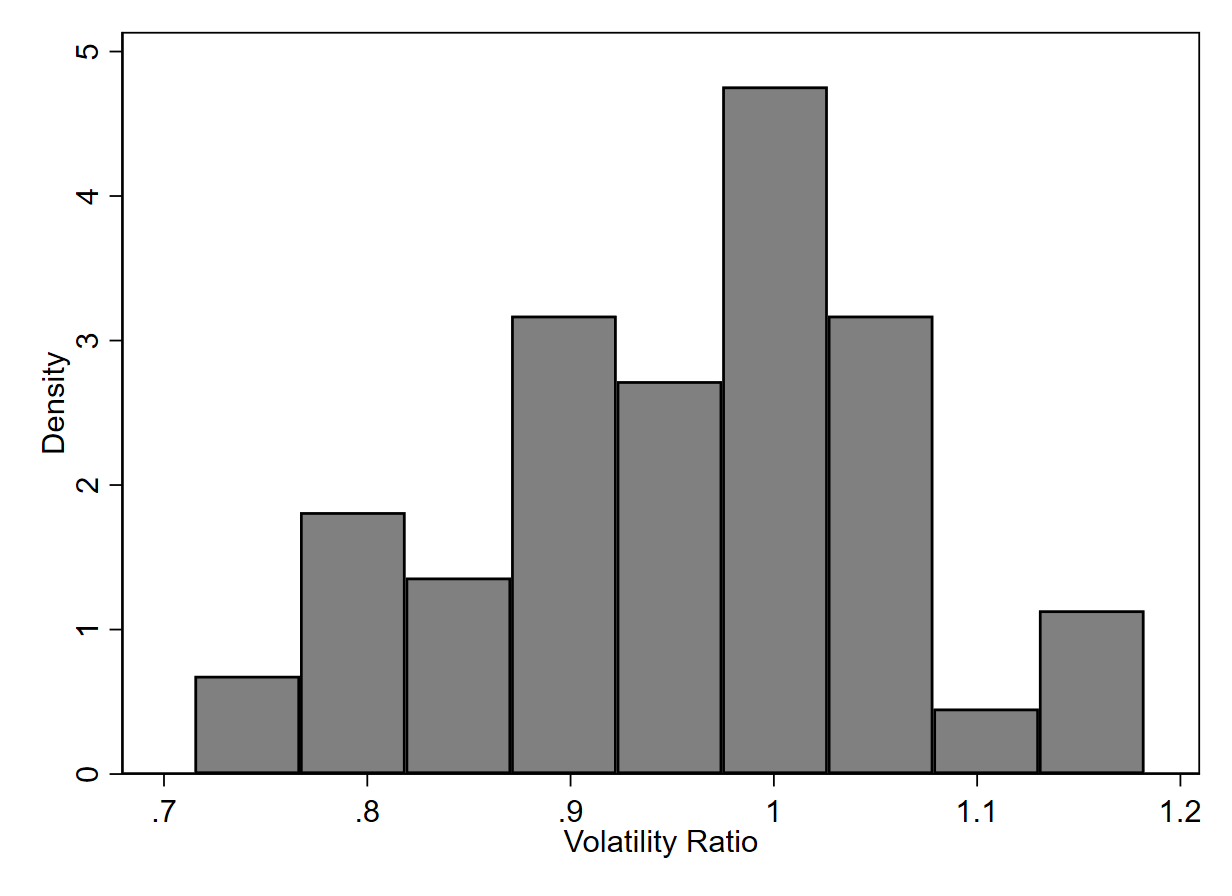}
    	\caption{Monthly}\end{subfigure}
	\hspace{3.5cm}
	\begin{subfigure}[b]{0.3\textwidth}
	   \centering
	   \includegraphics[width=1.1\textwidth]{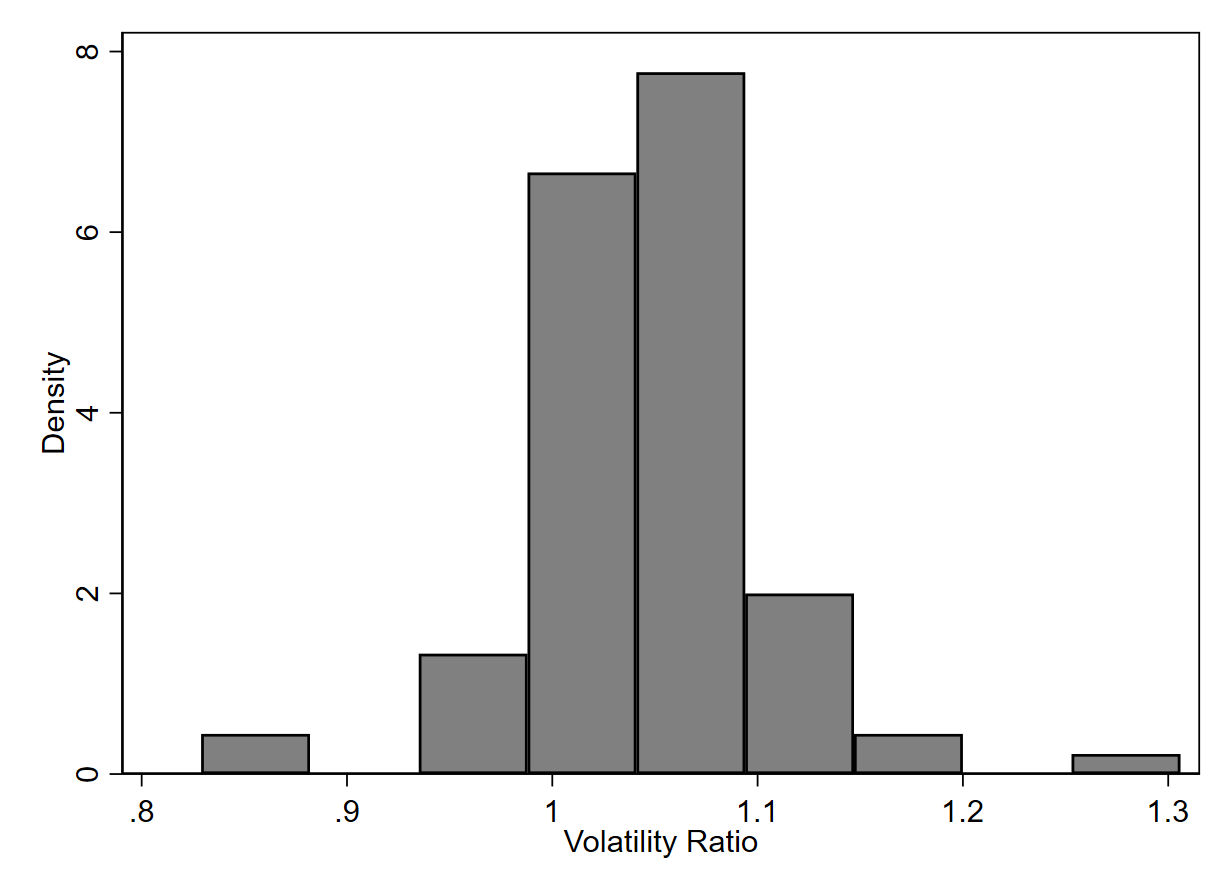}
	\caption{Quarterly}\end{subfigure}	\\
	\begin{subfigure}[b]{0.3\textwidth}
	   \centering
	   \includegraphics[width=1.1\textwidth]{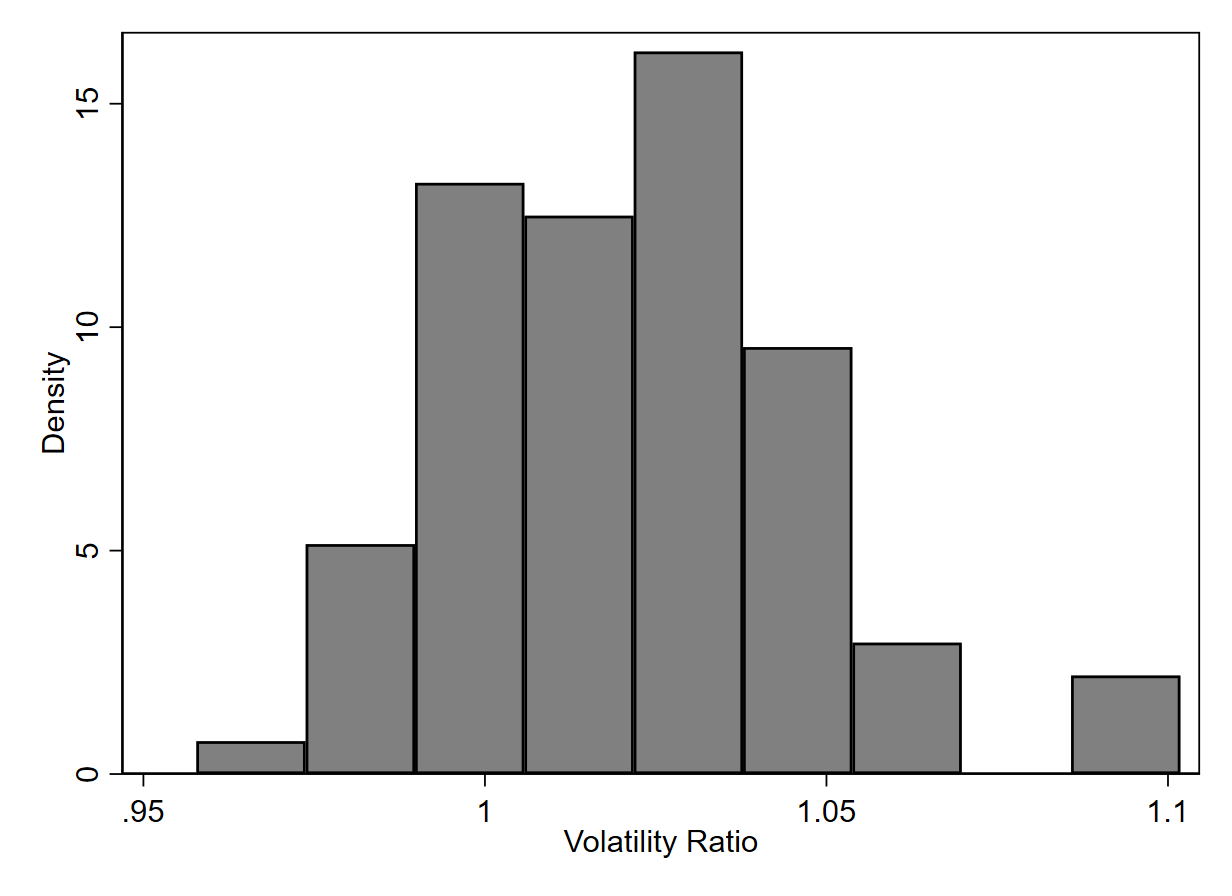}
    	\caption{Yearly}\end{subfigure}
	\hspace{3.5cm}
	\begin{subfigure}[b]{0.3\textwidth}
	   \centering
	   \includegraphics[width=1.1\textwidth]{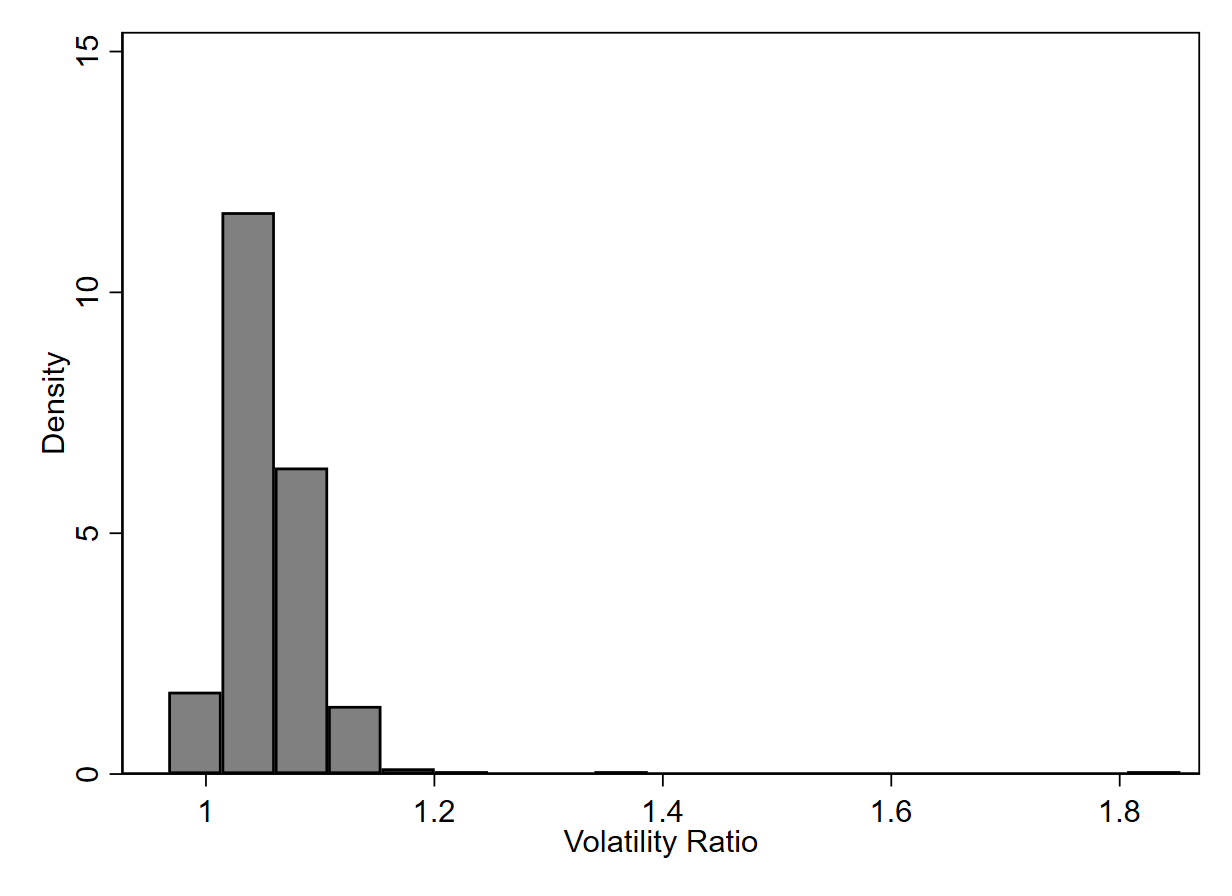}
	\caption{Yearly NBER}\end{subfigure}\\

\noindent\justifying
\footnotesize Note: The graphs show the distribution of the ratio of the volatility of HP-filtered output to HP-filtered sales across sectors. Panels (a), (b) and (c) represent data from the Manufacturing \& Trade Inventories \& Sales data of the US Census, while Panel (d) shows data from the NBER CES Manufacturing data. Both sources are described in the data Section. Panels (a) and (d) have no aggregation, while for Panels (b) and (c) I sum monthly output and sales to get quarterly and yearly output and sales.
\end{figure}

\paragraph{Inventory Policy} 
The definition of the $I$ firms problem in the model implies a constant inventory-to-future sales ratio governed by $\alpha$. This suggests that inventories are a linear function of sales. Figure \ref{lowess_graphs} shows the augmented component-plus-residual plot of the end-of-period stock of inventories as a function of current sales (the same picture arises for next-period sales). The underlying regression includes time and sector fixed effects. The graph is useful for detecting deviations from linearity in the relationship.
 
\begin{figure}[H]
\caption{Inventories and Sales}
\label{lowess_graphs}
\begin{minipage}[b]{.5\linewidth}
\centering\includegraphics[width=.9\textwidth]{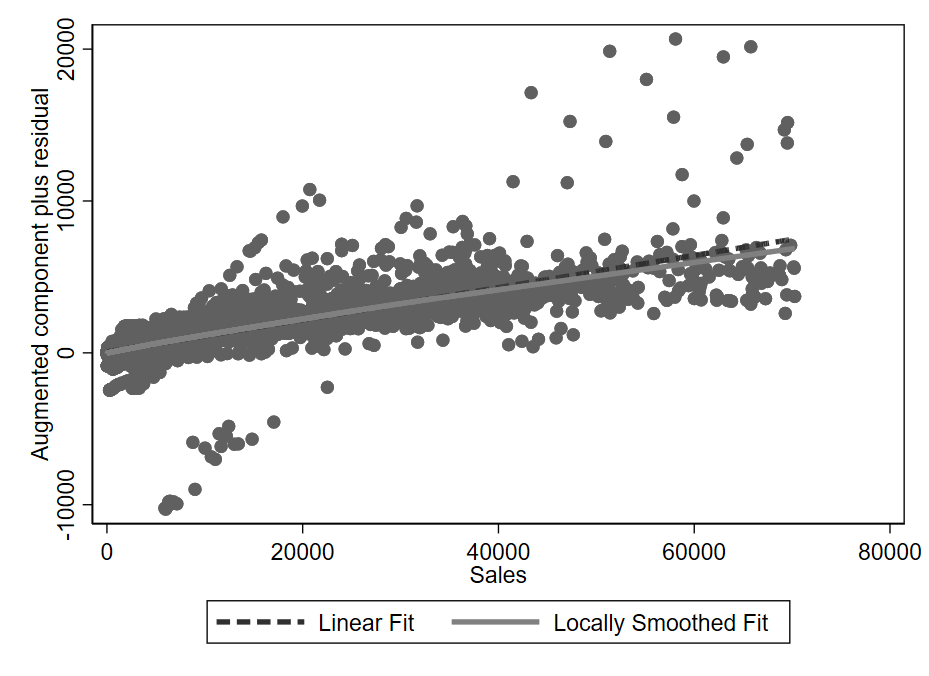}  \subcaption{Inventories and Sales - NBER CES}
\end{minipage}%
\hspace*{.15cm}\begin{minipage}[b]{.5\linewidth}
\centering\includegraphics[scale=0.234]{{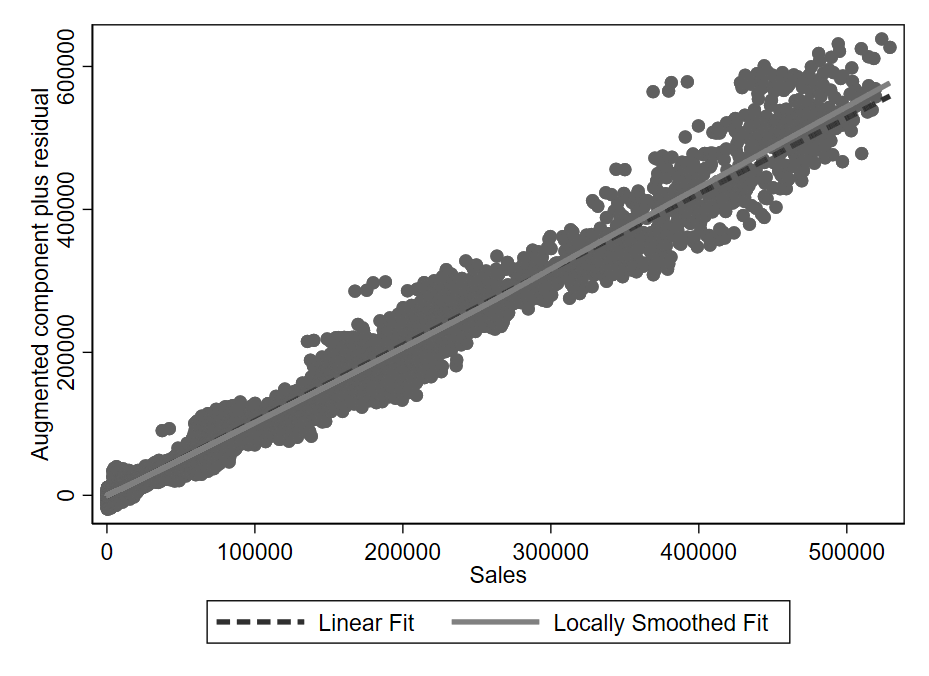}}\subcaption{Inventories and Sales - US Census} 
\end{minipage}
\\
\noindent\justifying
 \footnotesize Note: The figure depicts the augmented component-plus-residual plot of the regression of inventories over sales, including time and industry fixed effects. The black line represents the linear fit of the model. The grey line is a locally weighed smoothing fit. If the data presented significant deviations from linearity the two lines would be very different. 
\end{figure}

Figure \ref{lowess_graphs} suggests that the linearity assumption is very close to the data. The function deviates from linearity only at high sales deviations (recall the estimated model has two-way fixed effects). This suggests that the inventory-to-sales ratio is mostly constant, other than in particularly high sales periods when it starts to decline. 

Table \ref{alphaU} provides the correlation between sector position and inventory sales ratios. The two measures are positively correlated, which suggests that, in the data, more upstream sectors tend to hold a larger fraction of future sales as inventories. 

\begin{table}[H]
\caption{Inventories and Upstreamness}
\label{alphaU}
\center\begin{threeparttable}
\scriptsize\input{input/alphaU.tex}
\begin{tablenotes}[flushleft]
\item \footnotesize Note: This table shows the results of the estimation of $\alpha$ against upstreamness. Column (1) reports the result of the OLS estimate while Column (2) includes industry fixed effects. change in sales.
\end{tablenotes}
\end{threeparttable}
\end{table}

\subsection*{Test of Uncorrelatedness of Instruments}
\justify As discussed in the main text, the identifying assumption for the validity of the shift share design is conditional independence of shocks and potential outcomes. Since this assumption cannot be tested, I provide evidence that the shares and the shocks are uncorrelated to alleviate endogeneity concerns. I test the conditional correlation by regressing the shares on future shocks and industry fixed effects. Formally
\begin{align*}
    \xi^r_{ijt}=\beta \hat\eta_{jt+1}+\gamma_{it}^r+\epsilon_{ijt}^r.
\end{align*}
The estimation results reported in Table \ref{orthogonality_test} suggest that the two are uncorrelated.
\begin{table}[H]
\caption{Test of Uncorrelatedness of Instruments}
\center
\label{orthogonality_test}
\scriptsize\input{input/orthogonality_test.tex}
\end{table}

\newpage

\section{Additional Results - Section \ref{results}}\label{app_empirics_s4}


\subsection*{\cite{adao} Inference}\addcontentsline{toc}{subsection}{Ad\~ao et al. (2019) Inference}
I follow \cite{adao} (AKM) to compute the standard errors in the main regression \ref{cardinalreg}. The practical difficulty in doing so in this context is threefold: i) I am interested in interaction coefficients between dummies that split the sample and the shift-share regressor; ii) my main specification includes fixed effects; and iii) I use multiple instruments for the interactions of interest. These features of the problem imply that I cannot rely directly on commonly used routines to compute AKM inference. 

I solve this problem with the following steps:
\begin{enumerate}
    \item I use 2SLS instead of the common IV routines. This implies that the first step is to compute the prediction $\hat\eta^{rj}_{it}$ for the Upstreamness bin $j$ of the regressions
    \begin{align}
        \mathbbm{1}\{U_{it-1}^r\in[j,j+1]\}\eta_{it}^r = \sum_{k=1}^5 \gamma_k \mathbbm{1}\{U_{it-1}^r\in[k,k+1]\}+ \beta_k \mathbbm{1}\{U_{it-1}^r\in[k,k+1]\}\tilde\eta_{it}^r+ \delta_{i}^r+ \epsilon_{it}^{rj}
    \end{align}
    This step generates the 2SLS for each interacted instrument with the appropriate fixed effects.
    \item Demean both the residual and the outcome $\Delta\log Y^r_{it}$ by subtracting the time-series average within each country-industry to obtain the equivalent fixed effect as the main estimation
    \begin{align}
\bar X^{r}_{it}=X^{r}_{it}-\frac1{T}\sum_t X^{r}_{it},
    \end{align}
where $X^{r}_{it}=\{\Delta\log Y^r_{it}, \{\hat\eta^{rj}_{it}\}_{j=1,...,5}\}$.
\item Finally, estimate separately for each $j=1,...,5$
\begin{align*}
\bar{\Delta\log Y^{r}_{it}}=\chi_j\bar{\hat{ \eta}}_{it}^{rj}+\upsilon_{it}^{rj} \quad \textbf{ if } U_{it-1}^r\in[j,j+1]
\end{align*}
This allows me to recover the original point estimates but apply the AKM standard error correction in the last step. 
\end{enumerate}

\begin{figure}[H]
\caption{AKM estimates}
\centering
\label{akm_fig}
  \includegraphics[width=.6\linewidth]{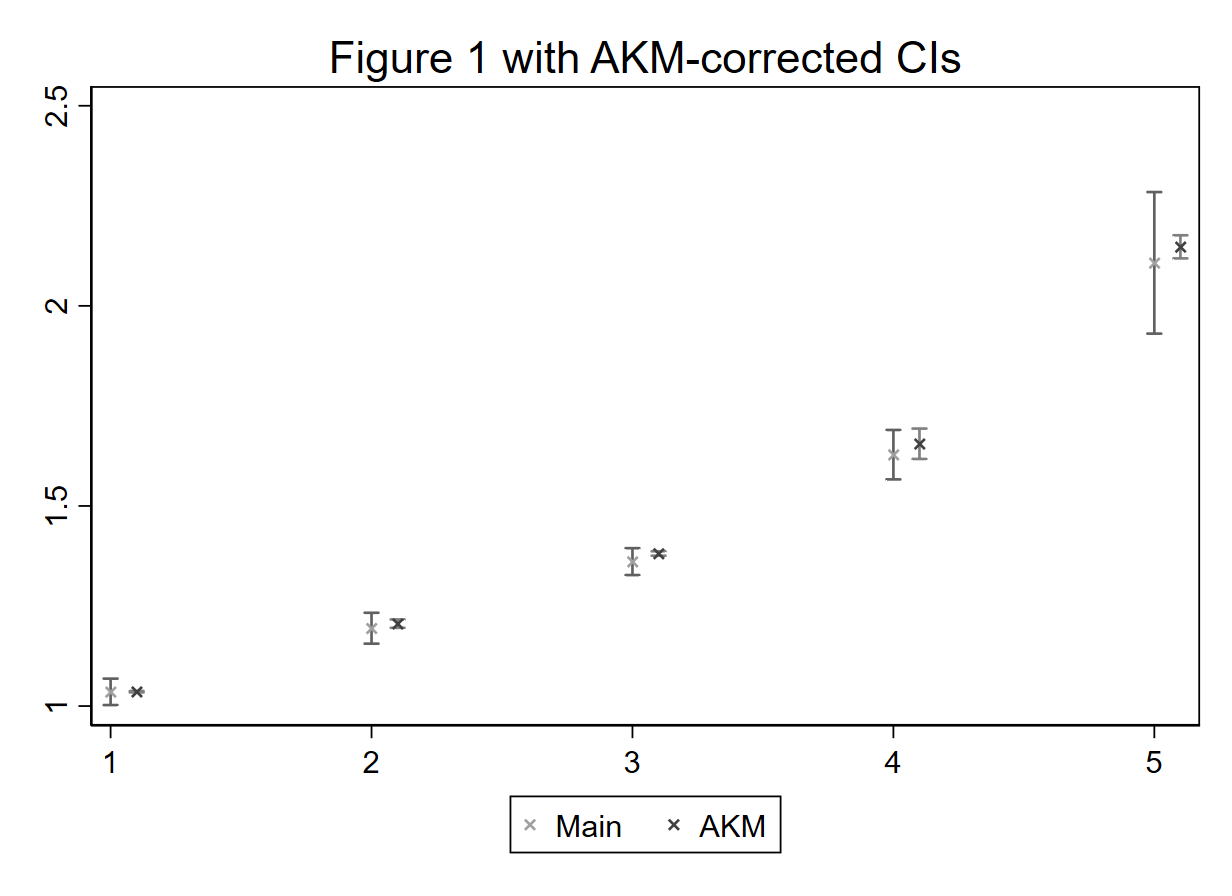}
\\\vspace{5pt}
\noindent\justifying
\footnotesize Note: the figure shows the AKM estimates for the main result. 
\end{figure}

\subsection*{Alternative Demand Shocks}\addcontentsline{toc}{subsection}{Government Consumption Shocks}
\paragraph*{Instrumenting Demand Shocks with Government Consumption} In the empirical specification in Section \ref{resultsindustrysec}, I use the variation arising from destination-time specific changes in foreign aggregate demand. One might worry that if two countries $i$ and $j$ trade intensely, this measure could be plagued by reverse causality, such that if large sectors grow in country $i$ they could affect demand in country $j$. To further alleviate these concerns, in this section, I use foreign government consumption as an instrument. 
\begin{figure}[htb]
\centering
\caption{Effect of Demand Shocks on Output Growth by Upstreamness Level - Government Consumption}
\label{controlfunction}
  \centering
  \includegraphics[width=.5\linewidth]{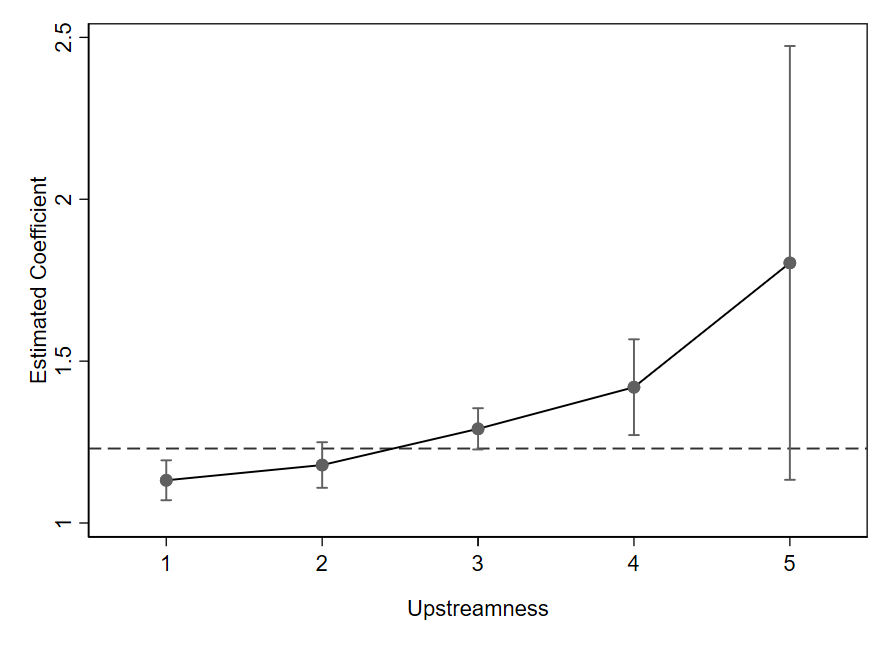}
\\\vspace{5pt}
\noindent\justifying
\footnotesize Note: The figure shows the marginal effect of demand shocks on industry output changes by industry upstreamness level using the government consumption instrument. The dashed horizontal line represents the average coefficient. The vertical bands illustrate the 95\% confidence intervals around the estimates. The regression includes country-industry fixed effects, and the standard errors are cluster-bootstrapped at the country-industry level. Note that due to relatively few observations above 6, all values above 6 have been included in the $U\in[5,\infty)$ category. The full regression results are reported in Table \ref{controlfunctiontable}.
\end{figure}
More specifically, the WIOD data contains information about the value of purchases of the government of country $j$ of goods of industry $r$ from country $i$ in period $t$. Denote this $G^r_{ijt}$. I apply the same steps as in Section \ref{methodology}, replacing consumer purchases with government ones. This procedure allows me to create a measure $\hat\eta^{rG}_{it}=\sum_j\xi^{r}_{ij}\hat\eta^G_{jt}$. With the understanding that, as in the main results, the estimated destination-time shifter $\hat\eta^G_{jt}$ is estimated by excluding all purchases of goods from country $i$ or industry $r$.
I then follow the same steps as in the main specification in eq. (\ref{cardinalreg}) with the alternative instrument. Figure \ref{controlfunction} shows the estimated coefficients of interest. The full estimate is reported in Table \ref{controlfunctiontable}. I confirm the main results both qualitatively and quantitatively.
\begin{table}[H]
\caption{Effect of Demand Shocks on Output Growth by Upstreamness Level - Government Consumption Instrument}
\label{controlfunctiontable}
\center\begin{threeparttable}
\scriptsize\input{input/g_instrument_results}
\begin{tablenotes}[flushleft]
 \item \footnotesize Note: The Table shows the results of the regression in equation \ref{cardinalreg} with output changes as the dependent variable and using the government consumption instrument. In particular, I regress the output changes over output on the instrumented demand shocks interacted with dummies taking value 1 if upstreamness is in the $[1,2]$ bin, $[2,3]$ bin, and so on. Observations with upstreamness above 6 are included in the $[5,\infty)$ bin. All regressions include producing industry-country fixed effects. Columns 1 to 5 report the first-stage results for each of the endogenous variables. Column 6 reports the second-stage results. Standard errors are clustered at the producing industry-country level.
\end{tablenotes}
\end{threeparttable}
\end{table}

\newpage

\section{Robustness Checks}\label{robustnessappendix}

In this section, I provide a set of robustness checks. First, I apply the correction proposed by \cite{borusyak2020non} to correct for potential omitted variable bias. To compare my estimates with existing ones in the recent literature I re-estimate the reduced form result after discretizing the network data. I also reproduce the main result in the ordinal, rather than cardinal, binning, and under alternative fixed effects models to estimate the final demand shifters. I conclude by showing that the results are robust to estimation on deflated data, to account for potential price effects, using time-varying aggregation shares and controlling for past output as suggested by \cite{aak}.

\subsection*{Re-centered Instrument}
\addcontentsline{toc}{subsection}{Re-centered Instrument}
In a recent paper \cite{borusyak2020non} show that when using shift-share design, there is a risk of omitted variable bias arising from potentially non-random shares. They also suggest to re-center the instrument to prevent such bias by using the average counterfactual shock.

I apply this methodology by permuting N=1000 times, within each year, the distribution of destination shocks $\hat\eta_{jt}$. After the permutation, I compute the average for each treated unit and demean the original demand shock to create $\tilde \eta_{it}^r=\hat \eta_{it}^r-\mu_{it}^r$. Where $\mu_{it}^r\equiv \frac{1}{N}\sum_n \sum_j \xi_{ij}^r\tilde\eta_{jt}$ and $\tilde\eta_{it}^r$ is the permuted shock. I then re-estimate the main specification in equation \ref{cardinalreg} with the re-centered shocks. The results, shown in Table \ref{recenteredtable}, are unchanged both qualitatively and quantitatively. 

\begin{table}[H]
     \caption{Re-centered Instrument Estimation}
         \label{recenteredtable}
   \centering\begin{threeparttable}
    \input{input/rob_recentered}
    \begin{tablenotes}[flushleft]
\item \footnotesize Note: The Table shows the results of the regression in equation \ref{cardinalreg} in column 1 and the re-centered instrument approach in \cite{borusyak2020non}. The latter is done by computing permutations of the shocks to demean the shift-share shock. Standard errors are clustered at the producing industry-country level.
\end{tablenotes}
\end{threeparttable}
\end{table}

\subsection*{Alternative Shifter Estimation}
\addcontentsline{toc}{subsection}{Alternative Shifter Estimation}
\normalsize
In section \ref{methoddatasec}, I used the fixed effect model to gauge the idiosyncratic demand shocks. Such a model may confound other sources of variation such as supply shocks, along with the object of interest. To investigate this possibility I use an alternative econometric model to extract the demand shocks. 
Following \cite{kramarz2020volatility} more closely, I include producer fixed effects: $\gamma_{it}^r$ is the fixed effect for the producing industry $r$ in country $i$ at time $t$, namely
\begin{align}
\Delta f_{kjt}^s=\eta_{jt}(i,r)+\gamma_{jt}^s+\nu_{kjt}^s\quad k\neq i, s\neq r.
\label{fixedeffectssupply}
\end{align}
Where the conditions $k\neq i, s\neq r$ ensure that domestically produced goods used for final consumption are not included in the estimation and neither are the goods within the same sector.  The result for the main specification (equation \ref{cardinalreg}) is presented in Table \ref{fe_robustness}.
The findings confirm the main results in Section \ref{resultsindustrysec} both qualitatively and quantitatively. 
\begin{table}[H]
\caption{Effect of Demand shocks by level of Upstreamness - Alternative Shifter Estimation}
\label{fe_robustness}
\center\begin{threeparttable}
  \scriptsize\input{input/alternative_shifter}
\begin{tablenotes}[flushleft]
\item \footnotesize Note: The Table shows the results of the regression in equation \ref{cardinalreg}. In particular, I regress the growth rate of output on the instrumented demand shocks interacted with dummies taking value 1 if upstreamness is in the $[1,2]$ bin, $[2,3]$ bin, and so on. Observations with upstreamness above 6 are included in the $[5,\infty)$ bin. All regressions include producing industry-country fixed effects. Columns 1 to 5 report the first-stage results for each of the endogenous variables. Column 6 reports the second-stage results. Standard errors are clustered at the producing industry-country level.
\end{tablenotes}
\end{threeparttable}
\end{table}
\newpage
\raggedright\section{Omitted Proofs}\label{proofs}
\normalsize
\justifying

\begin{proof}[Proof of Lemma \ref{thmoutputline}]
    Define the operator $\mathcal{F}^n_t[x_t]=I_n(\mathbb{E}_t[x_{t+1}+ \sum_{i=0}^{n-1}\mathcal{F}^{i}_{t+1}[x_{t+1}]- \mathcal{F}^{i}_t[x_t]])$ with the initial condition $\mathcal{F}^0_t[x_t]=I_0(\mathbb{E}_t [x_{t+1}])$. Further, define $\Delta \mathcal{F}^n_t[x_t]\equiv \mathcal{F}^n_t[x_t]-\mathcal{F}^n_{t-1}[x_{t-1}].$ 

Then, sectoral output at each stage is given by
\begin{align*}\nonumber
    Y^0_t&=D_t^0+\mathcal{F}^0_t[D_t^0]-\mathcal{F}^0_{t-1}[D_t^0]\\\nonumber
    Y^1_t&=D_t^0+\mathcal{F}^0_t[D_t^0]-\mathcal{F}^0_{t-1}[D_t^0]+\mathcal{F}^1_t[D_t^0]-\mathcal{F}^1_{t-1}[D_t^0]\\\nonumber
    Y^2_t&=D_t^0+\mathcal{F}^0_t[D_t^0]-\mathcal{F}^0_{t-1}[D_t^0]+\mathcal{F}^1_t[D_t^0]-\mathcal{F}^1_{t-1}[D_t^0] +\mathcal{F}^2_t[D_t^0]-\mathcal{F}^2_{t-1}[D_t^0]\\\nonumber
    &\vdots\\\nonumber
    Y^n_{t}&=D_t^0+\sum_{i=0}^n\mathcal{F}^i_t[D_t^0]-\mathcal{F}^i_{t-1}[D_t^0]=\\
    Y^n_{t}&=D_t^0+\sum_{i=0}^n\Delta\mathcal{F}^i_t[D_t^0].
\end{align*}
\end{proof}

\begin{proof}[Proof of Proposition \ref{propbullwhip}]
I prove the proposition in three steps
\begin{enumerate}[label=\alph*)]
\item From Proposition \ref{thmoutputline}, evaluating the derivative at stage 0
\begin{align*}
    \diffp{{Y^0_t}}{{D_t^0}}=1+\rho I^\prime_0.
\end{align*}
At stage 1
\begin{align*}
    \diffp{{Y^1_t}}{{D_t^0}}&=1+\rho I^\prime_0+\frac{\partial }{\partial D^0_t}\left[\mathcal{F}^1_t \right]=\\
    &=1+\rho I^\prime_0+\frac{\partial }{\partial D^0_t}\left[I_1(\mathbb{E}_t[D^0_{t+1}+\mathcal{F}^0_{t+1}-\mathcal{F}^0_t]) \right]=\\
    &=1+\rho I^\prime_0+I^\prime_1(\rho+\rho^2I^\prime_0-\rho I^\prime_0)=\\
    &=1+\rho I^\prime_0+\rho I^\prime_1(1+\rho I^\prime_0-I^\prime_0)=1+\rho I^\prime_0+\rho I^\prime_1(1+(\rho-1) I^\prime_0).
\end{align*}
At stage 2
\begin{align*}
    \diffp{{Y^2_t}}{{D_t^0}}&=1+\rho I^\prime_0+\rho I^\prime_1(1+\rho I^\prime_0-I^\prime_0)+ \frac{\partial }{\partial D^0_t}\left[\mathcal{F}^2_t \right]=\\
    &=1+\rho I^\prime_0+\rho I^\prime_1(1+\rho I^\prime_0-I^\prime_0)+\frac{\partial }{\partial D^0_t}\left\{ I_2\left(\mathbb{E}_t\left[D^0_{t+1}+\mathcal{F}^1_{t+1}-\mathcal{F}^1_t+\mathcal{F}^0_{t+1}-\mathcal{F}^0_t\right ] \right)  \right\}=\\
   &= 1+\rho I^\prime_0+\rho I^\prime_1(1+\rho I^\prime_0-I^\prime_0)+ I^\prime_2\left[\rho +\diffp{{\mathbb{E}_t\mathcal{F}^1_{t+1}}}{{D_t^0}} - \diffp{{\mathcal{F}^1_{t}}}{{D_t^0}}+ \diffp{{\mathbb{E}_t\mathcal{F}^0_{t+1}}}{{D_t^0}}- \diffp{{\mathcal{F}^0_{t+1}}}{{D_t^0}}\right]=\\
   &=1+\rho I^\prime_0+\rho I^\prime_1(1+\rho I^\prime_0-I^\prime_0)+ I^\prime_2\left[\rho +\diffp{{\mathbb{E}_t\mathcal{F}^1_{t+1}}}{{D_t^0}} - \rho I^\prime_1(1+(\rho-1) I^\prime_0)+ \diffp{{\mathbb{E}_t\mathcal{F}^0_{t+1}}}{{D_t^0}}- \rho I^\prime_0\right].
\end{align*}
Focusing on $\diffp{{\mathbb{E}_t\mathcal{F}^0_{t+1}}}{{D_t^0}}$
\begin{align*}
    \diffp{{\mathbb{E}_t\mathcal{F}^0_{t+1}}}{{D_t^0}}&=\diffp{{}}{{D_t^0}}\mathbb{E}_t[I_0(\mathbb{E}_{t+1}D_{t+2}^0]=\diffp{{}}{{D_t^0}}\mathbb{E}_t[I_0\left(\bar D(1-\rho)+\rho D^0_{t+1}\right)]=\\
    &=\diffp{{}}{{D_t^0}}\int I_0\left((1-\rho)\bar{D}+\rho D_{t+1}^0\right)f(\epsilon_{t+1})d \epsilon_{t+1}=\\
    &=\diffp{{}}{{D_t^0}}\int I_0\left((1-\rho)\bar{D}+\rho ((1-\rho)\bar{D}+\rho D^0_t+\epsilon_{t+1})\right)f(\epsilon_{t+1})d \epsilon_{t+1}=\\
    &=\int\diffp{{}}{{D_t^0}} I_0\left((1-\rho)\bar{D}+\rho ((1-\rho)\bar{D}+\rho D^0_t+\epsilon_{t+1})\right)f(\epsilon_{t+1})d \epsilon_{t+1}=\\
    &=\int I^\prime_0\rho^2 f(\epsilon_{t+1})d \epsilon_{t+1}=I^\prime_0\rho^2\int  f(\epsilon_{t+1})d \epsilon_{t+1}=I^\prime_0\rho^2.
\end{align*}
Where in the fourth step I make use of the bounded support assumption and the Leibniz rule. Identical steps yield $\diffp{{\mathbb{E}_t\mathcal{F}^1_{t+1}}}{{D_t^0}}=\rho^2 I^\prime_1(1+\rho I^\prime_0(\rho-1))$. Plugging in and rearranging
\begin{align*}
     \diffp{{Y^2_t}}{{D_t^0}}&=1+\rho I^\prime_0+\rho I^\prime_1(1+(\rho-1) I^\prime_0)+I^\prime_2(1+(\rho-1) I^\prime_0)(1+(\rho-1) I^\prime_0)
\end{align*}
Similarly at higher stages. By forward induction, this can be compactly written as 
\begin{align*}
    1+\sum_{i=0}^n\rho I^\prime_i\prod_{j=0}^{i-1} [1+(\rho-1)I^\prime_{j}].
\end{align*}
Hence, dividing through by $D^0$, to a first-order approximation
\begin{align*}
    \frac{\Delta Y^n_t}{D^0}=\frac{\Delta^0_t}{D^0}\left[1+\sum_{i=0}^n\rho I^\prime_i\prod_{j=0}^{i-1} [1+(\rho-1)I^\prime_{j}]\right],
\end{align*}
Where $\Delta^0_t=D_t-D_t-1$. Noting that, in the non-stochastic steady state $D^0=Y^n,\,\forall n$ and replacing with log changes
\begin{align*}
    \Delta \log  Y^n_t=\Delta\log D^0_t\left[1+\sum_{i=0}^n\rho I^\prime_i\prod_{j=0}^{i-1} [1+(\rho-1)I^\prime_{j}]\right].
\end{align*}
\item Given point a), if $0<I^\prime_i<\frac{1}{1-\rho}$, $\forall\,i$, every term in the product is positive, hence $ \Delta \log  Y^n_t>\ \Delta \log  Y^{n-1}_t$, $\forall n,t$. 

\item Finally, note that to a first order $\frac{\Delta \log  Y^n_t}{\Delta \log  D^0_t}-\frac{\Delta \log  Y^{n-1}_t}{\Delta \log  D^0_t}=\rho I^\prime_n\prod_{j=0}^{n-1} [1+(\rho-1)I_{j}^\prime$. I want to show under which condition this increases in $n$. Hence, when $\frac{\Delta \log  Y^n_t}{\Delta \log  D^0_t}-\frac{\Delta \log  Y^{n-1}_t}{\Delta \log  D^0_t}-\frac{\Delta \log  Y^{n-1}_t}{\Delta \log  D^0_t}+\frac{\Delta \log  Y^{n-2}_t}{\Delta \log  D^0_t}>0$. This is equal to 
\begin{align*}
    &\rho I^\prime_n\prod_{j=0}^{n-1} [1+(\rho-1)I_{j}^\prime]-\rho I^\prime_{n-1}\prod_{j=0}^{n-2} [1+(\rho-1)I_{j}^\prime]=\\
    &\rho (I^\prime_n-I^\prime_{n-1})\prod_{j=0}^{n-2} [1+(\rho-1)I_{j}^\prime]+\rho I^\prime_n (1+(\rho-1)I^\prime_{n-1}).
\end{align*}
If the condition in equation (\ref{Iprimecond}) is satisfied, then the second term is positive, while the first depends on the sign of $I^\prime_n-I^\prime_{n-1}$. Hence if $I^\prime_n$ is not too decreasing in $n$, the second term dominates, the expression is positive and, therefore, $\frac{\Delta \log  Y^n_t}{\Delta \log  D^0_t}-\frac{\Delta \log  Y^{n-1}_t}{\Delta \log  D^0_t}$ increases in $n$.  \end{enumerate}
 \end{proof}

\subsection*{Equilibrium Definition and Characterization}

\begin{definition}[Equilibrium]
An equilibrium in this economy is given by a sequence of i) household consumption and labor policies $L_t, C_{s,t},\, \forall s$ that, given prices, maximize their utility; ii) a sourcing policy of $C$ firms $Y_{sr,t}, \forall s,r$ that, given input prices, maximize their profits; iii) pricing, inventory and output policies of $I$ firms so that, given input prices, they maximize their profits and iv) market clearing conditions for all differentiated varieties and labor.
\end{definition}

\paragraph*{Households} The household expenditure problem is given by
\begin{align*}
    \max_{C,L} U=\log C-L\quad \text{st} \quad wL +\Pi=PC
\end{align*}
defining $\lambda$ the multiplier on the budget constraint, this implies
\begin{align*}
\lambda w&=1,\\
C^{-1}&=\lambda P.
\end{align*}
With the numeraire condition $w=1$, this implies $PC=1$.

The expenditure minimization on differentiated varieties is given by 
\begin{align*}
    \min_{C_s} \sum_{s\in S} C_s p_s \quad\text{st} \quad \bar U=\Pi_{s\in S} C_s^{\beta_s}
\end{align*}
which implies the optimal domestic expenditure on variety $s$, denoted $E^D_s= \beta_s$.
The total expenditure, given by the sum of domestic and foreign expenditures, is  $D_t=PC+X_t=1+X_t$. Given the stochasatic process for $X_t=(1-\rho)\bar X+\rho X_{t-1}+\epsilon_t$, the total expenditure evolves according to $D_t=\rho D_{t-1}+(1-\rho)\bar D+\epsilon_t$, where $\bar D\coloneqq 1+\bar X$.

\paragraph*{Production} $C$ firms operate like in the standard network model without productivity shocks. Formally, taking prices as given, they solve 
\begin{align*}
    \max_{l_s, Y_{rs,t}} p_{s,t} Z_s l_{s,t}^{1-\gamma_s}\left(\sum_{r\in R} {a_{rs}}^{1/\nu}Y_{rs,t}^{\frac{\nu-1}{\nu}}\right)^{\gamma_s\frac{\nu}{\nu-1}} -w l_{s,t} - \sum_{r\in R} p_r Y_{rs,t}.
\end{align*}
With $Z_s=(1-\gamma_s)^{\gamma_s-1}\gamma_s^{-\gamma_s\frac{\nu}{\nu-1}}$. Given $w=1$, the optimal amount of labor costs is $l_{s,t}=(1-\gamma_s)p_{s,t} Y_{s,t}$, while the total expenditure on intermediates is the $\gamma_s p_{s,t} Y_{s,t}$. Within intermediate expenditure, firms in sector $s$ spend on input $r$ $p_{r,t} Y_{rs,t}=(p_{r,t}/\mathcal{C}_{s,t})^{\nu-1} a_{rs}\gamma_s p_{s,t} Y_{s,t} $, where $\mathcal{C}_{s,t}$ is the ideal intermediate goods cost index of the firm.

At the optimum the firm therefore uses
\begin{align*}
    Y_{rs,t}=\gamma_s(1-\gamma_s)a_{rs} p_{s,t}Y_{s,t}p_{r,t}^{-\nu}\left(\sum_{q\in R} a_{qs} p_{q,t}^{1-\nu}\right)^{-1},
\end{align*}
units of input $r$. Plugging back into the production function and using the normalization constant $Z_s$ implies the pricing equation
\begin{align*}
    p_{s,t}=w^{1-\gamma_s}\left(\gamma_s^{-1}\sum_{q\in R}a_{qs}p_{q,t}^{1-\nu}\right)^{\frac{\gamma_s}{1-\nu}}.
\end{align*}
Taking logs and noting that  $\log w=0$ given the numeraire condition 
\begin{align*}
    \log(p_{s,t})= \frac{\gamma_s}{1-\nu}\log\left(\gamma_s^{-1}\sum_{q\in R}a_{qs}p_{q,t}^{1-\nu}\right).
\end{align*}
This condition implies a system of equations solved by $p_s=1,\,\forall s$. To see this, conjecture the solution and note that $\sum_q{a_{qs}=\gamma_s}$ by constant returns to scale.\footnote{An alternative way to obtain this result is to write the system in relative prices $\log(p_s/w)$, this is solved by $p_s=w,\,\forall s$ and then imposing the numeraire condition $w=1$.} 

$I$ firms have the same expenditure minimization problem, meaning that their optimal expenditure shares will be identical to the ones of $C$ firms. However, they have a different pricing and inventory problem. 

I assume that $I$ firms compete \`a la Bertrand and therefore price at the marginal cost of the $C$ firms. As a consequence, the pricing problem of $I$ firms is solved by a vector $p^I_s=1,\,\forall s$. Note that this is an equilibrium since $Z_s^I>Z_s$ so that $I$ firms charge a markup $\mu_s^I>1$ and obtain positive profits. The quantity problem is then given by solving the dynamic problem 
\begin{align*}
    \max_{Y_{s,t},I_{s,t}, Q_{s,t}}\;&\mathbb{E}_t\sum_{t} \beta^t\left[p_{s,t} Q_{s,t} - c_{s,t}Y_{s,t} -\frac{\delta}{2}(I_{s,t}-\alpha Q_{s,t+1})^2 \right] \quad st\quad I_{s,t}=I_{s,t-1}+Y_{s,t}-Q_{s,t},
\end{align*}
noting that $p_s=1$ and therefore $Q_{s,t}$ is equal to the demanded quantity at $p_s=1$, which is given the expenditure on good $s$ is given by consumer expenditure $D_{s,t}=\beta_s D_t$ and the expenditure from other firms. Denote total demand on good $s$ by final consumers and other firms $\mathcal{D}_{s,t}$. Note that, as the vector of prices is a constant and, given the absence of productivity shocks, the marginal cost of all firms is also constant. To fix ideas, suppose $I$ firms in sector $s$ have productivity $Z_s^I=Z_s/\zeta_s $, with $\zeta_s<1$, then their marginal cost given by $\zeta_s$. Then the problem becomes
\begin{align*}
    \max_{Y_{s,t},I_{s,t}}\;&\mathbb{E}_t\sum_{t} \beta^t\left[\mathcal{D}_{s,t} - \zeta_s Y_{s,t} -\frac{\delta}{2}(I_{s,t}-\alpha \mathcal{D}_{s,t+1})^2 \right] \quad st\quad I_{s,t}=I_{s,t-1}+Y_{s,t}-\mathcal{D}_{s,t},
\end{align*}
This implies the optimal inventory rule $I_{s,t}=\frac{(\beta-1)\zeta_s }{\delta}+\alpha\mathbb{E}_t \mathcal{D}_{s,t+1}$. I disregard the case in which $I_{s,t}<0$ at the optimum since I can always choose $\bar X$ large enough so that $\alpha\mathbb{E}_t \mathcal{D}_{s,t+1}>\frac{(\beta-1)\zeta_s }{\delta}$ and therefore $I_{s,t}>0$.

At this point, the remaining problem is the definition of $\mathcal{D}_{s,t}$. Towards a resolution, note that in this production network model firms sell part of their output directly to consumers. Denote this output $Y_{s,t}^0$ as it is at $0$ distance from consumption. The linearity of the inventory policy implies that I can characterize the stage-specific problem and aggregate ex-post. For the part of output sold directly to final consumers, $\mathcal{D}_{s,t}^0=D_{s,t}=\beta_s D_t$. Production at this stage is defined as sales plus the change in inventories. As discussed above, sales are given by the total demand for good $s$ at $p_s=1$, which is equal to $\beta_s D_t$. The change in inventories $\Delta I_{s,t}^0=I_{s,t}^0-I_{s,t-1}^0= \alpha(\mathbb{E}_t \mathcal{D}_{s,t+1}^0-\mathbb{E}_t\mathcal{D}_{s,t}^0)$. Using $\mathcal {D}_{s,t}^0=\beta_s D_t$ and the properties of the stochastic process of $D_t$ discussed above, $\Delta I_{s,t}^0=\alpha\beta_s\rho \Delta_t$, where $\Delta_t=D_t-D_{t-1}$. Hence output of $I$ firms in sector $s$ at distance $0$ is $Y_{s,t}^0=\beta_s[D_t+\alpha\rho \Delta_t]$. 

Next note that to produce output $Y_{s,t}^0$ firms demand from input suppliers in sector $r$ an amount $\mathcal{D}^1_{rs,t}=a_{rs}\gamma_s Y_{s,t}^0$. I denote these with $1$ since for this part of their production they operate at distance 1 from consumers. Given this demand, the implied production is $Y_{rs,t}^1=a_{rs}\gamma_s Y_{s,t}^0+\Delta I_{rs,t}^1$. Summing over all possible final producers $s$, implies a total production of $I$ firms in sector $r$ at distance 1 equal to $Y_{r,t}^1=\sum_s a_{rs}\gamma_s Y_{s,t}^0+\Delta I_{rs,t}^1$. I can solve the rest of the model by forward induction and finally aggregating at the firm level over all possible distances from consumers so that output of $I$ firms in sector $k$ is $Y_{k,t}=\sum_{n=0}^\infty Y_{k,t}^n$. Proposition \ref{indoutput} provides the exact closed-form solution for production in the network.

\begin{proof}[Proof of Proposition \ref{indoutput}]
The first part of the Proposition follows immediately from the definition of output at a specific stage $n$ and total sectoral output as the sum over stage-specific output. 
The proof of the second part requires the following steps: first, using the definition of $\left[\tilde{\mathcal{A}}^n\right]_kB$ and denoting $\omega=1+\alpha(\rho-1)$, rewrite total output as
\begin{align*}
Y_{k,t}&=\sum_{n=0}^\infty  \left[\tilde{\mathcal{A}}^n\right]_kB\left[D_t+\alpha\rho\sum_{i=0}^n\omega^i \Delta_t\right]=\left[\tilde{\mathcal{A}}^0 +\tilde{\mathcal{A}}^1+\ldots\right]_kBD_t+\alpha\rho \left[\tilde{\mathcal{A}}^0\omega^0+\tilde{\mathcal{A}}^1(\omega^0+\omega^1)+\ldots\right]_kB\Delta_t\\
     &= \tilde L_k B D_t+ \alpha\rho\left[\sum_{n=0}^\infty\tilde{\mathcal{A}}^n\sum_{i=0}^n\omega^i\right]_kB \Delta_t.
\end{align*}
 The equality between the two lines follows from the convergence of a Neumann series of matrices satisfying the Brauer-Solow condition.
 To show that $Y_{k,t}$ exists non-negative for $\omega-1=\alpha(\rho-1)\in[-1,0]$, I characterize the problem at the bounds $\omega-1=-1$ and $\omega-1=0$ and exploit monotonicity inbetwee. Note that if $\omega-1=-1$ then $\omega=0$, the second term collapses to $\tilde L$, and existence and non-negativity follow from $\tilde L$ finite and non-negative. If $\omega-1=0$, then $\omega=1$ and
 \begin{align*}
Y_{k,t}&= \tilde L_k B D_t+ \alpha\rho\left[\sum_{n=0}^\infty(n+1)\tilde{\mathcal{A}}^n\right]_kB \Delta_t= \tilde L_k B D_t+ \alpha\rho\left[\tilde{\mathcal{A}}^0+2\tilde{\mathcal{A}}^1+3\tilde{\mathcal{A}}^2+\ldots\right]_kB \Delta_t\\
 &=\tilde L_k B D_t+ \alpha\rho\tilde L^2_kB \Delta_t,
\end{align*}
where the last equality follows from $\sum^\infty_{i=0} (i+1)\mathcal{A}^i=[I-\mathcal{A}]^{-2}$ if $\mathcal{A}$ satisfies the Brauer-Solow condition. Existence and non-negativity follow from the existence and non-negativity of $[I-\tilde{\mathcal{A}}]^{-2}$. 

If $\omega-1\in(-1,0)$, then $\omega\in(0,1)$. As this term is powered up in the second summation and as it is strictly smaller than $1$, it is bounded above by $n+1$. This implies that the whole second term 
$
    \sum_{n=0}^\infty\tilde{\mathcal{A}}^n\sum_{i=0}^n\omega^i<\sum_{n=0}^\infty (n+1)\tilde{\mathcal{A}}^n=\tilde L^2<\infty.
$ Alternatively, note that the second summation is strictly increasing in $\omega$, as $\omega\leq 1$ the summation is bounded above by $n+1$.
Which completes the proof.
\end{proof}

\begin{proof}[Proof of Proposition \ref{propgrowthrates}]
    From Proposition \ref{indoutput} we have that, differentiating sectoral output with respect to final demand 
    \begin{align*}
        \frac{\partial Y_{k,t}}{\partial D_t}=\tilde L_k B + \alpha\rho\left[\sum_{n=0}^\infty\tilde{\mathcal{A}}^n\sum_{i=0}^n\omega^i\right]_kB
    \end{align*}
    Dividing by steady-state sales $\tilde L_k B \bar D$ and using $\delta D_t=\Delta_t$ and the definition of $\mathcal{U}_k$, the first order change is given by
\begin{align*}
        \frac{\Delta Y_{k,t}}{\tilde L_k B \bar D}
        &\approx\tilde L_k B\frac{\Delta_t}{\tilde L_k B \bar D}+\alpha\rho \left[\sum_{n=0}^\infty(n+1)\tilde{\mathcal{A}}^n \right]_kB \frac{\Delta_t}{\tilde L_k B \bar D}=\frac{\Delta_t}{\bar D}+\alpha\rho\ \mathcal{U}_k \frac{\Delta}{\bar D}.
    \end{align*}    
    Finally, noting that around the non-stochastic steady state $Y_{k}=\tilde L_k B \bar D$, the following holds
    \begin{align*}
        \frac{\Delta Y_{k,t}}{Y_k}
        &=\frac{\Delta_t}{\bar D}+\alpha\rho \mathcal{U}_k \frac{\Delta_t}{\bar D}.
    \end{align*} 
\end{proof}

\begin{proof}[Proof of Proposition \ref{corvolatility}]
Immediate from Proposition \ref{propgrowthrates} since all the terms multiplying $\Delta_t/\bar D$ are constant. The proof of the second part of the Proposition follows from the fact that there is a single demand destination, therefore $\sigma_{\Delta \log D_t}$ is common across industries. Hence the volatility of output growth can only be larger for more upstream industries   
\end{proof}
\begin{proof}[Proof of Proposition \ref{invcs}]
    Immediate from Proposition \ref{propgrowthrates} and the observation that output growth is increasing in $\omega$, which itself is increasing in $\alpha$ and $\rho$, under the maintained assumption that $\omega\in(0,1)$.

    For part c), from Proposition \ref{propgrowthrates}, note that if $\rho\rightarrow 1$, then $\omega=1+\alpha(\rho-1)\rightarrow 1$. Hence, $\mathcal{U}_k=\sum_{n=0}^\infty\frac{1-\omega^{n+1}}{1-\omega}\tilde{\mathcal{A}}^n B_k \bar{D}/Y_k = \sum_{n=0}^\infty(n+1)\tilde{\mathcal{A}}^n/Y_k$. The Nuemann series converges to $[I-\mathcal{A}]^{-2}$ and therefore $\mathcal{U}_k\rightarrow U_k$. 
\end{proof}

\begin{proof}[Proof of Proposition \ref{propdemand}]
    I show that if two industries $r$ and $s$ are downstream symmetric and $\beta_r<\beta_s$ then $\mathcal{U}_r>\mathcal{U}_s$ and, therefore, by Proposition \ref{corvolatility}, $\Var(\Delta\log Y_{rt})>\Var(\Delta\log Y_{st})$.

    The fraction of output sold directly to households is $\beta_r D/Y_r$. This share is increasing in $\beta_r$ since 
    \begin{align}
        \diffp{{\beta_r D/Y_r}}{{\beta_r}}\propto Y_r-\ell_{rr}\beta_r\geq 0.
    \end{align}
    If two industries are downstream symmetric, all the higher-order terms are identical since $a_{ri}=a_{si},\,\forall i$. Since the fraction of output sold to consumers is higher for $s$ than for $r$ the complement fraction sold indirectly is smaller for $s$ than for $r$. As a consequence, $\mathcal{U}_s<\mathcal{U}_r$ and, therefore,  $\Var(\Delta\log Y_{rt})>\Var(\Delta\log Y_{st})$.
\end{proof}

\begin{proof}[Proof of Proposition \ref{propfragmentation}]
    First, note that the case studied in the proposition is one of pure vertical fragmentation: a sector $s$ splits into two sectors $i$ and $j$. This split implies that i) $i$ remains upstream symmetric to itself; ii) $k$ remains downstream symmetric to itself. The consequence of this split is that the network upstream of the fragmented sector $i$ is unchanged the network downstream of the fragmented sector $j$ is unchanged. Consequently, any network path going through $i$ now remains unchanged but has to go through $j$ as well, adding a step in the supply chain. Therefore, any sector $r$ connected to consumers through $i$, ($\ell_{ri}>0$), becomes more upstream and therefore volatile, all else equal. 
\end{proof}

\begin{proof}[Proof of Proposition \ref{growthmulti} ]    Denote by $B_j$ the $R\times1$ vector with a typical element $\beta_j^r$ and by $B$ the $R\times J$ matrix with $B_j$ in the $j^{th}$ column. Also, note that now $\bar D$ is a $J\times 1$ vector with typical element $\bar D_j$. As before $\tilde L_r$ is the $r^{th}$ row of $\tilde L$, which is an $R\times R$ matrix. Sectoral output of a generic sector $r$ is given by 
    \begin{align*}
        Y_{t}^r=\sum_j\tilde{L}_r B_{j}D_{jt}+\alpha\rho\left[\sum_{n=0}^\infty\tilde{\mathcal{A}}^n\frac{1-\omega^{n+1}}{1-\omega}\right]_r B_{j}\Delta_{jt}.
    \end{align*}
To a first order, the change in output is given by
\begin{align*}
    \Delta Y_{t}^r\approx\sum_j\tilde{L}_r B_{j}\Delta_{jt}+\alpha\rho\sum_j\left[\sum_{n=0}^\infty\tilde{\mathcal{A}}^n\frac{1-\omega^{n+1}}{1-\omega}\right]_r B_{j}\Delta_{jt}.
\end{align*}
Dividing by steady-state output 
\begin{align*}
    \frac{\Delta Y_{t}^r}{\tilde{L}_r B \bar D}\approx\frac{\sum_j\tilde{L}_r B_{j}\Delta_{jt}}{\tilde{L}_r B \bar D}+\alpha\rho\sum_j\left[\sum_{n=0}^\infty\tilde{\mathcal{A}}^n\frac{1-\omega^{n+1}}{1-\omega}\right]_r \frac{B_{j}\Delta_{jt}}{\tilde{L}_r B \bar D}.
\end{align*}
Dividing and multiplying by $\bar D_j$ inside the summations
\begin{align*}
    \frac{\Delta Y_{t}^r}{\tilde{L}_r B \bar D}\approx\frac{\sum_j\tilde{L}_r B_{j}\bar D_j}{\tilde{L}_r B \bar D}\frac{\Delta_{jt}}{\bar D_j}+\alpha\rho\sum_j\left[\sum_{n=0}^\infty\tilde{\mathcal{A}}^n\frac{1-\omega^{n+1}}{1-\omega}\right]_r \frac{B_{j}\bar D_j}{\tilde{L}_r B \bar D}\frac{\Delta_{jt}}{\bar D_j}.
\end{align*}
Using the definition of $\mathcal{U}_j^r$
\begin{align*}
    \frac{\Delta Y_{t}^r}{\tilde{L}_r B \bar D}&\approx\sum_j\underbrace{\frac{\tilde{L}_r B_{j}\bar D_j}{\tilde{L}_r B \bar D}}_{\xi_{j}^r}\underbrace{\frac{\Delta_{jt}}{\bar D_j}}_{\eta{jt}}+\alpha\rho \sum_j\frac{\sum_{n=0}^\infty \tilde{\mathcal{A}}^n\frac{1-\omega^{n+1}}{1-\omega}B_j\bar D_j}{\tilde{L}_r B \bar D}\frac{\Delta_{jt}}{\bar D_j}\\
    &=\sum_j\xi_{j}^r\eta_{jt}+\alpha\rho \sum_j\frac{\sum_{n=0}^\infty \tilde{\mathcal{A}}^n\frac{1-\omega^{n+1}}{1-\omega}B_j\bar D_j}{\tilde{L}_r B \bar D}\frac{\Delta_{jt}}{\bar D_j}
    \end{align*}
    Dividing and multiplying the second term inside the summation by $\frac{\tilde{L}_r B_{j} \bar D_j}{\tilde{L}_r B_{j} \bar D_j}$
    \begin{align*}
    \frac{\Delta Y_{t}^r}{\tilde{L}_r B \bar D}&\approx    \sum_j\xi_{j}^r\eta_{jt}+\alpha\rho \sum_j\underbrace{\sum_{n=0}^\infty \frac{\tilde{\mathcal{A}}^n\frac{1-\omega^{n+1}}{1-\omega} B_{j}\bar D_j}{\tilde{L}_r B_{j} \bar D_j}}_{\mathcal{U}^r_j}\underbrace{\frac{\tilde{L}_r B_{j} \bar D_j}{\tilde{L}_r B \bar D}}_{\xi^r_j}\underbrace{\frac{\Delta_{jt}}{\bar D_j}}_{\eta_{jt}}\\
    &=\sum_j\xi_{j}^r\eta_{jt}+\alpha\rho \sum_j\mathcal{U}^r_{j} \xi^r_{j}\eta_{jt}=\eta_{t}^r+\alpha\rho \sum_j\mathcal{U}^r_{j} \xi^r_{j}\eta_{jt}.
    \end{align*}
        Finally, noting that, in steady state, output and sales coincide
\begin{align*}
    \frac{\Delta Y^r_{t}}{Y^r}=\eta_{t}^r+\alpha\rho \sum_j\mathcal{U}^r_{j} \xi^r_{j}\eta_{jt}.
\end{align*}
\end{proof}
\begin{proof}[Proof of Proposition \ref{volatilitymulti}]
    Starting from Proposition \ref{growthmulti}
    \begin{align*}
    \frac{\Delta Y_{t}^r}{Y^r}=\eta_{it}^r+\alpha\rho \sum_j\mathcal{U}_{j}^r \xi^r_{j}\eta_{jt}
    \end{align*}
    To simplify notation, denote $\upsilon_{it}^r\equiv\sum_j\mathcal{U}_{j}^r \xi^r_{j}\eta_{jt}$. 
    Then, the variance of output growth is given by
    \begin{align*}
        \Var(\Delta\log Y_{t}^r)=\Var(\eta_{t}^r)+(\alpha\rho)^2\Var(\upsilon_{t}^r)+2\alpha\rho \Cov(\eta_{t}^r,\upsilon_{t}^r).
    \end{align*}
    Next, I study each term in turn. Using the definition of $\eta_{t}^r$ and iid-ness of $\eta_{jt}$ across $j$, the variance of $\eta^r_{t}$ is 
    \begin{align*}
        \Var(\eta^r_{t})&=\sum_t \left(\eta^r_{t}-\frac{1}{T}\sum_t \eta^r_{t}\right)^2=\sum_t \left(\sum_j\xi^r_{j}\eta_{jt}-\frac{1}{T}\sum_t\sum_j \xi^r_{j}\eta_{jt}\right)^2\\
        &=\sum_t\left( \sum_j \xi_{j}^r(\eta_{jt}-\bar{\eta}_j)\right)^2 \overset{iid}{=}\sum_j {\xi_{j}^r}^2 \Var(\eta_{jt}).
    \end{align*}
    Next, note that the variance of $\upsilon_{it}^r$, by the same arguments, is
    \begin{align*}
        \Var(\upsilon_{t}^r)=\sum_j {U_{j}^r}^2{\xi_{j}^r}^2\Var(\eta_{jt}).
    \end{align*}
    Finally, the covariance term is given by
    \begin{align*}\nonumber \Cov(\eta_{t}^r,\upsilon_{t}^r)&=\Cov\left(\sum_j\xi_{j}^r \eta_{jt}, \sum_j \mathcal{U}_{j}^r\xi_{j}^r\eta_{jt}\right)\\\nonumber
    &=\sum_j\sum_k \xi_{j}^r \mathcal{U}_{k}^r\xi_{k}^r \Cov(\eta_{jt},\eta_{kt})
    \overset{iid}{=} \sum_j \mathcal{U}_{j}^r{\xi_{j}^r}^2\Var(\eta_{jt}).
    \end{align*}
    Where the last equality follows by iid-ness. 
    Combining the three terms
    \begin{align*}\nonumber
        \Var(\Delta\log Y_{t}^r)&=\sum_j {U_{j}^r}^2{\xi_{j}^r}^2\Var(\eta_{jt})+(\alpha\rho)^2\sum_j {\mathcal{U}_{j}^r}^2{\xi_{j}^r}^2\Var(\eta_{jt})+2\alpha\rho \sum_j \mathcal{U}_{j}^r{\xi_{j}^r}^2\Var(\eta_{jt})=\\\nonumber
        &=\sum_j {\xi_{j}^r}^2\Var(\eta_{jt})\left[1+ (\alpha\rho)^2{\mathcal{U}_{j}^r}^2 +2\alpha\rho \mathcal{U}_{j}^r \right]=\sum_j {\xi_{j}^r}^2\Var(\eta_{jt})\left[ 1+ \alpha\rho{\mathcal{U}_{j}^r} \right]^2.
    \end{align*}
\end{proof}

\begin{proof}[Proof of Proposition \ref{corvolatilitymulti}]
        The difference in variances is
\begin{align*}\nonumber
    \Var(\Delta\log Y_{t}^r)-\Var(\Delta\log Y_{t}^s)&=\sigma_\eta\sum_j\xi_{j}^r(\mathcal{U}_{j}^r-\mathcal{U}_{j}^s)
    \end{align*}
    which is positive if $\exists j$ such that $\mathcal{U}_{j}^r>\mathcal{U}_{j}^s$ and $\nexists j$ such that $\mathcal{U}_{j}^r<\mathcal{U}_{j}^s$.
\end{proof}

\begin{proof}[Proof of Proposition \ref{corvolatilitymulti2}]
    First, note that in the special case $\Var(\eta_{jt})=\sigma_\eta,\,\forall j$, the variance of output is given by
    \begin{align*}
        \Var(\Delta\log Y_{t}^r)=\sigma_\eta\sum_j {\xi_{j}^r}^2{(1+\alpha\rho \mathcal{U}_{j}^r)}^2,
    \end{align*}
By assumption $\mathcal{U}^r=\mathcal{U}^s$. First note that
\begin{align*}
    \sum_j \xi^r_j(1+\alpha\rho \mathcal{U}^r_j)=1+\alpha\rho\sum_j \xi^r_j U_j^r= 1+\alpha\rho \mathcal{U}^r,
    \end{align*}
    Where the second equality follows from $\sum_j\xi_j^r=1$ and the third from the observation that $\mathcal{U}^r=\sum_j\xi^r_j \mathcal{U}^r_j$. Hence the assumption that $\mathcal{U}^r=\mathcal{U}^s$ implies that $\sum_j \xi^r_j(1+\alpha\rho\mathcal{U}^r_j)=\sum_j \xi^s_j(1+\alpha\rho\mathcal{U}^s_j)$.
    Next, note that the difference in output growth volatility between the two industries is given by
\begin{align*}\nonumber
    \Var(\Delta\log Y_{t}^r)-\Var(\Delta\log Y_{t}^s)&=\sigma_\eta\left[\sum_j{(1+\alpha\rho\mathcal{U}^r_j)}^2 {\xi_{j}^r}^2- \sum_j {(1+\alpha\rho\mathcal{U}^s_j)}^2{\xi_{j}^s}^2\right]
\end{align*}
The statement in part $a)$ follows immediately by the observation that, since the two distributions have the same mean, the expression is positive if and only if $\Var((1+\alpha\rho\underline{\mathcal{U}}^r)\Xi^r)>\Var((1+\alpha\rho\underline{\mathcal{U}}^s)\Xi^s)$. Since the two distributions have the same means, this is equivalent to $\Psi^s\Xi^s$ second-order stochastic dominating $\Psi^r\Xi^r$.

The proof of point $b)$ follows from the observation that, in the special case $\mathcal{U}^r_j=\mathcal{U}^s_j=\mathcal{U}\,\forall j$
\begin{align*}\nonumber
    \Var(\Delta\log Y_{t}^r)-\Var(\Delta\log Y_{t}^s)&=\sigma_\eta{(1+\alpha\rho\mathcal{U})}^2 \left[\sum_j{\xi_{j}^r}^2- \sum_j {\xi_{j}^s}^2\right].
\end{align*}
Further, since $\sum_j \xi^r_j=\sum_j \xi^s_j=1$, the expression is positive if and only if $\Xi^r$ has a higher variance than $\Xi^s$. Which, in the case of equal means, is equivalent to $\Xi^s$ second-order stochastic dominating $\Xi^r$.
\end{proof}

\newpage

\section{Model Extensions and Additional Theoretical Results}

\subsection*{A Dynamic Model of Optimal Procyclical Inventories}\label{quant_model}\addcontentsline{toc}{subsection}{A Dynamic Model of Optimal Procyclical Inventories}

In this section, I show that optimally procyclical inventory policy obtains as the policy for a firm subject to production breakdowns. Consider a price-taking firm facing some stochastic demand $q(A)$ where $A$ follows some cdf $\Phi$. The firm produces at marginal cost $c$ and, with probability $\chi>0$, is unable to produce in a given period. The problem of the firm is described by the value functions for the ``good'' state where it can produce, and the ``bad'' state where production is halted. The firm can store inventories $I$ between periods. Inventories follow the law of motion $I^\prime=I+y-q(A)$, where $y$ is output and $q(A)$ is, by market clearing, total sales. Suppose further that firms do not face consecutive periods of halted production.
\begin{align*}
V^G(I,A)&=\max_{I^\prime} p q(A)-c y +\beta\mathbb{E}_{A^\prime|A}\left[\chi V^B(I^\prime,A^\prime)+(1-\chi) V^G(I^\prime,A^\prime)\right],\\
V^B(I,A)&=p \min\{q(A),I\}+\beta \mathbb{E}_{A^\prime|A}V^G(I^\prime,A^\prime).
\end{align*}
The first order condition for next period inventories is then given by
\begin{align*}
\frac{1-\beta(1-\chi)}{\beta\chi}c=\frac{\partial \mathbb{E}_{A^\prime|A} V^B(I^\prime,A^\prime)}{\partial I^\prime}.
\end{align*}
Note that the LHS is a positive constant and represents the marginal cost of producing more today relative to tomorrow. This is given by time discounting of the marginal cost payments, which the firm would prefer to backload. Note trivially that if the probability of halted production goes to zero, the firm has no reason to hold inventories. 
The marginal benefit of holding inventories is given by relaxing the sales constraint in the bad state. Denote $P(A,I^\prime)$ the probability that the realization of $A^\prime$ implies a level of demand larger than the firm's inventories, which implies that the firm stocks out. This probability depends on the current state since demand realizations are not independent. Denote $P_{I^\prime}(A,I^\prime)=\partial P(A,I^\prime) /\partial I^\prime$. Then, the following holds
\begin{align*}
    &\frac{\partial \mathbb{E}_{A^\prime|A} V^B(I^\prime,A^\prime)}{\partial I^\prime}=\\
	&pP(A,I^\prime) +\beta P_{I^\prime}(A,I^\prime)\mathbb{E}_{A^{\prime\prime}|A} \left[  V^G(0,A^\prime)-V^G(I^\prime-q(A^\prime),A^\prime)   	\right]+\beta (1-P(A,I^\prime))c>0.
\end{align*}
This states that extra inventories in the bad state imply marginal revenues equal to the price in the event of a stockout. The last two terms state that it makes it less likely that the firm will have to start the next period without inventories and that it will be able to save on marginal cost for production if it does not stock out.

Note that it is immediate that the value of both problems is increasing in the level of inventories the firm starts the period with. It is also straightforward to see that if $\partial \mathbb{E}_{A^\prime|A}/\partial A>0$, namely if shocks are positively autocorrelated, then the expected value in the bad state is non-decreasing in $A$. 
\begin{appprop}[Procyclical Inventories]
\label{propendinventories} Consider two realizations of the demand shifter $A_1>A_2$, then, at the optimum, $I^{\prime\star}(I,A_1)>I^{\prime\star}(I,A_2)$.
\end{appprop}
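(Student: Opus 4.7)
The plan is to prove this via a monotone comparative statics argument on the FOC for $I^{\prime}$, exploiting two monotonicity properties: a simple envelope-theorem characterization of the marginal value of inventories, and the FOSD implication of positive autocorrelation of $A$.

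First, I would apply the envelope theorem in the good state. Since $y = I' - I + q(A)$ is the only channel through which $I$ enters $V^G$ (with $I'$ optimally chosen), $V^G_I(I,A) \equiv c$. Next, I would unpack $V^B_I$. In the bad state, $I' = \max\{I - q(A), 0\}$, which is a kinked function of $I$ at the point $I = q(A)$. Differentiating piecewise,
\begin{equation*}
V^B_I(I,A) = \begin{cases} p, & q(A) > I, \\ \beta c, & q(A) < I, \end{cases}
\end{equation*}
where the second branch uses $V^G_I \equiv c$. Crucially, $V^B(I,A)$ is continuous across the kink, so when I form $\mathbb{E}_{A'|A} V^B(I',A')$ and differentiate under the integral with respect to $I'$ using a Leibniz-style decomposition at the stock-out threshold $A^*$ defined by $q(A^*) = I'$, the boundary term vanishes. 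This reduces the marginal benefit to the clean expression
\begin{equation*}
\frac{\partial \mathbb{E}_{A'|A} V^B(I',A')}{\partial I'} \;=\; p\,P(A,I') + \beta c\,\bigl(1 - P(A,I')\bigr),
\end{equation*}
where $P(A,I') = \Pr(q(A') \geq I' \mid A)$ is the stock-out probability.

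Substituting this into the FOC stated in the excerpt collapses it to
\begin{equation*}
P(A, I^{\prime\star}(I,A)) \;=\; \frac{\frac{1-\beta(1-\chi)}{\beta\chi} c - \beta c}{p - \beta c} \;\equiv\; \bar P,
\end{equation*}
a constant independent of $A$ and $I$ (and strictly between 0 and 1 under the natural parameter restriction $\beta c < K < p$, with $K$ the LHS constant). The proof is then completed by noting two monotonicity facts about $P(A,I')$: (i) it is strictly decreasing in $I'$ (assuming $q$ strictly increasing and the conditional cdf of $A'|A$ strictly increasing), since $P(A,I') = 1 - F(q^{-1}(I')\mid A)$; and (ii) it is strictly increasing in $A$, because positive autocorrelation means $F(\cdot \mid A_1) \leq F(\cdot \mid A_2)$ pointwise whenever $A_1 > A_2$ (FOSD). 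Therefore, comparing $A_1 > A_2$ and using $P(A_1, I^{\prime\star}(I,A_1)) = P(A_2, I^{\prime\star}(I,A_2)) = \bar P$ together with these monotonicities yields $I^{\prime\star}(I,A_1) > I^{\prime\star}(I,A_2)$, as a contradiction argument on the two opposite inequalities rules both out immediately.

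The main obstacle is the technical justification for the step-function characterization of $V^B_I$ and the vanishing boundary term when differentiating $\mathbb{E}_{A'|A}V^B(I',A')$. This requires showing $V^B$ is continuous (straightforward from the continuity of $\min\{\cdot,\cdot\}$ and of $V^G$), and invoking some regularity (a.e. differentiability plus dominated convergence) so that the expectation's derivative coincides with the expectation of the piecewise derivative. Beyond that, the argument is essentially a Topkis-style single-crossing observation made transparent by the fact that $V^G_I$ is a constant, which is what allows the FOC to reduce to an equation solely in the stock-out probability.
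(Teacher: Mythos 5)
Your proof is correct and follows essentially the same route as the paper's: both arguments are a monotone comparative statics exercise on the first-order condition for $I'$, using that its left-hand side is a constant while the marginal benefit of inventories is decreasing in $I'$ and increasing in $A$ (via the stock-out probability and positive autocorrelation). Your version is in fact a slightly tighter packaging of the paper's argument — you note explicitly that the boundary term $\beta P_{I'}(A,I')\,\mathbb{E}\left[V^G(0,A')-V^G(I'-q(A'),A')\right]$ vanishes at the kink (the paper carries it along even though it equals zero) and that the FOC therefore pins down a constant target stock-out probability $\bar P$, which makes the monotonicity step immediate.
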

The optimality condition for inventories shows that the LHS is constant while the RHS increases in inventory holdings and decreases in the level of demand. Evaluating the first order condition at different levels of $A$, it has to be that $I^{\prime\star}(I,A_1)>I^{\prime\star}(I,A_2)$, $\forall A_1>A_2$. In other words, the firm will respond to a positive demand shock by increasing output more than 1-to-1 as it updates inventories procyclically. The reason is that a positive shock today increases the conditional expectation on demand tomorrow. As a consequence, the likelihood of a stock-out for a given level of inventories increases, which implies that the RHS of the first order condition increases as the benefit of an additional unit of inventories rises.

\subsection*{Production Smoothing Motive}\label{smoothingmotive}\addcontentsline{toc}{subsection}{Production Smoothing Motive}
In the main body of the paper, I assume the inventory problem is defined by a quadratic loss function $(I_t-\alpha D_{t+1})^2$. This assumption is a stand-in for the costs of holding inventories or stocking out. However, it imposes two possibly unrealistic restrictions: i) it implies a symmetry between the cost of holding excess inventories and the cost of stocking out; ii) it excludes any production smoothing motive as it implies an optimal constant target rule on expected future sales. In this section, I extend the problem to eliminate these restrictions following \cite{ramey1999inventories} more closely. 
Formally, consider the problem of a firm solving 
\begin{align*}
    \max_{I_t,Y_t}&\mathbb{E}_t\sum_{t} \beta^t\left[D_t-Y_t\left(c +\frac{\theta}{2}Y_t\right) -\frac{\delta}{2}(I_t-\alpha D_{t+1})^2 -\tau I_t\right] \quad st\\\nonumber
    & I_{t}=I_{t-1}+Y_t-D_t,
    \end{align*}
Where the term $Y_t\left(c +\frac{\theta}{2}Y_t\right)$ includes a convex cost of production, which in turn generates a motive to smooth production across periods. The term $\tau I_t$ implies a cost of holding inventories which breaks the symmetry between holding excessive or too little inventories. In what follows, I drop the stage and time indices and denote future periods by $\prime$. The first order condition with respect to end-of-the-period inventories implies
\begin{align*}
    I=(\theta(1+\beta)+\delta)^{-1}\left[ c(\beta-1)-\tau+\delta\alpha \mathbb{E}D^\prime -\theta(D-I_{-1})+\theta\beta\mathbb{E}(D^\prime+I^\prime)\right],
\end{align*}
Define $\mathcal{B}\coloneqq (\theta(1+\beta)+\delta)^{-1}$, taking a derivative with respect to current demand implies
\begin{align*}
    \frac{\partial I}{\partial D}=\mathcal{B}\left(\delta\alpha\rho -\theta(1-\beta\rho)+\theta\beta\frac{\partial \mathbb{E}I^\prime}{\partial D}\right).
\end{align*}
Define $\mathcal{X}\coloneqq\delta\alpha\rho +\theta(1-\beta\rho)$ then
\begin{align*}
    \frac{\partial \mathbb{E}I^\prime}{\partial D}=\mathcal{B}\left(\rho\mathcal{X} +\theta\frac{\partial I}{\partial D}+\theta\beta \frac{\partial\mathbb{E}I^{\prime\prime}}{\partial D}\right).
\end{align*}
Iterating forward and substituting the following obtains.
\begin{appprop}[Cyclicality of Inventories]\label{cyclicalityprop}
A change in demand implies the following change in the optimal inventory level:
 \begin{align*}
    \frac{\partial I}{\partial D}=\left(1-\sum_{i=1}^\infty \left(\mathcal{B}^2\theta^2\beta\right)^i\right)^{-1}\mathcal{B}\mathcal{X}\sum_{j=0}^\infty \left(\mathcal{B}\theta\rho\beta\right)^j.
\end{align*}
If both $\mathcal{B}^2\theta^2\beta$ and $\mathcal{B}\theta\rho\beta$ are in the unit circle then 
\begin{align*}
    \frac{\partial I}{\partial D}=\frac{\mathcal{B}\mathcal{X}}{1-\mathcal{B}\theta\beta\rho}\frac{1-2\mathcal{B}^2\theta^2\beta}{1-\mathcal{B}^2\theta^2\beta}\lessgtr0.
\end{align*}   
\end{appprop}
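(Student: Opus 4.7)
The plan is to differentiate the first-order condition for inventories with respect to current demand and solve the resulting forward-looking linear difference equation by iterated substitution, collecting two nested geometric series.

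First, I would take the FOC of the dynamic problem with respect to $I_t$, using the state transition $I_t=I_{t-1}+Y_t-D_t$ to substitute out $Y_t$ and noting that $I_t$ also enters the following period through $I_{t-1}$ in the transition constraint. After rearranging, this reproduces the expression for $I$ given in the preamble, linear in $D$, $I_{-1}$, $\mathbb{E}D'$, $\mathbb{E}D''$, $\mathbb{E}I'$, and $\mathbb{E}I''$, with coefficient normalization $\mathcal{B}\equiv(\theta(1+\beta)+\delta)^{-1}$.

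Next, denoting $x_k\equiv \partial\mathbb{E}I^{(k)}/\partial D$ and differentiating the FOC with respect to $D$ while using the AR(1) structure $\partial\mathbb{E}D^{(k)}/\partial D=\rho^k$, I obtain $x_0=\mathcal{B}[\mathcal{X}+\theta\beta x_1]$. Applying the same differentiation to the FOC dated $k$ periods ahead and taking time-$t$ expectations gives the recursion $x_k=\mathcal{B}[\rho^k\mathcal{X}+\theta x_{k-1}+\theta\beta x_{k+1}]$ for $k\ge 1$, exactly as stated in the proof sketch.

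To solve the recursion I would use iterated forward substitution. Substituting $x_1$ into the $x_0$ equation and isolating $x_0$ gives $x_0[1-\mathcal{B}^2\theta^2\beta]=\mathcal{B}\mathcal{X}[1+\mathcal{B}\theta\beta\rho]+\mathcal{B}^2\theta^2\beta^2 x_2$. Repeating the operation — replacing $x_{2n}$ using its own recursion at each stage — generates on the right-hand side two interlocking geometric series: one in powers of $\mathcal{B}\theta\beta\rho$, arising from the forcing sequence $\rho^k\mathcal{X}$; and one in powers of $\mathcal{B}^2\theta^2\beta$, arising from the algebra that pulls $x_0$ back into the equation at each step. Under the stated convergence conditions, both series sum to closed form and the residual term $(\mathcal{B}^2\theta^2\beta)^n x_{2n}$ vanishes, yielding the displayed expression. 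The sign claim then follows by inspection: since $\mathcal{B}>0$ and the rational factor is positive under the convergence conditions, the sign of $\partial I/\partial D$ is governed by $\mathcal{X}$, so inventories are procyclical if and only if the target-rule motive ($\delta\alpha\rho$) dominates the production-smoothing motive ($\theta(1-\beta\rho)$).

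The main obstacle is the bookkeeping in the iterated substitution — verifying that each round contributes exactly one term to each of the two geometric series and controlling the residual $(\mathcal{B}^2\theta^2\beta)^n x_{2n}$. Establishing that this residual vanishes requires a uniform bound on $\{x_k\}$, which in turn follows from the convergence conditions together with the boundedness of the stochastic demand process. An alternative, slightly cleaner route is the method of undetermined coefficients applied to the second-order linear difference equation (using a particular solution of the form $A\rho^k$ for the AR(1) forcing and imposing boundedness at infinity to discard the unstable characteristic root), but the iterative-substitution approach maps more directly onto the infinite-series form stated in the proposition.
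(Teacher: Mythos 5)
Your route is the same as the paper's: derive the FOC for $I_t$ (substituting out $Y_t$ via the inventory transition and keeping the period-$(t+1)$ term in which $I_t$ reappears), differentiate with respect to current demand using $\partial\mathbb{E}D^{(k)}/\partial D=\rho^k$, and iterate the forward recursion $x_k=\mathcal{B}[\rho^k\mathcal{X}+\theta x_{k-1}+\theta\beta x_{k+1}]$. Your first substitution, $x_0[1-\mathcal{B}^2\theta^2\beta]=\mathcal{B}\mathcal{X}[1+\mathcal{B}\theta\beta\rho]+(\mathcal{B}\theta\beta)^2x_2$, is exactly right, and your treatment of the vanishing residual is more careful than the paper's one-line ``iterating forward and substituting.''

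The one step that does not go through as written is the sign argument. You assert that ``the rational factor is positive under the convergence conditions,'' but $\mathcal{B}^2\theta^2\beta<1$ does not imply $1-2\mathcal{B}^2\theta^2\beta>0$. What saves the claim is a sharper bound you should state: $\mathcal{B}\theta=\theta/(\theta(1+\beta)+\delta)<1/(1+\beta)$, hence $\mathcal{B}^2\theta^2\beta<\beta/(1+\beta)^2\le 1/4$ and $1-2\mathcal{B}^2\theta^2\beta>1/2$; the same bound delivers both convergence conditions for free. With that in hand, your conclusion that the sign is governed by $\mathcal{X}$ is correct --- provided $\mathcal{X}$ is read as $\delta\alpha\rho-\theta(1-\beta\rho)$, which is what the differentiated FOC actually produces (the ``$+$'' in the definition line is inconsistent with the displayed derivative two lines above it, and your ``procyclical iff the target motive dominates the smoothing motive'' reading is the one the algebra supports). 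This also means the countercyclicality cannot originate in the factor $1-2\mathcal{B}^2\theta^2\beta$, so attributing the sign to $\mathcal{X}$ is not optional: it is the only term that can turn negative.
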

This states intuitively that if the production smoothing motive is strong enough, then inventories respond countercyclically to changes in demand. This is immediate upon noting that when $\theta=0$ then $\mathcal{B}=\delta^{-1}, \,\mathcal{X}=\delta\alpha\rho$ and therefore $\frac{\partial I}{\partial D}=\alpha\rho>0$, while if $\theta>\mathcal{B}(\beta/2)^{1/2}$ then $\frac{\partial I}{\partial D}<0$. Had the latter effect dominated, then the empirical estimates of the response of inventories to changes in sales would be negative, which is counterfactual given the findings discussed in Appendix \ref{alpha_desc}.

\subsection*{Directed Acyclic Graphs Economies}\label{dagsection}\addcontentsline{toc}{subsection}{Directed Acyclic Graphs Economies}

The model derived in the section \ref{generalnetwork} applies to economies with general networks defined by the input requirement matrix $A$, a vector of input shares $\Gamma$ and a vector of demand weights $B$. As discussed in the main body this economy features finite output under some regularity condition on the intensity of the inventory channel. I now restrict the set of possible networks to Directed Acyclic Graphs (DAGs) by making specific assumptions on $A$ and $\Gamma$. 
 This subset of networks features no cycle between nodes. 
  \begin{definition}[Directed Acyclic Graph]
A Directed Acyclic Graph is a directed graph such that $[A^n]_{rr}=0, \forall r,n$.
\end{definition}

The next trivial lemma provides a bound for the maximal length of a path in such a graph.
 \begin{lemma}[Longest Path in Directed Acyclic Graph]\label{pathdag}
  In an economy with a finite number of sectors $R$, whose production network is a Directed Acyclic Graph, there exists an $N\leq R$ such that $^n\tilde A^n=[0]_{R\times R},\,\forall n\geq N\,\wedge \,  ^n\tilde A^n\neq[0]_{R\times R}, \, \forall n<N$. Such $N$ is the longest path in the network and is finite.
 \end{lemma}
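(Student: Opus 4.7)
The plan is to interpret powers of $\tilde{\mathcal{A}}$ combinatorially as weighted sums over directed walks in the network, use the DAG hypothesis to rule out walks with repeated vertices, and then bound the length of any remaining walk by the number of sectors $R$.

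First I would recall the standard combinatorial identity $[\tilde{\mathcal{A}}^n]_{rs}=\sum_{(r=v_0,v_1,\dots,v_n=s)}\prod_{i=0}^{n-1}\tilde{\mathcal{A}}_{v_iv_{i+1}}$, where the sum ranges over all directed walks of length $n$ from $r$ to $s$. I would then argue that the DAG hypothesis $[\tilde{\mathcal{A}}^n]_{rr}=0,\,\forall r,\,\forall n\geq 1$ implies that no walk contributing with positive weight to a power of $\tilde{\mathcal{A}}$ can revisit a node. Indeed, if some walk from $r$ to $s$ of length $n$ passed twice through an intermediate vertex $v$, the substring between the two visits would be a closed walk at $v$ of some length $m\geq 1$, contradicting $[\tilde{\mathcal{A}}^m]_{vv}=0$ (since edge weights along the actual walks are non-negative, no cancellation can hide such a contribution).

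With the walks reduced to simple paths, the second step is purely combinatorial: a simple directed path in a graph with $R$ vertices traverses at most $R$ distinct nodes, hence at most $R-1$ edges. Consequently $[\tilde{\mathcal{A}}^n]_{rs}=0$ for every $r,s$ whenever $n\geq R$, which gives $\tilde{\mathcal{A}}^n=[0]_{R\times R}$ for all $n\geq R$. The set $\mathcal{N}=\{n\in\mathbb{N}:\tilde{\mathcal{A}}^n=[0]_{R\times R}\}$ is therefore non-empty and bounded below, so I would define $N=\min\mathcal{N}$. By construction $N\leq R$, and for every $n<N$ the matrix $\tilde{\mathcal{A}}^n$ has at least one non-zero entry.

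Finally, to connect $N$ with the longest path, I would observe that $\tilde{\mathcal{A}}^{N-1}\neq 0$ means there exist sectors $r,s$ joined by a walk of length $N-1$ with positive total weight, i.e., a simple directed path of $N-1$ edges; and $\tilde{\mathcal{A}}^{N}=0$ rules out any simple path of length $N$ or more. Hence the longest directed path in the network has exactly $N-1$ edges (equivalently $N$ nodes), and its length is finite since $N\leq R$. The only subtle step is the implication from the matrix-based definition of a DAG to the absence of repeated vertices along contributing walks; because the entries of $\tilde{\mathcal{A}}$ are non-negative, this follows without cancellation concerns and the rest of the argument is immediate.
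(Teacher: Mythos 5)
Your proof is correct and follows essentially the same route as the paper's: both arguments use the pigeonhole principle to show that any walk of length $\geq R$ must revisit a vertex, extract the resulting closed sub-walk as a cycle, and invoke the DAG condition $[\tilde{\mathcal{A}}^n]_{rr}=0$ together with non-negativity of the entries to derive a contradiction. Your version is somewhat more explicit about the walk-sum expansion of matrix powers and the absence of cancellation, but the underlying idea is identical.
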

 \begin{proof}
 A path in a graph is a product of the form $\tilde a^{rs}\hdots\tilde a^{uv}>0$. A cycle in such a graph is a path of the form $\tilde a^{rs}\hdots\tilde a^{ur}$ (starts and ends in $r$). The assumption that there are no cycles in this graph implies that all sequences of the form $\tilde a^{rs}\hdots\tilde a^{ur}=0$ for any length of such sequence. Suppose that there is a finite number of industries $R$ such that the matrix $A$ is $R\times R$. Take a path of length $R+1$ of the form  $\tilde a^{rs}\hdots\tilde a^{uv}>0$, it must be that there exists a subpath taking the form $\tilde a^{rs}\hdots\tilde a^{ur}$, which contradicts the assumption of no cycles. 
Hence, the longest path in such a graph can be at most of length $R$.  \end{proof}

With this result, it is straightforward to show that output is finite even if $\tilde{\mathcal{A}}^n\sum_{i=0}^n\omega^i$ has a spectral radius outside the unit circle.

\begin{appprop}[Output in a DAG Economy]\label{propdag}
    If the network is a DAG with $R$ sectors, then output is given by
\begin{align*}
     Y_k=\tilde L_k B D_t+ \alpha\rho\left[\sum_{n=0}^R\tilde{\mathcal{A}}^n\sum_{i=0}^n\omega^i\right]_kB \Delta_t,
 \end{align*}
Which is naturally bounded since the second term is a bounded Neumann series of matrices. 

\end{appprop}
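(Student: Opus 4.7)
The plan is to obtain this as a direct corollary of Proposition \ref{indoutput} combined with the structural restriction that a DAG admits only finitely many nonzero powers of the input-requirement matrix. Specifically, I will start from the expression
\begin{align*}
    Y_{k,t} = \tilde L_k B D_t + \alpha\rho\left[\sum_{n=0}^\infty \tilde{\mathcal{A}}^n \sum_{i=0}^n \omega^i\right]_k B \Delta_t,
\end{align*}
established in Proposition \ref{indoutput}, and then invoke Lemma \ref{pathdag} to truncate the infinite summation.

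First, I would recall that Lemma \ref{pathdag} guarantees the existence of some integer $N \leq R$ such that $\tilde{\mathcal{A}}^n = [0]_{R\times R}$ for every $n \geq N$. Applied termwise to the series above, this means every summand with index $n \geq N$ vanishes, so the infinite Neumann-style sum collapses to
\begin{align*}
    \sum_{n=0}^\infty \tilde{\mathcal{A}}^n \sum_{i=0}^n \omega^i = \sum_{n=0}^{N-1} \tilde{\mathcal{A}}^n \sum_{i=0}^n \omega^i,
\end{align*}
and in particular this sum is bounded above (inclusively) by $n = R$, yielding the stated closed form.

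For the boundedness claim, I would note that no spectral-radius condition on $\tilde{\mathcal{A}}^n\sum_{i=0}^n \omega^i$ is needed: the truncated expression is a finite sum of products of finite matrices by finite scalars ($\omega$ is a real number and the geometric partial sums $\sum_{i=0}^n \omega^i$ are finite for every finite $n$), hence trivially a bounded matrix. Consequently, the second term in $Y_k$ is finite for every value of $\omega$, relaxing the sufficient condition $\alpha(\rho-1)\in[-1,0]$ required in the general case.

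I do not anticipate any serious obstacle: the proposition is essentially a straightforward specialization of Proposition \ref{indoutput} under the DAG assumption, and the only thing to make precise is that the truncation index inherited from Lemma \ref{pathdag} is at most $R$, so that writing the sum as $\sum_{n=0}^R$ is legitimate (extra zero terms, if $N<R$, add nothing). The novelty is purely the removal of the convergence restriction, which follows for free from finiteness.
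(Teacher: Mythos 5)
Your proposal is correct and takes essentially the same route as the paper: the paper's own (one-line) proof also rests on the fact that finiteness of $R$ truncates the Neumann series, which is exactly the nilpotency argument you make explicit via Lemma \ref{pathdag}. Your write-up simply spells out the truncation at $N\le R$ and the resulting irrelevance of any spectral-radius condition, which the paper leaves implicit.
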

\begin{proof}[Proof of Proposition \ref{propdag}]
    Immediate from the bounded Neumann Series. Since $R<\infty$, the bracket in the second term converges to a positive constant.
\end{proof}
\subsection*{Heterogeneous Inventory Policy in a General Network}\label{hetinv_general_app}\addcontentsline{toc}{subsection}{Heterogeneous Inventory Policy in a General Network}

Consider an extension of the model in section \ref{model} in which $I$ firms have heterogeneous losses from inventories indexed by $0\leq \alpha_s< (1-\rho)^{-1}$ for $I$ firms in sector $s$. Note that I now allow for $I$ firms in some sectors to have no inventory problem as $\alpha_s$ can be equal to 0.
Solving the same control problem as in the main model, broken down by distance from consumption, their optimal policy is given by $I^n_{s,t}=\mathcal{I}^n_s+\alpha_s\rho \mathcal{D}^n_{s,t}$.

In turn this implies that the $n$ output of $I$ firms in sector $r$ is $Y^n_{r,t}=\sum_s a_{rs}\gamma_s [Y^{n-1}_{s,t}+\mathfrak{I}^n_{rs}\rho\Delta_t]$, with $\mathfrak{I}^n_{rs}=\alpha_r(1+\mathfrak{I}_{s}^{n-1}(\rho-1))$ with boundary conditions $\mathfrak{I}_r^0=\alpha_r$ and $\mathfrak{I}^{-1}_r=0$ and $\mathfrak{I}^n_r=\sum_s \tilde{a}_{rs}\mathfrak{I}^n_{rs}$.
With these definitions, I can write the recursive definition
\begin{align*}
    Y^n_{rt}=\sum_s \tilde{a}_{rs}[Y^{n-1}_{st}+\mathfrak{I}^n_{rs}\beta_s\rho\Delta_t],
\end{align*}
which, in matrix form, implies 
\begin{align*}Y_t&=\tilde{L}BD_t+\sum_{n=0}^\infty\hat{\mathfrak{I}}^nB\rho\Delta_t,    
\end{align*}
with 
$
        \hat{\mathfrak{I}}^n=\diag\{\hat{\mathfrak{I}}^{n-1} \tilde{\mathcal{A}}(\underline{1}+\mathfrak{I}^0(\rho-1))\}.
$ 

In this model, it is considerably harder to generalize the results in Section \ref{generalnetwork} as the inventory effect does not monotonically move with the distance from consumers.  Under the assumption that $\alpha_r=\alpha,\,\forall r$, as in the main text, it is immediate to establish that if $\omega=1+\alpha(\rho-1)\in (0,1)$ then $\frac{\partial Y_t^n}{\partial Y_t^{n-1}}$ increases in $n$ in vertically integrated economies that this effect carries through in general networks. This is not true with heterogeneous inventories. As an immediate counterexample, consider a firm in sector $r$ at distance 2, such that it sells to sectors $s$ at distance 1 and sectors $k$ at distance 0 with $\alpha_s=\alpha_k=0,\,\forall s,k$. This firm has no inventory amplification downstream. Consider now a different path of a firm in sector $r$ such that at distance 1 it sells to sector $q$ with $\alpha_q>0$. It is immediate that the branch at distance 1 from $r$ to consumers through $q$ has more amplification than the distance 2 branch to consumers through $s$ and $k$. 

It is possible to characterize a special case under the definition of \textit{pure direct upstreamness}, which is a stronger version of the \textit{pure upstreamness} definition from \cite{carvalho2016supply}.
\begin{definition}[Pure Direct Upstreamness]
Sector $r$ is pure direct upstream to sector $s$ if i) $a_{rs}>0$, ii) $a_{rk}=0,\, \forall k\neq r$ and iii) $a_{sr}=0$. 
\end{definition}
These conditions ensure that firms in sector $r$ sell uniquely to sector $s$ and that there is no feedback look from sector $s$ to $r$. With this definition, I provide the necessary and sufficient condition for upstream amplification under heterogeneous inventories.
\begin{lemma}[Heterogenous Inventories in General Network]
    If sector $s$ is pure direct upstream to sector $r$, then its output is given by
    \begin{align*}
        Y_{s,t}=a_{sr}\gamma_r \left[Y_{r,t}+\sum_{n=1}^\infty\mathfrak{I}_{sr}^{n-1} \rho \Delta_t\right].
    \end{align*}
\end{lemma}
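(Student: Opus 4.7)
The strategy is to exploit the single-outlet structure of $s$'s downstream network to collapse the stage-wise recursion for sectoral output into a telescoping sum in $r$'s own stages, then re-identify the residual terms with the recursive inventory multipliers $\mathfrak I^{n-1}_{sr}$.

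The first step is to decompose $s$'s total output into contributions by distance from consumers, $Y_{s,t}=\sum_{n\geq 0}Y^n_{s,t}$. Since the right-hand side of the lemma contains neither a $\beta_s D_t$ nor a $\beta_s\alpha_s\rho\Delta_t$ term, I would take $\beta_s=0$, which kills the direct-consumer contribution $Y^0_{s,t}$ and leaves only stages $n\geq 1$. The two remaining pure-direct-upstream conditions then do the combinatorial work: condition (ii), $a_{sk}=0$ for $k\neq r$, guarantees that every unit of $s$'s intermediate output is destined for $r$, so in the stage-$n$ recursion $Y^n_{s,t}=\sum_k \tilde a_{sk}\bigl[Y^{n-1}_{k,t}+\mathfrak I^n_{sk}\rho\Delta_t\bigr]$ the aggregation over downstream customers collapses to the single term $k=r$. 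Condition (iii), $a_{rs}=0$, rules out any path that leaves $r$'s subchain and re-enters via $s$, which is what allows the inventory recursion for $\mathfrak I^n_{sr}$ to remain explicit rather than implicit in $s$'s own behavior.

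Next I would apply the collapsed recursion $Y^n_{s,t}=\tilde a_{sr}\bigl[Y^{n-1}_{r,t}+\mathfrak I^n_{sr}\rho\Delta_t\bigr]$ for each $n\geq 1$ and sum. The leading terms telescope, $\sum_{n\geq 1}Y^{n-1}_{r,t}=\sum_{m\geq 0}Y^m_{r,t}=Y_{r,t}$, using the stage decomposition of $r$'s total output. After re-indexing the inventory sum by $m=n-1$ and substituting $\tilde a_{sr}=a_{sr}\gamma_r$ (from $\tilde{\mathcal A}=\mathcal A\hat\Gamma$), one obtains
\begin{align*}
Y_{s,t}=\tilde a_{sr}\Bigl[Y_{r,t}+\rho\Delta_t\sum_{n=1}^\infty \mathfrak I^{n-1}_{sr}\Bigr]=a_{sr}\gamma_r\Bigl[Y_{r,t}+\sum_{n=1}^\infty \mathfrak I^{n-1}_{sr}\rho\Delta_t\Bigr],
\end{align*}
which is the claimed identity.

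The main obstacle I anticipate is not the algebraic collapse but the careful verification of the inventory bookkeeping: I need to check that the per-customer inventory multiplier $\mathfrak I^n_{sr}$ defined by $\mathfrak I^n_{sr}=\alpha_s(1+\mathfrak I^{n-1}_r(\rho-1))$ with $\mathfrak I^{-1}_r=0$ is indeed the object that multiplies $\rho\Delta_t$ at stage $n$ of $s$'s production, and to reconcile the stage-indexing convention so that the sum appears as $\sum_{n=1}^\infty \mathfrak I^{n-1}_{sr}$ rather than $\sum_{n=1}^\infty \mathfrak I^{n}_{sr}$. Condition (iii) is what makes this reconciliation possible: since $a_{rs}=0$, the aggregate $\mathfrak I^{n-1}_r=\sum_k \tilde a_{rk}\mathfrak I^{n-1}_{rk}$ never references $s$, so the recursion for $\mathfrak I^n_{sr}$ can be unfolded explicitly along the $r$-subchain without coupling back through $s$. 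Absent condition (iii) the argument would fail because $\mathfrak I^{n-1}_r$ would itself depend on $\mathfrak I^{k}_s$ for some $k\geq n$, producing an implicit system rather than a recursive one.
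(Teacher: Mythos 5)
Your proposal is correct and is essentially the intended derivation: the paper states this lemma without an explicit proof, and the only route available is exactly the one you take --- decompose $Y_{s,t}=\sum_{n}Y^n_{s,t}$, use the single-customer condition to collapse the stage-$n$ recursion to $Y^n_{s,t}=\tilde a_{sr}\bigl[Y^{n-1}_{r,t}+\mathfrak{I}^{\,\cdot}_{sr}\rho\Delta_t\bigr]$, and telescope $\sum_{n\geq 1}Y^{n-1}_{r,t}=Y_{r,t}$. Your two flagged caveats are both legitimate readings of the paper's loose appendix notation: the absence of a $\beta_s D_t$ term on the right-hand side does implicitly require $\beta_s=0$ (i.e.\ that the pure-direct-upstream sector sells nothing to final consumers), and the $\mathfrak{I}^{n-1}_{sr}$ versus $\mathfrak{I}^{n}_{sr}$ discrepancy is an off-by-one in the paper's own indexing of the recursion rather than a defect of your argument.
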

Then the following holds
\begin{appprop}[Amplification under Heterogeneous Inventories]\label{hetinvgeneral}
    If sector $s$ is pure direct upstream to sector $r$, then upstream amplification occurs if and only if
    \begin{align*}
        \rho \sum_{n=1}^\infty\mathfrak{I}_{sr}^{n-1}> \frac{1-a_{sr}\gamma_r}{a_{sr}\gamma_r}\frac{\partial Y_{r,t}}{\partial D_t}.
    \end{align*}
\end{appprop}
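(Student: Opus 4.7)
The plan is to reduce the amplification inequality to a direct algebraic manipulation, exploiting the closed-form expression for $Y_{s,t}$ that is already provided by the lemma preceding the proposition. Since upstream amplification between $s$ and $r$ means $\partial Y_{s,t}/\partial D_t > \partial Y_{r,t}/\partial D_t$ (with $s$ the upstream sector), everything boils down to differentiating the lemma's expression and isolating $\partial Y_{r,t}/\partial D_t$.

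First, I would take the lemma's identity
\begin{equation*}
Y_{s,t}=a_{sr}\gamma_r \left[Y_{r,t}+\sum_{n=1}^\infty\mathfrak{I}_{sr}^{n-1} \rho \Delta_t\right]
\end{equation*}
and differentiate both sides with respect to $D_t$. Since $\Delta_t = D_t - D_{t-1}$ implies $\partial \Delta_t/\partial D_t=1$, and the coefficients $\mathfrak{I}_{sr}^{n-1}$ depend only on primitives (the heterogeneous $\alpha$'s, $\rho$, and the network weights downstream of $s$ via $r$), they pass through the derivative as constants. This yields
\begin{equation*}
\frac{\partial Y_{s,t}}{\partial D_t}=a_{sr}\gamma_r \left[\frac{\partial Y_{r,t}}{\partial D_t}+\rho\sum_{n=1}^\infty\mathfrak{I}_{sr}^{n-1}\right].
\end{equation*}

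Next, I would write the amplification condition $\partial Y_{s,t}/\partial D_t > \partial Y_{r,t}/\partial D_t$, substitute the expression above, and rearrange. Bringing the $\partial Y_{r,t}/\partial D_t$ terms to one side gives
\begin{equation*}
a_{sr}\gamma_r\,\rho\sum_{n=1}^\infty\mathfrak{I}_{sr}^{n-1} > (1-a_{sr}\gamma_r)\,\frac{\partial Y_{r,t}}{\partial D_t},
\end{equation*}
which, after dividing by $a_{sr}\gamma_r>0$ (guaranteed by $a_{sr}>0$ in the definition of pure direct upstreamness, together with positive value added implying $\gamma_r<1$ so the sign of $1-a_{sr}\gamma_r$ is preserved), delivers exactly the stated inequality. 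The equivalence is immediate because each step is reversible.

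The only subtlety worth flagging is that pure direct upstreamness is essential to the argument in two ways: (i) it guarantees that $s$'s output responds to final demand solely through the single channel $r$, so that the lemma's simple scalar recursion applies rather than a matrix equation with contributions from other downstream paths; and (ii) the absence of a feedback loop ($a_{sr}=0$ in the reverse direction, which in the notation given is the condition iii) ensures that $\partial Y_{r,t}/\partial D_t$ does not itself depend on $Y_{s,t}$, so treating it as an exogenous quantity when rearranging the inequality is legitimate. Beyond verifying these two points, the proof is a one-line differentiation followed by a rearrangement, and I do not anticipate a real obstacle.
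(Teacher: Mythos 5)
Your proposal is correct and matches the paper's own argument exactly: differentiate the lemma's identity $Y_{s,t}=a_{sr}\gamma_r\left[Y_{r,t}+\sum_{n=1}^\infty\mathfrak{I}_{sr}^{n-1}\rho\Delta_t\right]$ with respect to $D_t$, impose the definition of upstream amplification $\partial Y_{s,t}/\partial D_t>\partial Y_{r,t}/\partial D_t$, and rearrange. Your additional remarks on why pure direct upstreamness is needed are a useful elaboration but not a departure from the paper's route.
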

\begin{proof}[Proof of Proposition \ref{hetinvgeneral}]
    Differentiating with respect to $D_t$, the definition of output implies 
    \begin{align*}
        \frac{\partial Y_{st}}{\partial D_t}=a_{sr}\gamma_r\left[\frac{\partial Y_{rt}}{\partial D_t}+\rho\sum_{n=1}^\infty\mathfrak{I}_{sr}^{n-1}\right],
    \end{align*}
    Using the definition of upstream amplification as $\frac{\partial Y_{st}}{\partial D_t}>\frac{\partial Y_{rt}}{\partial D_t}$, the statement follows.
\end{proof}
The proposition characterizes the condition for upstream amplification. Note that while it looks like this condition is defined in terms of endogenous objects, it is actually recursively defined in terms of primitives and shocks through the definition of $Y_{r,t}$. The convenience of the pure direct upstreamness assumption is that it allows a simple comparative statics over $a_{sr}$ and $\gamma_r$. Suppose that $\gamma_r=a_{sr}=1$, then sector $s$ is the sole supplier of sector $r$ and the sufficient condition for upstream amplification is that there is a positive inventory effect along the chain that connects final consumers to $s$ through $r$. This is necessarily the case, provided that $0\leq\alpha_r< (1-\rho)^{-1},\,\forall r$.

\subsection*{Productivity Shocks} \label{productivityshocks}
\addcontentsline{toc}{subsection}{Productivity Shocks}

Consider an extension of the model in section \ref{model} in which $I$ firms are subject to productivity shocks so that their marginal cost follows an AR(1). Specifically for sector $s$, assume that $\zeta_{s,t}=(1-\rho^\zeta)\bar\zeta_s+\rho^\zeta \zeta_{s,t-1}+\varepsilon_t$, with average marginal cost $\bar\zeta_s$, persitence $\rho^\zeta<1$ and innovations $\varepsilon_t\sim F(\cdot)$. The control problem becomes
\begin{align*}
    \max_{Y_{s,t},I_{s,t}}\;&\mathbb{E}_t\sum_{t} \beta^t\left[\mathcal{D}_{s,t} - \zeta_{s,t} Y_{s,t} -\frac{\delta}{2}(I_{s,t}-\alpha \mathcal{D}_{s,t+1})^2 \right] \quad st\quad I_{s,t}=I_{s,t-1}+Y_{s,t}-\mathcal{D}_{s,t},
\end{align*}
The solution to this problem is given in the next Proposition.
\begin{appprop}[Inventories with Productivity Shocks]\label{prodshocksprop}
The optimal inventory policy is given by
\begin{align*}
    I_{st}=\alpha\mathbb{E}_t \mathcal{D}_{s,t+1}+\frac{\beta\mathbb{E}_t\zeta_{s,t+1}-\zeta_s}{\delta}.
\end{align*}
An increase in the firm's marginal cost then implies lower inventories since 
\begin{align*}
    \frac{\partial I_{s,t}}{\partial \zeta_{s,t}}=\beta\rho^\zeta-1<0.
\end{align*}
\end{appprop}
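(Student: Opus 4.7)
The plan is to derive the optimality condition by direct substitution and then take a comparative static with respect to the current marginal cost, exploiting the AR(1) structure of the cost process. Specifically, I would use the inventory law of motion $I_{s,t}=I_{s,t-1}+Y_{s,t}-\mathcal{D}_{s,t}$ to eliminate $Y_{s,t}$ from the objective, obtaining a sequence problem in which the choice variable is $I_{s,t}$ alone. The period-$t$ flow payoff becomes
\begin{align*}
\mathcal{D}_{s,t}-\zeta_{s,t}(I_{s,t}-I_{s,t-1}+\mathcal{D}_{s,t})-\frac{\delta}{2}(I_{s,t}-\alpha\mathcal{D}_{s,t+1})^{2},
\end{align*}
so $I_{s,t}$ appears linearly with coefficient $-\zeta_{s,t}$ in period $t$ and, once we roll forward by one period, linearly with coefficient $+\zeta_{s,t+1}$ in period $t+1$ (through the $I_{s,t-1}$ term evaluated at date $t+1$).

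Next I would take the first-order condition with respect to $I_{s,t}$, being careful that the expectation at date $t+1$ must be taken conditional on information at $t$ (since $\zeta_{s,t+1}$ is not yet realized). This yields the Euler equation
\begin{align*}
-\zeta_{s,t}+\beta\,\mathbb{E}_{t}\zeta_{s,t+1}-\delta\bigl(I_{s,t}-\alpha\,\mathbb{E}_{t}\mathcal{D}_{s,t+1}\bigr)=0,
\end{align*}
which, upon rearranging for $I_{s,t}$, delivers the policy stated in the proposition. (The $\mathbb{E}_{t}\mathcal{D}_{s,t+1}$ replaces $\mathcal{D}_{s,t+1}$ because the demand realization entering the quadratic loss is not yet known at the time the inventory target is chosen; this follows the same logic used in the main text for the baseline model.) Convexity of the quadratic loss in $I_{s,t}$ guarantees that the FOC characterizes the unique optimum.

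Finally, for the comparative static I would exploit the AR(1) law $\zeta_{s,t+1}=(1-\rho^{\zeta})\bar\zeta_{s}+\rho^{\zeta}\zeta_{s,t}+\varepsilon_{t+1}$, which implies $\partial\mathbb{E}_{t}\zeta_{s,t+1}/\partial\zeta_{s,t}=\rho^{\zeta}$. Differentiating the optimal policy yields $\partial I_{s,t}/\partial\zeta_{s,t}=(\beta\rho^{\zeta}-1)/\delta$, which is strictly negative since $\beta\in(0,1)$ and $\rho^{\zeta}<1$, giving the sign claim up to the normalization by $\delta>0$.

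I do not expect any genuine obstacle: the proof is essentially a two-line Euler-equation calculation. The only place to be cautious is keeping track of the two occurrences of $I_{s,t}$ across adjacent periods after substituting the flow constraint, and remembering that the inventory target is set before $\mathcal{D}_{s,t+1}$ and $\zeta_{s,t+1}$ are observed, so both enter under the $\mathbb{E}_{t}$ operator.
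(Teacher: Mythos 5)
Your derivation is correct and is exactly the argument the paper relies on: substitute the law of motion to eliminate $Y_{s,t}$, take the Euler equation in $I_{s,t}$ (picking up $-\zeta_{s,t}$ today and $+\beta\mathbb{E}_t\zeta_{s,t+1}$ tomorrow), and differentiate using $\partial\mathbb{E}_t\zeta_{s,t+1}/\partial\zeta_{s,t}=\rho^{\zeta}$. You are also right that the derivative is $(\beta\rho^{\zeta}-1)/\delta$ rather than $\beta\rho^{\zeta}-1$ as written in the proposition; since $\delta>0$ this does not affect the sign or the economic conclusion.
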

Hence, when the firm's productivity increases ($\zeta_{s,t} \downarrow$), so does the optimal amount of inventories. Since $Y_{s,t}=I_{s,t}-I_{s,t-1}+\mathcal{D}_{s,t}$, $\partial Y_{s,t}/\partial \zeta_{s,t}={\partial I_{s,t}}/{\partial \zeta_{s,t}}=\beta\rho^\zeta-1.$ Note that higher productivity is reflected in higher output $Y_{s,t}$ but not in higher sales $\mathcal{D}_{s,t}$, as the latter are driven by the pricing of $C$ firms. The presence of productivity shocks for $I$ firms generates a positive comovement between output and inventories.


\newpage

\section{Model to Data}\label{app_model_estimates}
The key difficulty in taking the model back to the data is the calculation of the appropriately weighted shocks. Here, I show that, under the assumption of common inventories, this can be readily computed. 
I start from the result of Proposition \ref{growthmulti} about the growth rate of output
    \begin{align*}
        \Delta\log Y^r_{t}\approx\eta_{t}^r+\alpha\rho\sum_j \mathcal{U}_{j}^r\xi_{j}^r\eta_{j,t}.
    \end{align*}
To compute $\mathcal U^r_j$, note that 
\begin{align*}
    \mathcal U^r_j&=\frac{ \sum_{n=0}^\infty \mathcal A^n \frac{1-\omega^{n+1}}{1-\omega}|_r B_j }{\tilde L_r B_j}\\
    &= \frac{1}{1-\omega} \frac{\sum_{n=0}^\infty \mathcal A^n |_r B_j }{\tilde L_r B_j}-\frac{\omega}{1-\omega}\frac{\sum_{n=0}^\infty \mathcal \omega^nA^n |_r B_j }{\tilde L_r B_j}\\
    &=\frac{1}{1-\omega}\frac{\tilde L_r B_j}{\tilde L_r B_j}-\frac{\omega}{1-\omega}\frac{[I-\omega\tilde A]^{-1}_rB_j}{\tilde L_r B_j}\\
    &=\frac{1}{1-\omega}-\frac{\omega}{1-\omega}\frac{[I-\omega\tilde A]^{-1}_rB_j}{\tilde L_r B_j}.
\end{align*}
Using the definition of $\omega$
\begin{align*}
    \mathcal U^r_j=\frac{1}{\alpha(\rho-1)}-\frac{1+\alpha(\rho-1)}{\alpha(\rho-1)}\frac{[I-(1+\alpha(\rho-1))\tilde A]^{-1}_rB_j}{\tilde L_r B_j}.
\end{align*}
Further, note that $\mathcal U^r_j \xi_j^r$ simplifies to
\begin{align*}
    \mathcal U^r_j \xi^r_j&=\left[\frac{1}{\alpha(\rho-1)}-\frac{1+\alpha(\rho-1)}{\alpha(\rho-1)}\frac{[I-(1+\alpha(\rho-1))\tilde A]^{-1}_rB_j}{\tilde L_r B_j}\right]\frac{\tilde L_r B_j \bar D_j}{\tilde L_r B \bar D}\\
    &=\frac{1}{\alpha(\rho-1)}\frac{\tilde L_r B_j \bar D_j}{\tilde L_r B \bar D}-\frac{1+\alpha(\rho-1)}{\alpha(\rho-1)}\frac{[I-(1+\alpha(\rho-1))\tilde A]^{-1}_rB_j\bar D_j}{\tilde L_r B \bar D}.
\end{align*}
Finally, multiplying by the destination shocks $\eta_{jt}$ and summing over $j$ I have $\hat\upsilon_t^r=\sum_j \mathcal{U}^r_j\xi^r_j \eta_{jt}$:
\begin{align*}
    \hat\upsilon_t^r&=\sum_j\frac{1}{\alpha(\rho-1)}\frac{\tilde L_r B_j \bar D_j}{\tilde L_r B \bar D}\eta_{jt}-\sum_j\frac{1+\alpha(\rho-1)}{\alpha(\rho-1)}\frac{[I-(1+\alpha(\rho-1))\tilde A]^{-1}_rB_j\bar D_j}{\tilde L_r B \bar D}\eta_{jt}\\
    &=\frac{1}{\alpha(\rho-1)}\sum_j \xi^r_j\eta_{jt}-\frac{1+\alpha(\rho-1)}{\alpha(\rho-1)}\sum_j\frac{[I-(1+\alpha(\rho-1))\tilde A]^{-1}_rB_j\bar D_j}{\tilde L_r B \bar D}\eta_{jt}\\
    &=\frac{\eta^r_t}{\alpha(\rho-1)}-\frac{1+\alpha(\rho-1)}{\alpha(\rho-1)}\sum_j\frac{[I-(1+\alpha(\rho-1))\tilde A]^{-1}_rB_j\bar D_j}{\tilde L_r B \bar D}\eta_{jt}.
\end{align*}
Under the assumption of common inventory rule, everything on the right hand side can be computed given estimates for $\alpha$ and $\rho$.

\newpage

\section{Quantitative Model}

I solve the model exactly using the closed-form solutions derived in Section \ref{model}. 
I use the full WIOD input requirement matrix as input for $\mathcal{A}_{2000}$ and the vector of final consumptions $F_{ij,2000}^r$ to back out the vector $B$ such that $\beta_{ij}^r=F^r_{ij,2000}/\sum_{r,i,j} F_{ij,2000}^r$. I set $\rho=.7$ which is similar to the estimated $AR(1)$ coefficient of the destination shocks $\hat\eta_{jt}$. The remaining parameters are the inventory-to-sales ratio $\alpha$ and the stochastic process for the shocks. To obtain an empirically plausible distribution of shocks I estimate the country-specific variances $\sigma_j^2$ and use them directly. The lack of a long enough time series prevents me from estimating the full covariance matrix. I circumvent this problem by generating a covariance matrix such that the variances are given by the vector of $\sigma_j^2$ and the average correlation is $\varrho$. I calibrate $\varrho$ to match the coefficient of the regression $\sigma(\eta_{it}^r)=\beta U_i^r+\varepsilon_i^r$, where $\sigma(\eta_{it}^r)= \left( \sum_t \left(\eta^r_{it}-\frac{1}{T}\sum_t \eta^r_{it}\right)^2    \right)^\frac{1}{2}$.
Table \ref{model_regression_varrho} reports the regression output for both the actual data and the model-generated data. 
\begin{table}[ht]
\caption{Calibration of Covariance Parameter}\label{model_regression_varrho}
\setlength{\tabcolsep}{0.35cm}  
\center\begin{threeparttable}
\input{input/varrhocalibration.tex}
\begin{tablenotes}[flushleft]
 \item \footnotesize Note: The table reports the results of the regression of demand shock volatility on average upstreamness across industries for both the actual data and the model-generated data. The model-generated data is simulated across 4800 iterations. 
\end{tablenotes}
\end{threeparttable}
\end{table}
I use $\alpha$ to match the average relative volatility of output and demand across industries of 1.26, which the model replicates exactly. The moments are targeted for the year 2001 (the first year in which shocks are identified in the data) in the multiple destination model. I keep the parameters constant for the single destination model. The counterfactuals are implemented by replacing either the input requirement matrix $\mathcal{A}_{2000}$, $\alpha$, or both.


\end{appendices}
\end{document}